\newtheorem{theorem}{Theorem}[section]
\newtheorem{corollary}[theorem]{Corollary}
\newtheorem{lemma}[theorem]{Lemma}
\newtheorem{proposition}[theorem]{Proposition}
\newtheorem{definition}[theorem]{Definition}
\newtheorem{remark}[theorem]{Remark}
\newtheorem{claim}[theorem]{Claim}
\renewcommand*\env@matrix[1][*\c@MaxMatrixCols c]{%
  \hskip -\arraycolsep
  \let\@ifnextchar\new@ifnextchar
  \array{#1}}
\DeclareMathOperator{\Ima}{Im}
\DeclareMathOperator*{\argmax}{argmax}
\newcommand{\cupdot}{\mathbin{\mathaccent\cdot\cup}}
\begin{document}


\title {The Communication Complexity of Payment Computation} 
\author{Shahar Dobzinski \and Shiri Ron\thanks{Weizmann Institute of Science. Emails: {\tt \{shahar.dobzinski, shiriron\}@weizmann.ac.il}. Work supported by BSF grant 2016192 and ISF grant 2185/19.}}

\maketitle

\begin{abstract}
	Let $(f,P)$ be an incentive compatible mechanism where $f$ is the social choice function and $P$ is the payment function. In many important settings, $f$ uniquely determines $P$ (up to a constant) and therefore a common approach is to focus on the design of $f$ and neglect the role of the payment function.
	
	Fadel and Segal [JET, 2009] question this approach by taking the lenses of communication complexity: can it be that the communication complexity of an incentive compatible mechanism that implements $f$ (that is, computes both the output and the payments) is much larger than the communication complexity of computing the output? I.e., can it be that $cc_{IC}(f)>>cc(f)$?
	
	Fadel and Segal show that for every $f$, $cc_{IC}(f)\leq exp(cc(f))$. They also show that fully computing the incentive compatible mechanism is strictly harder than computing only the output: there exists a social choice function $f$ such that $cc_{IC}(f)=cc(f)+1$. In a follow-up work, Babaioff, Blumrosen, Naor, and Schapira [EC'08] provide a social choice function $f$ such that $cc_{IC}(f)=\Theta(n\cdot cc(f))$, where $n$ is the number of players. The question of whether the exponential upper bound of Fadel and Segal is tight remained wide open.
	
	In this paper we solve this question by explicitly providing an $f$ such that $cc_{IC}(f)= exp(cc(f))$. In fact, we establish this via two very different proofs. 
	
	In contrast, we show that if the players are risk-neutral and we can compromise on a randomized truthful-in-expectation implementation (and not on deterministic ex-post implementation) gives that $cc_{TIE}(f)=poly(n,cc(f))$ for every function $f$, as long as the domain of $f$ is single parameter or a convex multi-parameter domain. We also provide efficient algorithms for deterministic computation of payments in several important domains. 
\end{abstract} 
\thispagestyle{empty}

%

%
%

\section{Introduction}   

In a mechanism design problem we have $n$ players and a set $\mathcal{A}$ of alternatives. Each player $i$ has a valuation function $v_i:\mathcal{A}\rightarrow \mathbb R$ that specifies his value for each alternative. We assume that each $v_i$ belongs to some known set $V_i$. A basic question in mechanism design asks: given a social choice function $f\colon V_1\times\cdots\times V_n  \rightarrow \mathcal{A}$, are there payment functions $P_1,\ldots, P_n\colon V_1\times \cdots \times V_n\rightarrow \mathbb R$ that make $f$ incentive compatible? For now, we interpret incentive compatibility as truthful, ex-post implementation of $f$, that is: $P_1,\ldots,P_n$ satisfy that for every player $i$, $v_i,v'_i\in V_i$, and $v_{-i}$ that specifies the values of the other players, $v_i(f(v_i,v_{-i}))-P_i(v_i,v_{-i}) \geq v_i(f(v'_i,v_{-i}))-P_i(v'_i,v_{-i})$.

A highly successful paradigm in mechanism design is the ``prices do not matter'' paradigm. One pillar of this approach are various characterization theorems that provide relatively simple conditions for the implementability of social choice functions. Examples for such conditions include cycle monotonicity for all functions \cite{rochet1987necessary}, monotonicity for functions in ``single parameter'' domains \cite{archer2001truthful,myerson1981optimal}, and weak monotonicity for ``rich enough'' multi-parameter domains \cite{bikhchandani}. Another pillar are ``uniqueness of payments'' or ``revenue equivalence'' theorems. Those theorems state that in most domains if $P_1,\ldots, P_n$ and $P'_1, \ldots, P'_n$ are two possible payment functions for $f$, then for each $i$ and $v_{-i}$ there exists a constant $c$ such that $P_i(\cdot , v_{-i})-P'_i(\cdot , v_{-i})=c$ (see, e.g., \cite{noamchapter}). The combination of the two pillars justifies the focus on the social choice function: given $f$, one can easily determine whether it is implementable, and if so, the prices are (almost) unique.

Fadel and Segal \cite{fs} were the first to make the important observation that this paradigm breaks when computational considerations are taken into account. In other words, if computing the alternative chosen by $f$ is computationally ``easy'', can it be that determining how much each player has to pay is much harder? Note that the characterization theorems discussed above guarantee the existence of ``good'' payment functions, but they do not guarantee an efficient way to actually compute the prices.

Specifically, Fadel and Segal consider an implementable social choice function $f$ with communication complexity $cc(f)$. 
Denote by $cc_{IC}(f)$ the communication complexity of the implementation of $f$. The implementation of $f$ must output both the chosen alternative and the payments, so clearly $cc_{IC}(f)\geq cc(f)$. But can it be that computing the prices $P_1(\cdot), \ldots, P_n(\cdot)$ makes the computational task much harder, that is $cc_{IC}(f)>>cc(f)$?

Fadel and Segal showed that the gap is at most exponential: $cc_{IC}(f)\leq 2^{cc(f)}-1$. They also showed that the inequality $cc_{IC}(f)\geq cc(f)$ is strict by providing a specific $f$ for which $cc_{IC}(f)=cc(f)+1$. Although they were able to show that for bayesian implementations the gap indeed might be exponential, determining whether it is exponential for the basic setting of ex-post implementations was left as their main open question.

Babaioff, Blumrosen, Naor, and Schapira \cite{schapira} managed to narrow the gap and prove that for every $n$, there exists a function $f$ for $n$ players for which $cc_{IC}(f)\geq n\cdot cc(f)$.\footnote{In fact they write: ``We stress that achieving a better lower bound than the linear lower bound shown in this paper may be hard. The communication cost is known to be at most linear (in the number of players) for welfare-maximization objectives and in single-parameter domains (in FS)''. However, their interpretation of the results of Fadel and Segal is not accurate, since as mentioned by Fadel and Segal, their results assume that the type space is sufficiently small. As will be discussed later, we will be able to improve over this lower bound both for welfare maximization and for single parameter domains.} They also provided several single-parameter domains for which the gap is small: for every $f$ in these domains, $cc_{IC}(f)=O(cc(f))$.

\subsection*{Our Results I: Impossibilities}

Our first main result (Section \ref{lowerboundsec}) answers the open questions of \cite{schapira,fs} by showing that computing the payments might be significantly harder than computing the output:

\vspace{0.1in} \noindent \textbf{Theorem: } For every $k$, there exists a function $f$ for two players (or more, by adding players that do not affect the outcome) for which $cc(f)=O(k)$ and $cc_{IC}(f)=exp(k)$. Therefore, $cc_{IC}(f)=\exp(cc(f))$.\footnote{We stress that we show all truthful mechanisms for $f$ require at least $\exp(cc(f))$ bits, whereas the linear lower bound of \cite{schapira} applies only to the the normalized mechanism of $f$.}

\vspace{0.1in} \noindent In fact, the function $f$ that we provide is simple enough in the sense that it is single parameter. We note that a similar result was obtained concurrently and independently by \cite{RSTWZ20} (but their function is not single parameter). Roughly speaking, we construct a function $f:V_A
\times V_B\rightarrow \mathcal{A}$, where $\mathcal{A}=\{a_0,\ldots, a_{2^k}\}$. The domain of valuations is single parameter, and Alice's private information is $r_A\in [0,2^{k+1}-1]$. For each alternative $a_i\in \mathcal A$, let $w_A(a_i)=|\mathcal{A}|^{4ik}-1$. Alice's value for alternative $a_i$ is $r_A\cdot w_A(a_i)$. Bob is also a single parameter player but his valuation takes a simpler form: he is indifferent to the alternative chosen and his private information $r_B$ is also his value of each alternative. However, the number of possible values that $r_B$ can take is \emph{doubly} exponential in $k$. The function $f$ itself is defined by some arbitrary map that takes the possible values $r_B$ and projects each one to a different partitioning of the possible values of Alice to the $|\mathcal A|$ alternatives, making sure that each such partitioning is monotone: if $r,r'$ are two values such that $r\geq r'$, then $r$ is not mapped to a lower alternative than $r'$. The function $f$ takes the value of Alice and outputs the alternative that it belongs to according to the monotone map that is determined by Bob's value.

Computing $f$ is easy: Alice can send her private information $r_A$ ($k+1$ bits) and Bob can then compute the output of $f$ and announce it ($k+1$ bits). How about computing the payments? Bob is always indifferent to the chosen alternative, so his payment is always $0$. Computing the payment of Alice is a bit more subtle. Recall that by Myerson's formula the payment of Alice for an alternative $a$ is given by $P_A(v_A,v_B)=r_A\cdot w_i(f(v_A,v_B))-\int_0^{r_A}w_i(f(z\cdot w_i,v_B))dz$, where $v_B$ is Bob's valuation. Thus, the problem of computing the payments reduces to computing the integral in the formula. The crux of the proof is showing that even if we know that the outcome is $a_{2^k}$, the value of the integral is different for each map (this is why Alice's value for an alternative is obtained by multiplying her private information $r_A$ by a large number). Since each $r_B$ of Bob defines a different map and hence a different payment, the number of distinct prices for the alternative $a_{2^k}$ is doubly exponential. Standard arguments imply that at least $2^k$ bits are required to specify the payments, which completes the proof.

We note that $f$ is not a very natural, but we can build on it to show that sometimes even welfare maximization can be hard: there exists a multi-unit auction such that computing the welfare maximizing solution requires $k$ bits, but computing the payments requires $exp(k)$ bits. This result has one additional interesting feature: it provides an example of an auction domain where the approximation ratio to the social welfare achievable by non-truthful algorithms that use polynomial communication (in our case, the approximation ratio is $1$) is strictly better than the approximation ratio that can be achieved by truthful mechanisms that use only polynomial communication (in our case, we show that exponential communication is needed for a truthful mechanism to achieve an approximation ratio of $1$). This is only the second such example, following \cite{assadi2020separating} (other separations exist but in non-auction domains or in auctions with restrictions). Unlike all previous separations, in which the hardness is based on the hardness of computing the allocation, here computing the allocation is easy so the hardness stems from the additional overhead of computing the prices.

Quite remarkably, the function $f$ demonstrates that even if computing the output requires only $k$ bits, the number of possible payments in the truthful implementation might be as large as $exp(exp(k))$. In fact, one can see that the possible number of distinct prices was the decisive factor in determining the communication complexity of a mechanism for $f$. This is no coincidence. Denote by $P_f$ the maximum possible payments for a single alternative that any player might face. Then, the communication complexity of truthfully implementing an implementable function $f$ \emph{for two players} can be determined up to a constant multiplicative factor:
$$
\frac {cc(f)+\log P_f} 2 \leq cc_{IC}(f) \leq cc(f)+2\log P_f 
$$
The left inequality holds since obviously $cc_{IC}(f)\geq cc(f)$ and since $cc_{IC}(f)\geq \log P_f$, because $\log P_f$ bits are needed to specify which price the player has to pay out of the possible $P_f$ prices. The right inequality holds since we can use $cc(f)$ bits to compute the output of $f$, and then each of the two players uses (at most) $\log P_f$ bits to specify the price of the other player (recall that by the taxation principle, the price of an alternative for a player depends only on the valuations of the other players).

We thus have that for any two-player function $cc_{IC}(f)=poly(cc(f),\log P_f)$. Note that this characterization is tight in the sense that it is easy to come up with examples where $cc_{IC}(f)>> \log P_f$, and as discussed above $\log P_f$ is also necessary for characterizing the communication complexity. 

When there are three players or more, this (or similar) characterization no longer holds. In fact we can show a function $f$ with a payment scheme $P$ such that $\log|\Ima P|= poly(cc(f))$, but computing $P$ (or every other payment scheme that implements $f$) is significantly harder than computing $f$ alone.

\vspace{0.1in} \noindent \textbf{Theorem: } For every $k$, there exists a function $f$ for three players (or more, by adding players that do not affect the outcome) and a payment that implements it $P$  for which $cc(f)=O(k)$, $\log|\Ima P|=poly(k)$, and $cc_{IC}(f)=exp(k)$. Therefore, $cc_{IC}(f)=\exp(cc(f))$.

\vspace{0.1in} \noindent The proof is very different than the previous proof. Rather than basing the hardness on the number of payments, the hardness stems from carefully constructing the function so that determining the prices for Alice requires to decide whether the bit representations of the types of Bob and Charlie share a common $1$ bit, whereas computing $f$ requires to decide whether a \emph{single} specific bit intersects.  Of course, determining whether Bob and Charlie share a common $1$ bit is just the disjointness function in disguise, which implies that computing the payments is indeed hard.

\subsection*{Our Results II: Algorithms for Payment Computation}
We proceed with developing algorithms for payment computation. Our algorithms come in three different flavours: truthful in expectation implementations of deterministic social choice functions, deterministic implementations of single parameter functions, and deterministic implementations of multi-parameter functions that satisfy uniqueness of payments.

We observe that if $f$ is  an implementable social choice function, then although $cc_{IC}(f)$ might be exponential in $cc(f)$, if we compromise on truthful-in-expectation implementation,\footnote{Recall that, roughly speaking, in a truthful in expectation mechanism each player has a strategy that maximizes his \emph{expected} profit regardless of the strategies of the other players, where the expectation is taken over the random coins of the mechanism.} the payment might be computed very efficiently. In fact, if we let $cc_{TIE}(f)$ be the communication complexity of implementing $f$ as a truthful in expectation mechanism, we prove that $cc_{TIE}(f)=poly(n,cc(f))$ for single parameter domains and for multi-parameter domains that are convex. For single parameter domains, we rely on the (well known) observation that the expected value of the integral in Myerson's formula can be estimated by measuring the height of the integral at a random point. 
For convex multi-parameter domains, we develop another algorithm reyling on a characterization of the payments in scalable domains by Babaioff, Kleinberg, and Slivkins \cite{babaioff2015truthful}, by showing that convex domains are essentially scalable. 

In the rest of the paper, we return to consider the fundemental question of payment computation in (deterministic) ex-post equilibrium. Babaioff et al. \cite{schapira} consider several simple single parameter problems. All of their problems are \emph{binary}: for each player $i$, the set of alternatives is divided into a set of ``winning'' alternatives for which his value is his private information $r_i$, and a set of ``losing'' alternatives for which his value is $0$. Babaioff et al. \cite{schapira} provide algorithms for payment computation for some specific settings. Our first algorithm provides a general polynomial upper bound for all binary problems:

\vspace{0.1in} \noindent \textbf{Theorem: } Let $f$ be an implementable social choice function for $n$ players in a binary single parameter domain. Then, $cc_{IC}(f)\leq O(n\cdot cc^2(f))$.

\vspace{0.1in} \noindent In fact, our algorithm extends to a much more general single parameter setting that may have many alternatives, like the setting of our impossibility result discussed above. In this case we show that $cc_{IC}(f)\leq O(n\cdot cc^2(f)\cdot |\mathcal{A}|)$, where $\mathcal{A}$ is the set of alternatives. This bound is tight in the sense that if we omit $cc(f)$ or $|\mathcal{A}|$ from the RHS, $cc_{IC}(f)$ might become much bigger than the RHS. 

We then proceed to considering multi-parameter settings. These turn out to be quite challenging. However, we do provide an algorithm for those domains as well, assuming ``uniqueness of payments'', i.e., that the payment functions are uniquely determined by the allocation function (up to a constant). Most interesting domains (combinatorial auctions, scheduling, etc.) satisfy uniqueness of payments. 

\vspace{0.1in} \noindent \textbf{Theorem: } Let $f$ be an implementable social choice function for $n$ players that satisfies uniqueness of payments. Then, $cc_{IC}(f)\leq poly(n, cc(f), |\mathcal{A}|)$.

\vspace{0.1in} \noindent To prove this theorem, we first prove that there exists a \emph{non-deterministic} algorithm that computes the payment of player $i$. We then leverage this result and the fact that non-deterministic and deterministic communication complexity are polynomially related to establish our upper bound.
We show that for every player $i$ and every price, the prover can send  $\mathcal{O}(|\mathcal{A}|^2)$ types in $V_i$ that serve as a non-deterministic witness. The proof of the theorem consists in explicitly describing those types, showing that they suffice and that they can be described succinctly. 
\subsection*{Our Results III: The Hardness of Computing the Payments in a Menu}

We now change gears and consider a slightly different but very related problem (Section \ref{sec-hardness}). Up until now we assumed that we are given an instance $(v_1,\ldots, v_n)$ and we want to compute the payment of each player. However, the taxation principle asserts that each truthful mechanism can be seen as follows: each player $i$ is facing a menu that specifies a price for each alternative. The output of a truthful mechanism is an alternative that maximizes the profit, i.e., maximizes $value(a) - price(a)$ for each player. The taxation principle leads to a definition of taxation complexity,\footnote{The taxation complexity of a mechanism is $\log (\max_i |M_i|)$, where $M_i$ is the set of possible menus player $i$ might face.} which was shown to characterize the communication complexity of truthful mechanisms in many settings \cite{dobzinski2016computational}. The notion of taxation complexity was crucial in establishing a lower bound on the communication complexity of truthful approximation mechanisms in the recent breakthrough of Assadi et al. \cite{assadi2020separating}.

Consider the notion of a ``constructive taxation principle'' or ``menu reconstruction'' \cite{dobzinski2016computational}: an algorithm that efficiently finds the menu that $v_{-i}$ presents to player $i$. The basic building block of this algorithm is a subroutine $price(\cdot)$ that assumes that the input of each player $i'\neq i$ is $v_{i'}$, gets an alternative $a$ and returns the price of $a$ in the menu induced by the truthful mechanism $M$. We have efficient and constructive taxation principle whenever $cc(price)=poly(cc(M))$.

We pinpoint the hardness of $price(\cdot)$ on  deciding whether an alternative $a$ is \emph{reachable}, i.e., whether there exists $v$ such that $f(v,v_{-i})=a$. We denote this function with $reach(\cdot)$.
We show that if $reach(\cdot)$ is \textquote{easy}, $price(\cdot)$ is also \textquote{easy}, i.e.:  $cc(price)\le poly(cc(reach), cc(M), n)$. We use this observation to show that for all the mechanisms of player decisive functions, $cc(price)\le poly(n,cc(M))$.  
Furthermore, we show an instance $M=(f,P)$ where $cc(reach)=exp(cc(M))$ and prove that this gap is tight.

\section{Preliminaries}   	 \label{setupmyersonsection} 
\paragraph{Truthfulness}
We consider settings with $n$ players. Each player $i$ has a valuation function $v_{i}:\mathcal{A}\to \mathbb{R}$ which is his private information. Let $ V_{i}$ be the set of all possible valuations of player $i$. A mechanism $M$ consists of a social choice function $f\colon V_{1}\times \cdots \times V_{n} \to \mathcal{A}$, where $\mathcal A$ is the set of possible alternatives, and a payment function $P_{i}\colon V_1 \times\cdots\times V_n \to \mathbb{R}$ for each player $i$. A mechanism is \emph{ex-post incentive compatible} (or truthful) if for each player $i$, every valuations profile of the other players $v_{-i}\in V_{-i}$ and every $v_i$, $v'_i\in V_i$, it holds that:
$$
v_i(f(v_i,v_{-i}))-P_i(f(v_i,v_{-i})) \geq v_i(f(v'_i,v_{-i}))-P_i(f(v'_i,v_{-i}))
$$
$f$ is called \emph{implementable} if for some $P_1,\ldots, P_n$ the resulting mechanism is ex-post incentive compatible. We denote the image of a payment function $P_i$ with $\Ima P_i$.

In this paper we give special treatment to single parameter domains. A domain of a player $V_{i}$ is \emph{single parameter} if there exists a public function $w_{i}:\mathcal{A}\to \mathbb{R}$ and a set of real numbers $R_i\subseteq\mathbb{R}$ such that $V_i=\{r\cdot w_i(\cdot)\hspace{0.25em}|\hspace{0.25em}r\in R_i \}$.  A single parameter domain $V_i$ is \emph{binary} if its public function $w_i$ satisfies that $\Ima w_i=\{0,1\}$.   
 If $V_1, \ldots, V_n$ are all single parameter domains, we say that $f\colon V_{1}\times\cdots \times V_{n} \to \mathcal{A}$ is single parameter. In particular, since we can assume that the private information of player $i$ is $r_{v_i}$, we often identify $v_i$ with $r_{v_i}$ and slightly abuse notation by writing, e.g., $v_i>v'_i$ where we mean $r_{v_i}>r_{v'_i}$.

A social choice function $f$ is \emph{monotone} with respect to player $i$, if $V_{i}$ is a single parameter domain and for every $v_{-i}\in V_{-i}$:
		\begin{equation*} 
		r_{v_{i}'} > r_{v_{i}} \implies w_{i}(f(v_{i}',v_{-i})) \ge w_{i}(f(v_{i},v_{-i})) 
		\end{equation*}
		$f$ is monotone if it is monotone with respect to each of its players.
		
Let $M=(f,P)$ be a mechanism over a domain, where each $V_{i}$ is single parameter and $0\in R_i$. $M$ is \emph{normalized} if for each player $i$ and every $v_{-i} \in V_{-i}$, $r_{v_i}=0 \implies P_{i}(v_{i},v_{-i})=0$. 

The following proposition is well known \cite{myerson1981optimal}:

\begin{proposition}[Monotonicity and Myerson's Payment Formula]\label{myersonlemmainterval}
	Let $V=V_1\times \cdots \times V_n$ be a single parameter domain. Then, a social choice function $f\colon V\to \mathcal{A}$ is implementable if and only if it is monotone. If $R_i=[0,b_i]$,\footnote{$b_i$ might be equal to $\infty$.} then the \emph{unique} payment rule of player $i$ that 
	 satisfies normalization is given by:
	\begin{equation} \label{myersonintegral}
	P_{i}(v_{i},v_{-i})=r_{v_{i}}\cdot w_{i}(f(v_i,v_{-i})) - \int_{0}^{r_{v_{i}}}w_{i}(f(z\cdot w_i,v_{-i}))dz
	\end{equation} 
\end{proposition}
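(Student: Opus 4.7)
The plan is to handle the two halves separately: first the equivalence of implementability and monotonicity, and then the derivation and uniqueness of Myerson's normalized payment rule on $R_i=[0,b_i]$.

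For necessity of monotonicity, I would fix a player $i$, a profile $v_{-i}$, and two values $r,r'\in R_i$ identified with valuations $v_i,v_i'$. Writing the two ex-post IC inequalities for player $i$ (once assuming the true type is $r$, once assuming it is $r'$) and adding them cancels the payment terms and yields $(r-r')(w_i(f(rw_i,v_{-i}))-w_i(f(r'w_i,v_{-i})))\ge 0$, which is exactly monotonicity. For sufficiency I would plug the proposed payment formula into the IC inequality and show that the resulting utility loss from misreporting is nonnegative, using the fact that the integrand $(r-z)$ changes sign exactly at $z=r$ while $w_i(f(zw_i,v_{-i}))$ is non-decreasing in $z$.

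For the uniqueness and derivation of the payment formula, I would fix $v_{-i}$ and set $a(r):=w_i(f(rw_i,v_{-i}))$ and $P(r):=P_i(rw_i,v_{-i})$. Rearranging the same two IC inequalities gives the standard sandwich
$$ r'\bigl(a(r)-a(r')\bigr)\ \le\ P(r)-P(r')\ \le\ r\bigl(a(r)-a(r')\bigr) \quad \text{for } r>r'. $$
This shows $P$ is non-decreasing, is continuous wherever $a$ is continuous, and jumps by exactly $r\cdot\Delta a(r)$ at each jump of $a$. Letting $r'\to r$ through continuity points also gives $P'(r)=r\cdot a'(r)$ almost everywhere, so $P(r)-P(0)=\int_0^r z\,da(z)$ as a Lebesgue--Stieltjes integral; integration by parts rewrites this as $r\cdot a(r)-\int_0^r a(z)\,dz$. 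Normalization forces $P(0)=0$, producing Myerson's formula and simultaneously proving uniqueness, since the sandwich determines every difference $P(r)-P(0)$.

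The main technical care goes into two places. First, I would have to justify the Stieltjes integration-by-parts for a monotone $a$ that may be discontinuous and defined on the possibly unbounded interval $[0,b_i]$; the cleanest route is to reduce to finite subintervals $[0,r]$ with $r<b_i$, use the standard bounded-variation integration-by-parts there, and then take limits. Second, since $w_i$ need not be injective the ``monotone step function'' picture of $a(\cdot)$ has to be extracted from the definition rather than assumed; but once the sandwich inequality is in hand, the rest is essentially routine.
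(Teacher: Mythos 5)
The paper does not actually prove this proposition: it is quoted as a known result and attributed to Myerson, so there is no in-paper argument to compare against. Your proposal is the standard textbook proof (two IC inequalities, the sandwich $r'\bigl(a(r)-a(r')\bigr)\le P(r)-P(r')\le r\bigl(a(r)-a(r')\bigr)$ with $a(z)=w_i(f(z\cdot w_i,v_{-i}))$, then normalization), and in substance it is the right argument.

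Two points need tightening. First, the step ``$P'(r)=r\cdot a'(r)$ almost everywhere, so $P(r)-P(0)=\int_0^r z\,da(z)$'' is not valid as written: a monotone $P$ is not recovered from its a.e.\ derivative (Cantor-function-type behaviour is not excluded by differentiability a.e.), so knowing the derivative a.e.\ plus the jump sizes does not by itself pin $P$ down. The clean way to finish, consistent with your closing remark that ``the sandwich determines every difference,'' is to sum the sandwich over a partition $0=r_0<\cdots<r_m=r$: this traps $P(r)-P(0)$ between the lower and upper Riemann--Stieltjes sums of $\int_0^r z\,da(z)$, which converge to the same limit since $a$ is monotone and bounded on $[0,r]$; equivalently, for two implementing payments the sandwich gives $|D(r_j)-D(r_{j-1})|\le (r_j-r_{j-1})\bigl(a(r_j)-a(r_{j-1})\bigr)$ for the difference $D$, and telescoping shows $D$ is constant. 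Integration by parts then yields the stated formula, and normalization fixes the constant. Second, the ``monotone $\Rightarrow$ implementable'' half of the proposition is asserted for an arbitrary single parameter domain, i.e.\ arbitrary $R_i\subseteq\mathbb{R}$, whereas your sufficiency argument integrates $a$ over $[0,r]$ and thus implicitly assumes $R_i$ is an interval containing $0$. For general $R_i$ you should first extend $a$ monotonically to an interval (e.g.\ $\tilde a(z)=\sup\{a(s):s\in R_i,\ s\le z\}$, exactly the extension the paper itself uses later for its truthful-in-expectation result) and run the same verification with $\tilde a$; the uniqueness statement, by contrast, is only claimed for $R_i=[0,b_i]$, where your argument applies directly.
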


\paragraph{Communication Complexity}
In this paper communication complexity refers to the number-in-hand model where $v_i$ is the input of player $i$. We denote by $cc(f)$ the communication complexity of a function $f$. We denote by $cc_{IC}(f)$ the cost of the most efficient mechanism that implements $f$, including payments.\footnote{The definition of \cite{schapira} is slightly different: there, $cc_{IC}(f)$ is the cost of the most efficient \emph{normalized} mechanism for $f$.}
\begin{remark}\label{finiteccfremark}
Our focus in this paper is understanding that gap between the communication complexity of computing the payments and the communication complexity of computing $f$. Therefore, for all the social choice functions considered in the paper we assume that $cc(f)$ is finite (otherwise, there is no finite communication protocol for $f$ and understanding the gap makes little sense).
\end{remark}

\section{The Cost of Payment Computation is Exponential}\label{lowerboundsec}

Recall that Fadel and Segal \cite{fs} showed that for every social choice function $f$ we have that $cc_{IC}(f)\leq exp(cc(f))$. In this section we solve their main open question and show that their bound is tight, that is, there exists a social choice function $f$ such that $cc_{IC}(f)= exp(cc(f))$. In fact, we provide two proofs by constructing two social choice functions, each function highlights a different source of hardness of payment computation.

The first source of hardness is the fact that the number of prices that a player might see in an implementation of a social choice function $f$ with $cc(f_k)=k$ is doubly exponential in $k$. Therefore, just specifying the payments requires $exp(k)$ bits, which immediately implies that $cc_{IC}(f_k)=exp(k)$. 

If there are only two players, we show that this is the only source of hardness in the sense that payment computation becomes easy when the number of payments is not huge. However, when there are more than two players we show that even when the number of payments is small, payment computation might be hard because of the interaction between the players.

One possible criticism about those results is that the functions that we construct are quite contrived. Thus, we conclude by showing a welfare maximizer in a multi-unit auction that satisfies that $cc_{IC}(f)=\exp(cc(f))$. 

\subsection{Proof I: Hardness via the Number of Payments} \label{hardnessnumberofpaymentssec}

In the two player case, we are able to fully characterize the relationship between $cc_{IC}(f)$ and $cc(f)$. For every implementable social choice function $f$, let $P\colon V_1\times \cdots\times V_n \to \mathbb{R}^{n}$ be the payment scheme of the most efficient mechanism for $f$, i.e.  the one that satisfies $cc(f,P)=cc_{IC}(f)$. Let $P_{f}$ be the maximum number of prices for an alternative when using $P$.  Formally:
	\begin{equation*} \label{pfdefinition}
	P_{f} = \max_{i\in[N]} \hspace{0.5em} \max_{a\in \mathcal{A}} \hspace{0.5em} \big|\{p \hspace{0.5em} | \hspace{0.5em} \exists v\in V \hspace{0.5em} \text{s.t.} \hspace{0.5em} f(v)=a, \hspace{0.30em} P_{i}(v)=p  \}\big|
	\end{equation*}
	
	\begin{proposition} \label{twoplayertheorem}
		Let $f$ be an implementable social choice function for two players.
		Then:
		\begin{equation}\label{characterizationeq}
		\dfrac{cc(f)+ \log (P_{f})}{2} \le cc_{IC}(f)\le cc(f) + 2\log (P_{f})
		\end{equation}
	\end{proposition}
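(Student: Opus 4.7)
The plan is to prove the two sides of~\eqref{characterizationeq} independently: an explicit two-round protocol built on the taxation principle gives the upper bound, while two elementary protocol-tree inequalities give the lower bound.

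For $cc_{IC}(f)\le cc(f)+2\log P_f$ I would exhibit a protocol for the mechanism $(f,P)$ where $P$ is the payment scheme achieving $cc_{IC}(f)$. Phase~1: the two players run an optimal protocol for $f$ at cost $cc(f)$; at its end both know $a := f(v_1,v_2)$. Phase~2: by the taxation principle (an immediate consequence of truthfulness in the two-player case, since if $f(v_1,v_2)=f(v_1',v_2)$ then applying the truthfulness inequalities both ways forces $P_1(v_1,v_2)=P_1(v_1',v_2)$), the value $P_1(v_1,v_2)$ depends only on $v_2$ and $a$, both of which are known to player~2, who can therefore compute this price locally. She then transmits it by sending its index in the publicly fixed set $\{p : \exists v \text{ with } f(v)=a \text{ and } P_1(v)=p\}$; the definition of $P_f$ bounds this set by $P_f$, so $\log P_f$ bits suffice. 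Phase~3 is symmetric for $P_2(v_1,v_2)$. Total cost: $cc(f)+2\log P_f$.

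For the lower bound I would prove the two elementary inequalities $cc_{IC}(f)\ge cc(f)$ and $cc_{IC}(f)\ge \log P_f$, and then average them. The first is immediate: a protocol that implements $(f,P)$ in particular determines $f$, so any such protocol gives a protocol for $f$ of the same cost. For the second, the definition of $P_f$ supplies a player $i^{\star}$, an alternative $a^{\star}$, and inputs $v^{(1)},\ldots,v^{(P_f)}$ with $f(v^{(j)})=a^{\star}$ but pairwise distinct values $P_{i^{\star}}(v^{(j)})$; a correct protocol must output different values on these inputs and hence route them to $P_f$ distinct leaves of the protocol tree, forcing depth at least $\log P_f$. Summing the two inequalities and dividing by two yields $cc_{IC}(f)\ge (cc(f)+\log P_f)/2$.

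The only delicate step is Phase~2 of the upper bound, where one must verify both that player~2 can in fact reconstruct $P_1(v_1,v_2)$ from $(v_2,a)$ alone, and that the encoding uses only $\log P_f$ bits rather than the potentially much larger $\log|V_1|$ one might naively fear. The taxation principle handles the first point; the second hinges on the crucial observation that we are transmitting the price at the \emph{single} alternative $a$ that was actually selected, which is exactly what the definition of $P_f$ caps---we are not describing player~1's whole menu, which could be considerably larger.
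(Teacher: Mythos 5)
Your proposal is correct and follows essentially the same route as the paper: the lower bound via the two elementary inequalities $cc_{IC}(f)\ge cc(f)$ and $cc_{IC}(f)\ge \log P_f$ (counting distinct payment outputs/leaves), and the upper bound via running an optimal protocol for $f$ and then having each player transmit the other's price for the realized alternative, justified by the taxation principle and encoded with $\log P_f$ bits. Your explicit verification that the taxation principle follows from applying the truthfulness inequality in both directions, and your remark that only the price of the selected alternative (not the whole menu) needs to be communicated, are exactly the points the paper relies on implicitly.
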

	\begin{proof}
		Obviously, $cc_{IC}(f)\ge cc(f)$. 
		 Also, $cc_{IC}(f)\ge \log(P_{f}) $, because
		$cc_{IC}(f)\ge \log |\Ima (f,P)|\ge \log (P_{f})$. For the RHS, denote the payment functions of players $1$ and $2$ with $P_1$ and with $P_2$, respectively. By the taxation principle, $P_1$ can be reformulated as a function of the alternative chosen and of $V_2$, and analogously $P_2$ is a function of the alternative chosen and of $V_1$.
		
We use this reformulation to explicitly provide a protocol for a truthful implementation fo $f$. Let $\pi^f$ be the most efficient communication protocol of $f$. Fix types $v_1\in V_1, v_2\in V_2$. The players first run $\pi^{f}(v_1,v_2)$, so they both know $f(v_{1},v_{2})=a$. By the above, once the alternative is known, player $1$ knows $P_2(a,v_{1})$ and player $2$ knows  $P_1(a,v_{2})$. The players can now send to each other those payments, using at most $2\log(P_{f})$ bits. 
	\end{proof}

As a direct corollary, to prove an exponential gap between $cc(f)$ and $cc_{IC}(f)$ when there are only two players, we must construct an $f$ in which the number of possible payments $P_f$ is \emph{doubly} exponential in $cc(f)$. We now construct such an $f$, which gives us the first proof of our main result. We note that the $f$ that we construct is quite contrived. However, in Section \ref{vcgsection} we use the same $f$ to prove that payment computation is hard even if we want to maximize the welfare in a multi-unit auction setting.

Fix some integer $k$. In our setting there are two players, Alice and Bob, and $2^{k}+1$ alternatives: $\mathcal{A}=\{a_{0},a_{1},\ldots,a_{2^{k}}\}$. The domains of Alice and of Bob are single parameter: $r_A\in [0,2^{k+1}-1]$ and $r_B\in \{0,1,\ldots,l-1\}$, where $l=
\binom{2^{k+1}-1}{2^k}$. 
Notice that the domain of Bob's valuations, $V_B$, is  of size $l$. 
	  The value of Alice for alternative $a_i$ is $r_A\cdot (|\mathcal{A}|^{4ik}-1)$, i.e., $w_A(a_i)=|\mathcal{A}|^{4ik}-1$. Bob's value for all alternatives is identical and equal to his private information $r_B$ ($w_B\equiv 1$). 
	  Let $\mathcal{C}$ be the following set of $(2^{k}+1)$-dimensional vectors:
	 \begin{equation} \label{cdefinition}
	 \mathcal{C}=\Big\{ \big(c_{0},\ldots,c_{|\mathcal{A}|-1}\big) \Big| \forall i \hspace{0.3em}  c_i\ge 1 \hspace{0.3em} \text{and} \hspace{0.3em} c_i\in \mathbb{N}, \hspace{0.3em} \sum_{i=0}^{|\mathcal{A}|-1}c_{i} = 2^{k+1} 
\Big\}
	 \end{equation}
	 Each vector $c\in \mathcal{C}$ defines a function $c:\{0,1,\ldots,2^{k+1}-1\}\to\mathcal{A}$, where $c_{j}$ is the number of integers which are assigned to alternative $a_{j}$: $c=\big(c_{0},\ldots,c_{|\mathcal{A}|-1}\big)$ maps the integers in $\{0,1,\ldots,c_{0}-1\}$ to $a_{0}$, the integers in $\{c_{0},c_{0}+1,\ldots,c_{0}+c_{1}-1\}$ to $a_{1}$ and so on. Each $c\in \mathcal{C}$ is monotonically increasing in the sense that it maps larger integers to alternatives with no smaller index. 
	 Note that:
	 \begin{equation} \label{csize}
	 |\mathcal{C}|=\binom{2^{k+1}-|\mathcal{A}|+|\mathcal{A}|-1}{|\mathcal{A}|-1}=
	 \binom{2^{k+1}-1}{2^k}=|V_{B}|
	 \end{equation}
	 Since this is the number of ways to match the the integers in $\{0,1,\ldots,2^{k+1}-1\}$ to alternatives in a monotone way, with the constraint that each alternative is matched with at least one integer (for all $i$, $c_{i}\ge 1$). It follows that there exists a bijective function between $V_{B}$ and the set $\mathcal{C}$. Let $map\colon V_{B}\to \mathcal{C}$  be such bijection. We define $f_k(v_{A},v_{B})=map(v_{B})(\lfloor r_{v_{A}} \rfloor)$. In words, computing $f_k(v_{A},v_{B})$ is done by first computing $map(v_{B})$ which returns a function-vector $c\in \mathcal{C}$. Afterwards, we apply $c$ to the integer $\lfloor r_{v_{A}} \rfloor$, which returns an alternative. 	
\begin{theorem} \label{mostimportanttheorem}
	For the $f_k$ defined above, $cc(f_k)=\mathcal{O}(k)$,  whereas $cc_{IC}(f_k)\ge\exp(k)$.
\end{theorem}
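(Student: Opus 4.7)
The upper bound $cc(f_k)=O(k)$ is immediate: Alice transmits $\lfloor r_A \rfloor$ using $k+1$ bits, after which Bob (who knows the public bijection $map$) evaluates $map(v_B)(\lfloor r_A \rfloor)$ and announces the resulting alternative in another $\lceil \log|\mathcal{A}|\rceil = O(k)$ bits.

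For the lower bound, the plan is to combine Myerson's uniqueness of payments with a rectangle-partition argument on two carefully chosen rows of the communication matrix. Fix any truthful mechanism $(f_k,P)$. Applying the uniqueness part of Proposition~\ref{myersonlemmainterval} one column at a time produces a function $\phi\colon V_B\to\mathbb{R}$ with $P_A(v_A,v_B) = P_A^{\mathrm{norm}}(v_A,v_B) + \phi(v_B)$, where $P_A^{\mathrm{norm}}$ is the normalized payment given by Myerson's integral. Since $w_B\equiv 1$, truthfulness also forces $P_B$ to be independent of $v_B$. I focus on two rows of the communication matrix: $v_A^{\star}$ with $r_A = 2^{k+1}-1$, where $f_k\equiv a_{2^k}$ (the largest integer is always assigned to the last alternative by every $c\in\mathcal{C}$), and $v_A^{\star\star}$ with $r_A = 0$, where $f_k\equiv a_0$. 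A direct evaluation of the Myerson integral on these rows, setting $c = map(v_B)$ and $N = |\mathcal{A}|^{4k}$, gives $P_A^{\mathrm{norm}}(v_A^{\star\star},v_B) = 0$ and
\[
P_A^{\mathrm{norm}}(v_A^{\star},v_B) \;=\; 2^{k+1} N^{2^k} \;-\; \sum_{t=0}^{2^k} c_t\, N^t.
\]
The crucial arithmetic fact is that $\sum_t c_t N^t$ is a base-$N$ expansion whose digits satisfy $1\le c_t\le 2^k < N$, so distinct vectors $c$ yield distinct sums; in particular distinct $v_B$ yield distinct values of $P_A^{\mathrm{norm}}(v_A^{\star},v_B)$.

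I now run the rectangle-partition argument. Any deterministic protocol of cost $c$ partitions the matrix $M[v_A,v_B] = (f_k,P_A,P_B)$ into at most $2^c$ monochromatic combinatorial rectangles. Because $f_k$ differs across rows $v_A^{\star}$ and $v_A^{\star\star}$, no single rectangle may contain cells from both rows, so the rectangles meeting these two rows split into disjoint families of sizes $r_1$ (covering row $v_A^{\star}$) and $r_2$ (covering row $v_A^{\star\star}$). Constancy of $P_A$ on each rectangle induces two partitions of $V_B$: one on which $-\sum_t c_t N^t + \phi(v_B)$ is constant (from row $v_A^{\star}$), and one on which $\phi(v_B)$ is constant (from row $v_A^{\star\star}$). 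If two distinct columns $v_B,v'_B$ lay in the same class of both partitions, subtraction would force $\sum_t c_t N^t = \sum_t c'_t N^t$, hence $c=c'$ by base-$N$ uniqueness, hence $v_B = v'_B$ by injectivity of $map$, a contradiction. The common refinement of the two partitions is therefore the singleton partition of $V_B$, giving $r_1 r_2 \ge |V_B|$ and so $r_1+r_2 \ge 2\sqrt{|V_B|}$. Since $|V_B| = \binom{2^{k+1}-1}{2^k} = \Theta(4^{2^k}/\sqrt{2^k})$, this yields $cc_{IC}(f_k) \ge \log(r_1+r_2) \ge \tfrac{1}{2}\log|V_B| = \Omega(2^k) = \exp(k)$.

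The step I expect to be the main obstacle is upgrading the doubly-exponential bound on the number of \emph{normalized} prices into a bound valid for \emph{every} truthful payment scheme, since naively the shift $\phi(v_B)$ could collapse all prices on a single row. The two-row device above neutralizes this ambiguity: the shift-invariant difference $P_A(v_A^{\star},\cdot) - P_A(v_A^{\star\star},\cdot)$ inherits the distinctness directly from $P_A^{\mathrm{norm}}$ regardless of which truthful payment scheme was chosen, and it is this invariance that makes the rectangle argument go through uniformly over all implementations of $f_k$.
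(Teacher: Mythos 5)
Your proof is correct and follows essentially the same route as the paper's: Myerson's uniqueness of the normalized payment, the fact that at $r_A=2^{k+1}-1$ distinct types of Bob induce distinct normalized payments, and a comparison against the $r_A=0$ row to cancel the unknown per-$v_B$ shift $\phi$ (the paper packages this as two runs of the mechanism in Claim~\ref{ccicbiggerccaclaim} and then counts $|\Ima P_A|$, whereas you count rectangles on the two rows directly via $r_1 r_2\ge |V_B|$; both lose the same factor of~$2$). Your base-$N$ digit-uniqueness observation, valid because $1\le c_t\le 2^k<|\mathcal{A}|^{4k}=N$, is a clean substitute for the explicit estimates in Lemma~\ref{mainlemma}, and your closed-form evaluation of the Myerson integral at the top type is accurate.
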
	 
\begin{proof}
Observe that $cc(f_k)=\mathcal{O}(k)$ since $f_k$ can be computed by a simple protocol where Alice sends to Bob $\lfloor r_{v_{A}}\rfloor$ using $k+1$ bits, and then Bob computes $f_k$ and sends the outcome to her, using $\log |\mathcal{A}|$ bits. We now show that $f_k$ can be truthfully implemented, then we will analyze $cc_{IC}(f_k)$.
	\begin{lemma} \label{monotonelemma}
		$f_k$ is monotone and hence implementable. 
	\end{lemma}
	\begin{proof}
		$w_{B}(\cdot)$ is constant so $f_k$ is obviously monotone with respect to Bob. In order to show monotonicity with respect to Alice as well, we fix $v_{B}\in V_{B}$ and two types $r,r'\in [0,2^{k+1}-1]$ such that $r>r'$. Denote the valuations $r\cdot w_A(\cdot)$ and $r'\cdot w_A(\cdot)$ with $v$ and with $v'$, respectively. Denote $map(v_{B})$ with $c$, and define $index:\mathcal{A}\to\mathbb{N}$ as  $index(a_{i})=i$. We wish to prove $w_A(f_k(v,v_B))\ge w_A(f_k(v',v_B))$.

$r>r'$ clearly implies $\lfloor r\rfloor \ge \lfloor r'\rfloor$. By definition, $c=map(v_B)$ is monotonically increasing with respect to the index of alternative, so $index (c \lfloor r\rfloor) \ge index (c\lfloor r'\rfloor)$. $w_A$ assigns greater values to alternatives with higher index, so  $w_A(c \lfloor r\rfloor) \ge w_A(c \lfloor r'\rfloor)$. By the definition of $f_k$, we get that $w_A(f_k(v,v_B))\ge w_A(f_k(v',v_B))$.       
	\end{proof}
We now analyze the hardness of computing the payments of Alice. By Proposition \ref{myersonlemmainterval}, her \emph{unique} normalized payment scheme is:  
	\begin{equation} \label{alicepayments} 
	P_{A}(v_{A},v_{B})=r_{v_A}\cdot w_{A}(f_k(v_{A},v_{B})) - \int_{0}^{r_{v_A}}w_{A}(f_k(z\cdot w_A,v_{B}))dz
	\end{equation}
\begin{claim} \label{ccicbiggerccaclaim}
	$ cc(P_{A})\le 2\cdot cc_{IC}(f_k)$.
\end{claim}
\begin{proof}
	Let $M^{\ast}=(f_k,P^{\ast})$ be the most efficient mechanism for $f_k$, i.e. $cc(M^{\ast})=cc_{IC}(f_k)$. 
	Denote Alice's valuation when $r_A=0$ with $v_0$, i.e., $v_0\equiv 0\cdot w_A$. 
Run $M^\ast$ on the instances $(v_A,v_B)$ and $(v_{0},v_B)$ to obtain $P_A^\ast(v_A,v_B)$ and $P_A^\ast(v_{0},v_B)$.
By Proposition \ref{myersonlemmainterval}, Alice has a unique normalized payment scheme, so we get that  $P_A(v_A,v_B)=P_A^\ast(v_A,v_B)- P_A^\ast(v_{0},v_B)$.   
\end{proof}	 
We now move to the main part of the proof which is   
showing that the image of $P_{A}$ is \textquote{large}. We start with the following lemma. Recall that $\langle  b, c\rangle$ stands for the dot product of the vectors $ b$ and $ c$. 
	\begin{lemma} \label{mainlemma}
		Let $w$ be the vector $\big(w_{A}(a_{0}),w_{A}(a_{1}),\ldots,w_{A}(a_{2^{k}})\big)$.
		For every two vectors $c\neq c' \in \mathcal{C}$, $\langle w,c\rangle \neq \langle w,c'\rangle$.
	\end{lemma}
	\begin{proof}
		Denote with $j$ the largest index where $c$ and $c'$ differ. Assume without loss of generality that $c_{j}>c_{j}'$. If $j=0$, it means that $c$ and $c'$ differ in only one coordinate, so their dot products with $w$ cannot be equal to each other. Hence, we can assume from now on that $j\ge 1$. We will show that $\langle w,c\rangle > \langle w,c'\rangle$:
		\begin{align}
		\langle w,c\rangle - \langle w,c'\rangle &= \sum_{i=0}^{|\mathcal{A}|-1}w_{i}\cdot (c_{i}-c'_{i}) \nonumber \\ 
		&= \sum_{i=0, c_{i}>c_{i}'}^{|\mathcal{A}|-1}w_{i}\cdot (c_{i}-c'_{i}) + \sum_{i=0, c_{i}'>c_{i}}^{|\mathcal{A}|-1}w_{i}\cdot (c_{i}-c'_{i}) \nonumber \\ &\ge  
		w_{j}\cdot \underbrace{(c_{j}-c_{j}')}_{\substack{{\ge 1}\\\text{since $c_{j},c'_{j}\in\mathbb{N}$}\\\text{and $c_{j}>c_j'$}}}
		+ \sum_{i=0, c_{i}'>c_{i}}^{|\mathcal{A}|-1}w_{i}\cdot (c_{i}-c'_{i}) \label{blacktriangle1} \\
		&\ge \underbrace{w_{j}}_{=|\mathcal{A}|^{4jk}-1} + \sum_{i=0, c_{i}'>c_{i}}^{|\mathcal{A}|-1}w_{i}\cdot (c_{i}-c'_{i}) \nonumber\\
		&>|\mathcal{A}|^{4jk}-1 -|\mathcal{A}|^{4jk}+1 = 0 \label{blacktriangle2}
		\end{align}
		which completes the proof. 
		(\ref{blacktriangle1}) holds because there are only positive summands in $\sum_{i=0, c_{i}>c_{i}'}^{|\mathcal{A}|}w_{i}\cdot (c_{i}-c'_{i})$, one of them is $w_{j}\cdot (c_{j}-c'_{j})$. 
		We now  explain (\ref{blacktriangle2}),  by proving that  $-\big(\sum_{i=0, c_{i}'>c_{i}}^{|\mathcal{A}|-1}w_{i}\cdot (c_{i}-c'_{i}) \big) < |\mathcal{A}|^{4jk}-1$. 
			\begin{multline*}
			-(\sum_{i=0, c_{i}'>c_{i}}^{|\mathcal{A}|-1}w_{i}\cdot (c_{i}-c'_{i})) = \sum_{i=0, c_{i}'>c_{i}}^{|\mathcal{A}|-1} \underbrace{w_{i}}_{\substack{\text{$\le w_{j-1}$, since $j$ is}\\\text{the maximal coordinate}\\ \text{where $c,c'$ differ and $c_{j}>c_{j}'$}}} \cdot \underbrace{(c_{i}'-c_{i})}_{\le 2^{k+1}} 
			\underbrace{\le}_{\substack{\text{$j\ge 1$, so}\\ \text{$w_{j-1}$ is well defined}}}  
			 |\mathcal{A}| \cdot w_{j-1} \cdot 2^{k+1} \\
			<|\mathcal{A}| \cdot (|\mathcal{A}|^{4(j-1)k}-1)\cdot\underbrace{ 2^{2k}}_{<|\mathcal{A}|^2}
			<  |\mathcal{A}|^{4jk-4k+3}-|\mathcal{A}|^3 
			\underbrace{<}_{k\ge 1} |\mathcal{A}|^{4jk}-1 
		\end{multline*}
	\end{proof}
	\begin{claim}\label{claimccic}
		Let $v_A\in V_A$ be a valuation such that $r_{v_{A}}=2^{k+1}-1$. Then, for all $v_1,v_2\in V_B$:
			\begin{equation} \label{threeandhalf}
		v_{1} \neq v_{2} \implies P_{A}(v_{A},v_{1}) \neq P_{A}(v_{A},v_{2})
		\end{equation} 
		As a corollary, if we reformulate $P_A$ as a function of the alternative and of Bob's value, we get that:
		\begin{equation*}
		v_{1} \neq v_{2} \implies P_{A}(a_{2^k},v_{1}) \neq P_{A}(a_{2^k},v_{2})
		\end{equation*} 
		\end{claim}
	We will use the corollary in Subsection \ref{vcgsection}.
	\begin{proof}
		Note that all elements in $\mathcal{C}$ are monotone and satisfy that $c_{2^{k}}\ge 1$, thus  for every $v_{B}\in V_{B}$, if $r_{v_{A}}=2^{k+1}-1$, then $f_k(v_A,v_{B})=a_{2^{k}}$. In words, Alice always gets alternative $a_{2^k}$ when bidding her highest value. Combining (\ref{threeandhalf}) with the payment formula in (\ref{alicepayments}), we get the following logical equivalences:		
		\begin{eqnarray*}
		(2^{k+1}-1)\cdot w_{A}(a_{2^{k}}) - \int_{0}^{2^{k+1}-1}w_{A}(f_k(z\cdot w_A,v_{1}))dz&\overset{\text{?}}{=}&
		(2^{k+1}-1)\cdot w_{A}(a_{2^{k}}) - \int_{0}^{2^{k+1}-1}w_{A}(f_k(z\cdot w_A,v_{2}))dz\\	\int_{0}^{2^{k+1}-1}w_{A}(f_k(z,v_{1}))dz& \overset{\text{?}}{=}& \int_{0}^{2^{k+1}-1}w_{A}(f_k(z,v_{2}))dz \\
		\int_{0}^{2^{k+1}-1}w_{A}\big(map(v_{1})\lfloor z \rfloor\big) dz &\overset{\text{?}}{=}& \int_{0}^{2^{k+1}-1}w_{A}\big(map(v_{2})\lfloor z \rfloor\big) dz
		\end{eqnarray*}
		Denote $map(v_{1})$ with $c_{1}$ and $map(v_{2})$ with $c_{2}$:
		\begin{eqnarray*}
		\int_{0}^{2^{k+1}-1}w_{A}\big(c_{1}\lfloor z \rfloor \big) dz &\overset{\text{?}}{=}& \int_{0}^{2^{k+1}-1}w_{A}\big(c_{2}\lfloor z \rfloor \big) dz\\
		\langle c_{1}, w\rangle &\overset{\text{?}}{=}& \langle c_{2}, w\rangle 
		\end{eqnarray*}
		The last transition holds since the integral of $w_{A}\big(c_{1}(\cdot )\big)$ over the interval $[0,2^{k+1}-1]$ equals to $\langle w,c_{1} \rangle$ where $w=\big(w_{A}(a_{0}),w_{A}(a_{1}),\ldots,w_{A}(a_{2^{k}})\big)$, and the same clearly applies also to the RHS. Note that $c_{1}$ and $c_{2}$ are interpreted as functions in the uppermost equation, and as vectors in the lower equation. Recall that $map(\cdot)$ is one-to-one, so $v_{1}\neq v_{2}$ means that $c_{1} \neq c_{2}$. By Lemma \ref{mainlemma}, $c_{1}\neq c_{2}\implies\langle c_{1}, w\rangle \neq \langle c_{2}, w\rangle$. Therefore,  $P_{A}(v_A,v_{1})\neq P_{A}(v_A,v_{2})$, as required.
		
		As for the corollary, since bidding $v_A$ such that $r_{v_A}=2^{k+1}-1$  guarantees alternative $a_{2^k}$, it is immediate from the taxation principle that:
		\begin{equation*}
		P_A(a_{2^k},v_1)= P_A(v_A,v_1)\neq P_A(v_A,v_2)= P_A(a_{2^k},v_2) \implies
		P_A(a_{2^k},v_1)\neq 
		P_A(a_{2^k},v_2)
		\end{equation*}
		    
	\end{proof}
\begin{corollary} \label{ppaboundcorollary}
	$cc(P_{A})\ge 2^{k}$.
\end{corollary}
\begin{proof}
	By Claim \ref{claimccic}, different values of Bob induce a different payment for Alice whenever her she bids the valuation $v_A$ such that $r_{v_A}=2^{k+1}-1$. Thus, $|\Ima P_A|\ge |V_B|$, where 
		 $|V_{B}|= \binom{2^{k+1}-1}{2^k}$.
 Thus, we deduce that every protocol that computes $P_{A}$ has at least $\binom{2^{k+1}-1}{2^k}$ leaves, so $cc(P_{A})$ is bounded from below by $\log \binom{2^{k+1}-1}{2^k}$. Hence:
\begin{equation*}
cc(P_{A})\ge \log \binom{2^{k+1}-1}{2^k} \ge 2^k \cdot \log\frac{2^{k+1}-1}{2^k} 
\approx 2^{k}\cdot \log 2 = 2^{k}
\end{equation*}
where the second inequality is due to the binomial bound $\binom{n}{k} \ge (\frac{n}{k})^{k}$.
\end{proof}
To conclude the proof of Theorem \ref{mostimportanttheorem}, combining Claim \ref{ccicbiggerccaclaim} and Corollary \ref{ppaboundcorollary} yields that $cc_{IC}(f)\ge 2^{k-1}$. In contrast, as discussed above, $cc(f) \le k+1+\log |\mathcal{A}|\le 3k$.
\end{proof}

\subsection{Proof II: Hardness via Interaction}\label{polylogsec} 

We now show that if there are more than two players, then payment computation might be hard even if the number of possible payments is small. The idea is to construct a social choice function such that the payment is determined by the number of bit intersections of Bob's and Charlie's inputs (note that determining this number is harder than solving the disjointness problem). The challenge is to design such an $f$ with the additional property that $cc(f)$ is still small. We achieve that by constructing $f$ in which the chosen alternative depends only on $v_A$ and a constant number of bits of Bob and Charlie. That is, determining the output can be done ``locally'' but determining the payments is done ``globally''.

	\begin{theorem} \label{nopolylogthm}
		For every integer $k\ge 1$, there exists a single parameter social choice function $f_k$ over three players and $\mathcal{O}(k)$ alternatives, 
		where $cc(f_k)=\Theta(\log k)$, $cc_{IC}(f_k)=\Omega (k)$ and $\Ima P_{A}= \mathcal{O}(k)$, where $\Ima P_A$  is the image of the normalized payment function for Alice (by Proposition \ref{myersonlemmainterval}). 
	\end{theorem}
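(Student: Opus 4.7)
The plan is to exhibit a three-player, single-parameter function with the claimed asymptotics. Bob and Charlie are auxiliary players with constant weights $w_B\equiv w_C\equiv 1$, so they are trivially monotone and their normalized payments are identically zero; however, I let their scalar type domains have size $2^k$, viewing each type as a bit string $b,c\in\{0,1\}^k$. Alice is the real player, with $k+1$ alternatives $a_0,\ldots,a_k$, weights $w_A(a_i)=i$, and continuous type space $R_A=[0,k]$. For $r_A\in[j,j+1)$ with $j\ge 1$ I set $f=a_j$ if $b_j\wedge c_j=1$ and $f=a_{j-1}$ otherwise; on $[0,1)$ I put $f=a_0$. A four-case check at each integer boundary shows the $w_A$-value of $f$ is non-decreasing in $r_A$, so $f$ is monotone and therefore implementable by Proposition~\ref{myersonlemmainterval}.

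The upper bound $cc(f_k)=O(\log k)$ is immediate: Alice broadcasts $\lfloor r_A\rfloor$ in $\lceil\log(k+1)\rceil$ bits and Bob and Charlie each send their $\lfloor r_A\rfloor$-th bit, determining $f$; the matching $\Omega(\log k)$ lower bound follows since $f$ has $\Omega(k)$ distinct values. For $cc_{IC}(f_k)=\Omega(k)$, fix $r_{v_A}=k$. Myerson's formula gives the unique normalized payment
\[
P_A(v_A,v_B,v_C)=k\cdot w_A(f(v_A,v_B,v_C))-\int_{0}^{k}w_A(f(z\cdot w_A,v_B,v_C))\,dz,
\]
and expanding the integral against the piecewise-constant $f$ produces, after cancellation, $P_A=C_0+kS_k-\sum_{j=1}^{k-1}S_j$ for a known constant $C_0$, where $S_j:=b_j\wedge c_j$. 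Hence computing $P_A$ solves $\langle b,c\rangle$, and in particular decides whether $b$ and $c$ are disjoint, so the classical $\Omega(k)$ lower bound for disjointness forces $cc(P_A)=\Omega(k)$. The argument of Claim~\ref{ccicbiggerccaclaim} (run any candidate implementation on $(v_A,v_B,v_C)$ and on $(0,v_B,v_C)$, then subtract) lifts this to $cc_{IC}(f_k)\ge cc(P_A)/2=\Omega(k)$.

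The remaining and most delicate step is $|\Ima P_A|=\mathcal{O}(k)$. A direct expansion of Myerson's formula on each bucket gives $P_A\big|_{r_A\in[j,j+1)}=\binom{j}{2}+jS_j-T_{j-1}$, with $T_{j-1}=\sum_{i<j}S_i$, and a naïve count of this expression yields $\Theta(k^2)$ distinct values. The construction therefore has to be tuned so that payments across different pairs $(j,T_{j-1})$ collide. I expect this to be the main obstacle: one has to adjust the boundary conventions and the $w_A$-profile so that the contribution of the prefix sum $T_{j-1}$ is absorbed into the alternative index $j$, collapsing the realized payments into a linear-sized set (for example by flattening $w_A$ across bands of alternatives, or by re-indexing the buckets so that moving up one bucket exactly cancels one unit of $T$), while still preserving the disjointness reduction from the previous step.
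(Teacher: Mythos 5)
Your construction is, up to a shift of indices, exactly the one in the paper (Alice with $w_A(a_i)=i$ on an interval, Bob and Charlie holding $k$-bit strings with constant weights, the chosen alternative determined by whether a single bit of $\lfloor r_A\rfloor$ intersects), and your first two steps match the paper's proof: the $\Theta(\log k)$ bound on $cc(f_k)$ is argued identically, and your reduction of the normalized Myerson payment to (unique recovery of) the bit-intersection count, lifted to $cc_{IC}(f_k)\ge cc(P_A)/2$ via the two-run subtraction of Claim~\ref{ccicbiggerccaclaim}, is the same argument as Lemmas~\ref{multiequalccicbig} and~\ref{lemmaallmechanismsdifficult} (the paper pads with a forced non-intersecting bit so the outcome is fixed, while you disentangle the last bit from the prefix sum via its large coefficient; both are fine).

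Where you stall, however, no new idea is actually needed, and your proposed ``tuning'' of $w_A$ is not what the paper does. Your own expansion --- payment on bucket $j$ is an explicit function of $j$ and of the prefix intersection count $T_{j-1}$ --- is precisely Lemma~\ref{helperlemma}: for each alternative $j$ and each intersection size $t\in\{0,\ldots,j\}$ there are at most two possible prices, so there are $\mathcal{O}(k)$ prices \emph{per alternative} and $\mathcal{O}(k^2)$ overall, which is exactly what the paper proves (Corollary~\ref{okpaymentscorollary}) and then uses to conclude Theorem~\ref{nopolylogthm}. The literal ``$\Ima P_A=\mathcal{O}(k)$'' in the statement is not established by the paper's proof either, and for this construction the total image really is $\Theta(k^2)$ (the achievable prices for alternative $j$ fill a range of length about $j$, and these ranges tile an interval of length $\Theta(k^2)$); the claim should be read as the per-alternative price count (in the spirit of $P_f$ from Section~\ref{hardnessnumberofpaymentssec}), or simply as ``$\log|\Ima P_A|=\mathcal{O}(\log k)=poly(cc(f_k))$,'' which is all that the theorem's message --- few payments yet exponential $cc_{IC}$ --- requires. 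So you should drop the speculative modifications (flattening $w_A$ across bands risks destroying exactly the payment differences your disjointness reduction relies on) and instead finish by the counting argument you already wrote down.
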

\begin{proof}
We describe the function $f_k$. For every integer $k\ge 1$, we define the set of alternatives as $\mathcal{A}=\{0,1,...,k\}$. There are three players, Alice, Bob and Charlie, with single parameter domains. Alice's private information is $r_A\in [0,k-1]$ and her value for each alternative $a$ is $r_A\cdot a$, i.e., $w_A(a)=a$ for every alternative. The private information of Bob and Charlie is $r_B,r_C\in \{0,1\}^k$.
Their values for all alternatives are identical and equal to the integer representations of their private information, so we use $v_B$ and $r_B$ interchangeably, and the same applies to $v_C$ and $r_C$. 
 Denote with $v_B(j)$ and $v_C(j)$ the $j'$th bits of $v_B$ and of $v_C$. 
   We define $f_k\colon V_A\times V_B\times V_C \to \mathcal{A}$ as follows. For every $v_A \in V_A,v_B \in V_B,v_C \in V_C$:
		\begin{equation*}
		f_k(v_A,v_B,v_C)=
		\begin{cases}
		\lfloor r_{v_A} \rfloor +1   &\qquad 
		v_B(\lfloor r_{v_A}\rfloor) = v_C(\lfloor r_{v_A} \rfloor) = 1 \\
		\lfloor r_{v_A} \rfloor
		&\qquad \text{otherwise.}
		\end{cases} 
		\end{equation*} 
\begin{lemma}
	$f_k$ is monotone and hence implementable. 
\end{lemma}
\begin{proof}
	$f_k$ is clearly monotone with respect to Bob and Charlie. In order to prove monotonicity with respect to Alice, we fix $v_{A},v_{A}'\in V_A$, $v_B\in V_B$ and $v_c\in V_c$ such that $r_{v_{A}'}>r_{v_{A}}$. 
	  We want to show that $w_A(f_k(v_{A}',v_B,v_C))\ge w_A(f_k(v_{A},v_B,v_C))$. If $\lfloor r' \rfloor= \lfloor r \rfloor$, by definition $w_A(f_k(v_{A}',v_B,v_C))= w_A(f_k(v_{A},v_B,v_C))$ and we are done. Otherwise, we know that $\lfloor r_{v_{A}'} \rfloor > \lfloor r_{v_{A}} \rfloor$, so $\lfloor r_{v_{A}'} \rfloor \ge  \lfloor r_{v_{A}} \rfloor +1$. Therefore:
	\begin{equation*}
		w_{A}(f_k(v_{A}',v_{B},v_{C}))\underbrace{=}_{\substack{\text{$w_A$ is}\\ \text{the identity}\\\text{function}}}f_k(v_{A}',v_{B},v_{C})\ge \lfloor r_{v_{A}'} \rfloor \ge \lfloor r_{v_{A}} \rfloor +1 \ge f_k(v_{A},v_{B},v_{C})= w_{A}(f_k(v_{A},v_{B},v_{C}))
	\end{equation*}
\end{proof}
By Proposition \ref{myersonlemmainterval}, the fact that Alice's type space is an interval implies that the only normalized payment scheme that implements $f_k$ for Alice is:
\begin{equation} \label{alicepaymentdisjeq}
P_A (v_A,v_B,v_C)= r_{v_A}\cdot w_A(f_k(v_A,v_B,v_C)) - \int_{0}^{r_{v_{A}}}w_{A}(f_k(z\cdot w_A,v_{B},v_C))dz 
\end{equation}	
The following lemma is key in showing that $P_{f_{k}}$ is \textquote{small}. For all $v_B\in V_B$ and $v_C\in V_C$, denote $v_B^j$ and $v_C^j$ as the prefixes of length $j$ of $v_B$ and of $v_C$.
\begin{lemma} \label{helperlemma}
	Fix an alternative $j\in \mathcal{A}$, and an integer $t\in \{0,...,j\}$.
	Then, for all types $(v_B,v_C)\in V_B\times V_C$ where the intersection of $v_B^j$ and $v_C^j$ is of size $t$ and alternative $j$ is reachable from $(v_B,v_C)$, Alice's price for $j$ according to  the payment scheme $P_A$ is one of the following:
	\begin{enumerate}
		\item $j\cdot (j-1) - \frac{(j-1)\cdot (j-2)}{2} - t+1$. 
		\item $j^2 - \frac{j\cdot (j-1)}{2} - t$. 
	\end{enumerate}    
\end{lemma}
\begin{proof}
	Fix $t\in \{0,...,j\}$.
	Observe that there are \emph{at  most} two possible ways to reach alternative $j$. The first is when the $(j-1)$'th bits of $v_B$ and $v_C$ intersect and $\lfloor r_{v_A} \rfloor = j-1$.
	Denote the valuation $(j-1)\cdot w_A(\cdot)$ with $v_A$. In this case, the payment of Alice for all the types $v_B,v_C$ that satisfy those conditions is:
	\begin{align}
	P_A(v_A,v_B,v_C)  
	  &= (j-1)\cdot j - \int_{0}^{j-1}w_{A}(f_k(z\cdot w_A,v_{B},v_C))dz &\text{(plug in (\ref{alicepaymentdisjeq}))} \nonumber \\  
	&=j\cdot (j-1) 
	-\sum_{\substack{i=0 \\ \text{$i$'th bit} \\  \text{does not}\\ \text{ intersect}}}^{j-2} i  
	- \sum_{\substack{i=0 \\ \text{$i$'th bit} \\  \text{intersects}}}^{j-2} (i+1) \nonumber \\ 
	 &=j\cdot (j-1) -(\sum_{i=0}^{j-2}i) - (t-1) \label{tplugin}
	 \\ 
	 &= j\cdot (j-1) - \frac{(j-2)(j-1)}{2} - t + 1 \nonumber       
	\end{align}
	(\ref{tplugin}) holds because the $(j-1)$'th bits intersect and there are overall $t$ intersections in the $j$-prefixes of $v_B$ and of $v_C$. Thus, there are $t-1$ intersections in the $(j-1)$-prefixes, i.e., in indices $\{0,...,j-2\}$. 
	
	The second case is when the $j$'th bits of $v_B$ and $v_C$ do not intersect. This time, denote with $v_A$ the valuation $j\cdot w_A(\cdot)$. Similarly, the payment is:
\begin{align}
\label{disjpayment}
P_A(v_A,v_B,v_C) &=    
j^2 - \int_{0}^{j}w_{A}(f_k(z,v_{B},v_C))dz &\text{plug in (\ref{alicepaymentdisjeq})} \\
&=j^2 -\sum_{\substack{i=0 \\ \text{$i$'th bit} \\  \text{does not}\\ \text{ intersect}}}^{j-1} i  
- \sum_{\substack{i=0 \\ \text{$i$'th bit} \\  \text{intersects}}}^{j-1} (i+1) \nonumber \\  
 &= j^2 - \frac{j(j-1)}{2}-t \label{tplug2}
\end{align}
(\ref{tplug2}) holds because the $j$'th bits of $v_B$ and $v_C$ do not intersect and also do not belong in the prefixes of length $j$, $v_B^j$ and in $v_C^j$, so by definition there are $t$ intersections in indices $\{0,...,j-1\}$.  
\end{proof}
\begin{corollary} \label{okpaymentscorollary}
	$|\Ima P_A|= \mathcal{O}(k^2)$.
\end{corollary}	
\begin{proof}
Fix some alternative $j\in \mathcal{A}$. The size of the intersection of the first $j$ bits of $v_B$ and of $v_C$ is in the range $\{0,\ldots,j\}$, where $j\le k$. By Lemma \ref{helperlemma}, every such intersection size induces at most two possible prices for alternative $j$. Thus, the overall number of prices for alternative $j$ is at most $2k+2$. There are overall $k+1$ alternative, so indeed $|\Ima P_A|=\mathcal{O}(k^2)$. 
\end{proof}
\begin{lemma} \label{multiequalccicbig}
	Computing $P_A$ requires $\Omega(k)$ bits of communication. 
\end{lemma} 
\begin{proof}
We reduce from disjointness with $k-1$ bits. Let Bob's type be the input of the first player in the disjointness problem with extra zero bit at the end, and let Charlie's type be the input of the second player with an extra zero bit at the end. Let Alice's type be $v_A=(k-1)\cdot w_A$, i.e. $r_{v_{A}}=k-1$. Since the $(k-1)$'th bits are by construction not intersecting, alternative $k-1$ is chosen. If there are no intersecting bits in the $(k-1)$-disjointness problem, then by equation (\ref{disjpayment}), the payment is $(k-1)^2 - \frac{(k-1)(k-2)}{2}$. If there is an intersecting bit, then the payment is strictly smaller. Thus, computing the payment is at least as hard as computing disjointness with $k-1$ bits, and the lemma follows.       
\end{proof}
\begin{lemma} \label{lemmaallmechanismsdifficult}
	$2\cdot cc_{IC}(f_k)\ge cc(P_A)$. 
\end{lemma}
\begin{proof}
	Let $\Pi^{M^\ast}$ be the most efficient protocol that implements $f_k$, i.e., $cc_{IC}(f_k)=cc(M^{\ast})$. Fix the valuations $v_A,v_B,v_C\in V_A\times V_B \times V_C$.
	  For the payment of Alice, denote Alice's payment scheme according to $\Pi^{M^\ast}$ with $P^{\ast}_A$. Denote Alice's valuation when $r_A=0$ with $v_0$.
	  The uniqueness of $P_A$ as a normalized payment scheme allows us to use a similar argument to the proof of Claim \ref{ccicbiggerccaclaim}, and get that $P_A(v_A,v_B,v_C)=P_A^{\ast}(v_A,v_B,v_C)-P_A^{\ast}(v_0,v_B,v_C)$. Thus, $P_A$ can be computed using two calls to   $\Pi^{M^\ast}$.
\end{proof}

We now finish the proof of Theorem \ref{nopolylogthm}. $cc(f_k)\le \log(k)+2$, because $f_k$ can be easily computed by a protocol where Alice sends $\lfloor r_{v_A} \rfloor$ using $\log k$ bits, and Bob and Charlie send back their $\lfloor r_{v_A} \rfloor$'th bits. Also, there are $k$ alternatives, so by standard communication complexity arguments,  $cc(f)\ge \log k$.  Combining Lemmas \ref{multiequalccicbig} and \ref{lemmaallmechanismsdifficult}, we get that $cc_{IC}(f_k)=\Omega(k)$.  By Corollary \ref{okpaymentscorollary}, $|\Ima P_A|= \mathcal{O}(k^2)$, which completes the proof. 

\end{proof}

\subsection{Hardness of Welfare Maximizing Mechanisms}\label{vcgsection}
In Subsections \ref{hardnessnumberofpaymentssec} and \ref{polylogsec}, we gave two examples for social choice functions for which there is an exponential gap between computing the output and payment computation. 
In this section we show that there are natural social choice functions that exhibit this exponential gap. 

The natural social choice function that we construct is simply a welfare maximizer in a multi-unit auction with $m$ items. We use this multi-unit auction instance to show another gap: between the approximation ratio of truthful and non-truthful algorithms. For this instance, non-truthful algorithms achieve the optimal welfare  with $\mathcal{O}(\log m)$ communication (approximation ratio of $1$), whereas their truthful counterparts achieve the optimal welfare only if they use at least $\Omega(m)$ communication. It means, that all truthful algorithms with running time $o(m)$ have approximation ratio which is strictly less than $1$.  
 
 In a multi-unit auction with $m=2^k$ items, all items are identical and values of players are determined solely by the number of items they get: $v\colon \{0,\ldots,m\}\to \mathbb{R_+}$. The valuations are monotone ($l>j$ implies $v(l)\ge v(j)$) and normalized ($v(0)=0$).\footnote{Note that we use the term \textquote{normalized} to describe two different notions. The first one is a property of mechanisms (bidding zero guarantees a payment of zero) and the second one is a property of multi-unit valuations ($v(0)=0$).}  

\begin{theorem}\label{vcgtheorem}
For every integer $k\ge 1$,	there exists a multi-unit auction instance with $m=2^k$ items such that a welfare maximizing allocation can be computed with $\mathcal{O}(k)=\mathcal{O}(\log m)$ bits, but every truthful mechanism that maximizes the welfare requires at least $exp(k)=\Omega(m)$ communication. As a corollary, there exists a welfare-maximizing allocation rule $f$ such that $cc_{IC}(f)=\exp(cc(f))$.  
\end{theorem}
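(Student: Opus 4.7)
The plan is to reuse the function $f_k$ from Section \ref{hardnessnumberofpaymentssec} by realizing it as the welfare-maximizing allocation rule of an explicit two-player multi-unit auction with $m = 2^{k}$ items. I identify the alternative $a_j \in \mathcal{A} = \{a_0, \ldots, a_m\}$ with the outcome \emph{Alice gets $j$ items and Bob gets $m - j$}, and I keep Alice's valuation exactly as in Section \ref{hardnessnumberofpaymentssec}: $v_A(j) = r_A \cdot (|\mathcal{A}|^{4jk} - 1)$ for $r_A \in [0, 2^{k+1} - 1]$, so that Alice remains single parameter with the same $w_A$. Bob's valuation is parametrized by $c \in \mathcal{C}$; writing $T_j := \sum_{i < j} c_i$, I set $v_B^c(0) := 0$ and, for $j = 1, \ldots, m$,
\begin{equation*}
v_B^c(m - j + 1) - v_B^c(m - j) \;:=\; T_j \cdot \bigl(w_A(a_j) - w_A(a_{j-1})\bigr).
\end{equation*}
Since $T_j \ge 0$ and $w_A$ is strictly increasing in $j$, every marginal is non-negative, so $v_B^c$ is a legal monotone and normalized multi-unit valuation for each $c \in \mathcal{C}$.

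The core step is to verify that the welfare-maximizing allocation under $(v_A, v_B^c)$ coincides with $f_k(v_A, c)$, i.e., Alice receives $j$ items precisely when $r_A \in [T_j, T_{j+1})$ (with the conventions $T_0 := 0$ and $T_{m+1} := 2^{k+1}$). By our choice of marginals the welfares telescope to
\begin{equation*}
\mathrm{welfare}(j) - \mathrm{welfare}(j') \;=\; \sum_{i=j'+1}^{j}\, (r_A - T_i)\bigl(w_A(a_i) - w_A(a_{i-1})\bigr) \qquad (\text{for } j' < j),
\end{equation*}
and the symmetric identity holds for $j' > j$. Since the $T_i$'s are strictly increasing in $i$ and $w_A$ is strictly increasing, whenever $r_A \in [T_j, T_{j+1})$ every summand has the sign that makes $a_j$ weakly dominate every other alternative.

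With the embedding in place, computing a welfare-maximizing allocation is easy: Alice transmits $\lfloor r_A \rfloor$ using $k + 1 = O(\log m)$ bits, and Bob (who knows $c$) locally computes the welfare-maximizing $j$ and sends it back, giving $cc(f) = O(\log m)$. For the lower bound, Alice is still single parameter, so Proposition \ref{myersonlemmainterval} pins down her normalized payment uniquely and Claim \ref{ccicbiggerccaclaim} still reduces $cc(P_A)$ to $2\cdot cc_{IC}(f)$. Forcing $r_A = 2^{k+1} - 1$ yields $a_m$ (since $c_m \ge 1$ for every $c \in \mathcal{C}$), and the corollary of Claim \ref{claimccic} carries over verbatim because its proof uses only $w_A$ and the partition $c$, not the particular form of $v_B$. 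Distinct $c, c' \in \mathcal{C}$ therefore induce distinct payments for $a_m$, so $|\Ima P_A| \ge |\mathcal{C}|$ and $cc_{IC}(f) = \Omega(\log |\mathcal{C}|) = \Omega(2^k) = \Omega(m)$.

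The main obstacle is the welfare-dominance check for non-adjacent alternatives: one must rule out that the inductively defined $v_B^c$ makes some $j'$ with $|j' - j| \ge 2$ beat the target $a_j$. The telescoping identity handles this cleanly precisely because the $T_i$'s are strictly increasing and $w_A$ is strictly increasing, forcing all intermediate contributions to share the same sign; once this step is pinned down, everything else reduces to the machinery already built in Section \ref{hardnessnumberofpaymentssec}.
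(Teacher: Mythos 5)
Your construction is essentially the paper's in disguise: the marginals you assign to Bob, $v_B^c(m-j+1)-v_B^c(m-j)=T_j\bigl(w_A(a_j)-w_A(a_{j-1})\bigr)$, make $v_B^c$ coincide exactly with the paper's valuation $\hat{v}_B(i)=P_A(a_m,v_B)-P_A(a_{m-i},v_B)$ (the Myerson threshold payments of $f_k$), and your telescoping check that the welfare maximizer reproduces $f_k$ is a correct, somewhat more elementary route to the same embedding. The upper bound $cc(f)=O(k)$ and the counting argument via Claim \ref{ccicbiggerccaclaim} and Claim \ref{claimccic} are fine, and together they do establish the \emph{corollary}: for the particular welfare-maximizing rule $f$ you define, $cc_{IC}(f)=\exp(cc(f))$.

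The genuine gap is the first, stronger part of the theorem: \emph{every} truthful mechanism that maximizes welfare on this instance needs $\Omega(m)$ communication. Welfare maximization does not pin down a unique allocation rule -- at the tie points $r_A=T_j$ two alternatives achieve the same welfare, so there are many welfare-maximizing functions differing in tie-breaking, and your argument only bounds $cc_{IC}$ of the one that agrees with $f_k$. Concretely, your assertion that bidding $r_A=2^{k+1}-1$ ``yields $a_m$'' fails for some welfare maximizers: when $c_m=1$ we have $T_m=2^{k+1}-1=r_A$, alternatives $a_{m-1}$ and $a_m$ tie, and a legitimate welfare-maximizing rule may output $a_{m-1}$, so the payment you read off need not be $P_A(a_m,\cdot)$. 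The paper spends most of its proof on exactly this issue (Claim \ref{hardnessofmech}), adding the extra valuation $\hat{v}_A^\ast$ to force $a_m$ under \emph{any} tie-breaking, establishing uniqueness of the normalized payment for an arbitrary welfare maximizer, identifying it with the VCG payment, and then showing any such mechanism can be used to recover the hard payment $P_A$ with only $cc(M^\ast)+cc(f_k)$ extra bits. In your purely single-parameter instance the repair is not hard -- any welfare maximizer is monotone, so Myerson's formula still gives its unique normalized payment, which agrees with the threshold payment away from ties; restricting the counting to vectors $c$ with $c_m\ge 2$ (still $\binom{2^{k+1}-2}{2^k}$ many, so the $2^k$ bound survives) removes the tie at the top bid -- but as written the proposal neither notices the quantifier over all welfare-maximizing mechanisms nor supplies this argument.
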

\begin{proof}
	The proof structure is as follows. First, we describe the multi-unit auction instance and prove that the valuations of players are monotone and normalized. We proceed by presenting a welfare-maximizing function $\hat{f}_k$ for it, which is based on the social choice function $f_k$ described in Section \ref{hardnessnumberofpaymentssec}. We show that $\hat{f}_k$ achieves the optimal welfare and can be computed using \textquote{few} bits (Claim \ref{welfaremaximizerclaim}).
	The next step is showing hardness of truthful algorithms for this instance. There might be more than one social choice function that maximizes the welfare, where functions differ by their tie-breaking rule, i.e. by which alternative is chosen whenever more than one alternative yields the maximum welfare. Thus, proving that implementing $\hat{f}_k$ is \textquote{hard} is not enough. A stronger claim is needed: that \emph{every} truthful welfare-maximizing mechanism for this instance requires $\exp(k)=\Omega(m)$ bits (Claim \ref{hardnessofmech}).

	 The instance is as follows. It is based on the construction of $f_k:V_A\times V_B\to \mathcal{A}$ in Subsection \ref{hardnessnumberofpaymentssec}.  The set of alternatives is $\mathcal{A}=\{a_0,\ldots,a_{2^k}\}$ for this instance as well. Alternative $a_i$ now stands for an allocation where Alice wins $i$ items and Bob wins $m-i$ items, where $m=2^{k}$. 
	We now describe the domains of the players, $\hat{V}_A$ and $\hat{V}_B$. Alice's domain in the instance of Subsection \ref{hardnessnumberofpaymentssec} is $V_A=\{r\cdot w_A(\cdot) \hspace{0.25em}|\hspace{0.25em}r\in [0,2^{k+1}-1] \}$ where $w_A(a_i)=|\mathcal{A}|^{4ik}-1$. For each $v_{A}\in V_{A}$, we include  the valuation $\hat{v}_A$ where $\hat{v}_A(i)=r_{v_A}\cdot w_A(a_i)$. Observe that for each $i$ we have that $v_A(a_i)=\hat{v}_A(i)$.

	Similarly, we define a valuation $\hat v_B\in \hat V_B$ for each $v_B\in V_B$. $\hat v_B$ relies on $P_A(\cdot,v_B)$, i.e., on Alice's payment function  induced by the normalized implementation of the function $f_k$. In order to define the valuation $\hat{v}_{B}$ that $v_B$ induces, we reformulate $P_A(\cdot,v_B)$ to be a function of the alternative chosen and of $v_B$ (using the taxation principle).  Note that $f_k$ is player decisive for Alice, because for every alternative $a_i$ and every $v_B\in V_B$, there is some  $v_A\in V_A$ such that $f_k(v_A,v_B)=a_i$.\footnote{\label{footnote}We remind that $v_B$  partitions the integers in $V_A$ to alternatives in a way that guarantees that every alternative has some $v_A$ that reaches it. This is due to the fact that in Section \ref{hardnessnumberofpaymentssec}, we defined the set $\mathcal{C}$ such that every $c\in \mathcal{C}$ satisfies  $c_i\ge 1$ for all $i\in \{0,\ldots,|\mathcal{A}|-1\}$.}	Hence,  $P_A(\cdot,v_B)$ is well defined for all $a_i\in \mathcal{A}$. 
	Fix $v_B\in V_B$. Include in $\hat{V}_B$ the valuation $\hat{v}_B$ which is defined as follows: for every $i\in\{0,\ldots,m\}$, $\hat{v}_B(i)=P_A(a_m,v_B)- P_A(a_{m-i},v_B)$. Observe that $v_A\to \hat{v}_A$ and $v_B\to \hat{v}_B$ are bijections. 
	We conclude the construction of the domains by adding another valuation to $\hat{V}_A$,  $\hat{v}_A^\ast:\{0,\ldots,m\}\to\mathbb{R}$. 
	\begin{equation} \label{specialvaluationeq}
	\hat{v}_A^\ast(i)=\begin{cases}
	\max_{\hat{v}_B\in \hat{V}_B }\{\hat{v}_B(m)\} + 1 &\qquad i=m \\
	0 &\qquad i\in\{0,\ldots,m-1\}
	\end{cases}
	\end{equation}
As we will prove later on, we add this valuation to $\hat{V}_A$ because if Alice bids it, she \textquote{forces}  every optimal algorithm for this auction to allocate all $m$ items to her, no matter what is  the valuation of Bob or the tie-breaking rule of the algorithm.   

We now explain why all the valuations in $\hat{V}_A$ and in $\hat{V}_B$ are monotone and normalized. It is true for $\hat{V}_A$ because $w_A$ is monotone in the index of the alternative, and $\hat{v}_A(0)=v_A(a_0)=r_{v_A}\cdot w_A(a_0)=0$. Also, $\hat{v}_A^\ast$ is monotone and normalized. Regarding Bob, fix a type $\hat{v}_B\in \hat{V}_B$. By construction, $\hat{v}_{B}(0)=P_A(a_m,v_B)-P_A(a_m,v_B)=0$. As for monotonicity, we show that for all $j\in \{0,\ldots,m-1\}$: 
\begin{align} \label{truthfuleqvcg}
\hat{v}_B(j+1) > \hat{v}_B(j) \iff
P_A(a_m,v_B)- P_A(a_{m-j-1},v_B) &>
P_A(a_m,v_B)- P_A(a_{m-j},v_B)     \\ \iff 
P_A(a_{m-j},v_B) &> P_A(a_{m-j-1},v_B) \nonumber 
\end{align}
Recall that $P_A$ implements $f_k$ which is player decisive for Alice, and that every valuation in $V_A$ is strictly monotone in the index of the alternative chosen. Therefore, the truthfulness of $P_A$ implies that $P_A(\cdot,v_B)$ must be strictly monotone in the index of the alternative, so $P_A(a_{m-j},v_B) > P_A(a_{m-j-1},v_B)$ holds.

In order to prove the theorem, we need the following technical lemma. 
	  	   \begin{lemma}\label{technicallemma}
	  	   	Every welfare maximizing function $f:\hat{V}_A\times \hat{V}_B\to \mathcal{A}$ satisfies that:
	  	   	\begin{enumerate}
	  	   		\item \label{case1} For all $\hat{v}_B\in \hat{v}_B$, $\hat{v}_A\equiv 0$ implies that  $f(\hat{v}_A,\hat{v}_B)=a_0$.
	  	   		\item \label{case2} For all $\hat{v}_B\in \hat{v}_B$, $f(\hat{v}_A^\ast,\hat{v}_B)=a_m$. (where $\hat{v}_A^\ast$ is the valuation described in (\ref{specialvaluationeq}))  \end{enumerate}
	  	   	In words, every welfare maximizing function satisfies that Alice gets no items at all if she bids her zero valuation, and she gets all $m$ items if she bids $\hat{v}_A^\ast$.
  	   		\end{lemma}
  	   		\begin{proof}
  	   			We want to show that if $\hat{v}_A\equiv 0$, then $a_0$ is the \emph{only} alternative that maximizes the welfare, and that the same holds for $\hat{v}_A^\ast$ and $a_m$. 

		For all $\hat{v}_B\in \hat{V}_B$, if Alice bids the constant valuation $\hat{v}_A\equiv 0$, then her value is  unaffected by the number of items that she gets, whilst Bob's valuation is strictly increasing in the number of items (see (\ref{truthfuleqvcg})). Hence, the optimal welfare is achieved \emph{solely} by allocating all items to Bob, i.e. outputting $a_0$. 
	
		As for the other case,  if Alice bids $\hat{v}_A^\ast$ and Bob bids some $\hat{v}_B\in \hat{V}_B$, the welfare that alternative $a_m$ achieves is:
		\begin{equation}\label{mwelfare}
		\hat{v}_A^\ast(m)+ \underbrace{\hat{v}_B(0)}_{=0} \underbrace{=}_\text{by (\ref{specialvaluationeq})} \max_{\hat{v}_B\in \hat{V}_B }\{\hat{v}_B(m)\} + 1  
		\end{equation}
		For every $i\in \{0,\ldots,m-1\}$, the welfare that $a_i$ obtains is:
		\begin{equation}\label{iwelfare}
		\underbrace{\hat{v}_A^\ast(i)}_\text{$=0$, by (\ref{specialvaluationeq})} + \hat{v}_B(m-i) \le \hat{v}_B(m) < \max_{\hat{v}_B\in \hat{V}_B }\{\hat{v}_B(m)\} + 1    
		\end{equation}
		Combining (\ref{mwelfare}) and (\ref{iwelfare}), we get that  $a_m$ is the \emph{only} alternative that achieves the maximal welfare.
  	   		\end{proof}
		\begin{claim} \label{welfaremaximizerclaim}
			There exists a social choice function $\hat{f}_k:\hat{V}_A\times \hat{V}_B\to\mathcal{A}$ such that:
			\begin{enumerate}
				\item $\hat{f_k}$ is a welfare maximizer, i.e. for every $\hat{v}_{A}\in \hat{V}_{A}, \hat{v}_{B}\in \hat{V}_{B}$:
				\begin{equation*}
				\hat{f}_k(\hat{v}_{A},\hat{v}_{B}) \in \argmax_{i\in \{0,1,\ldots,m\}} \hat{v}_{A}(i) +  \hat{v}_{B}(m-i)    
				\end{equation*}
				\item $cc(\hat{f}_k)=\mathcal{O}(k)$. 
			\end{enumerate}
		\end{claim}
	 
	\begin{proof} 
We define $\hat{f}_k$ as follows:
\begin{equation*} 
	\hat{f}_k(\hat{v}_A,\hat{v}_B) = \begin{cases}
	a_{m} &\qquad \hat{v}_A= \hat{v}_A^\ast \\
	f_k(v_A,v_B) &\qquad \text{otherwise.}
	\end{cases}
\end{equation*}
where $f_k$ is the social choice function described in Subsection \ref{hardnessnumberofpaymentssec}, $\hat{v}_A$ is induced by $v_A$ and $\hat{v}_B$ is induced by $v_B$. We first explain why $\hat{f}_k$ is a welfare maximizer.   

Fix $\hat{v}_{A} \in \hat{V}_{A}, \hat{v}_{B}\in \hat{V}_{B}$. By Lemma \ref{technicallemma}, if $\hat{v}_A=\hat{v}_A^\ast$,  outputting $a_m$ indeed achieves the optimal welfare. 
	If $\hat{v}_A\neq \hat{v}_A^\ast$, then we denote  $\hat{f}_k(\hat{v}_A,\hat{v}_B)=f_k(v_A,v_B)$ with $a_j$, i.e. Alice wins $j$ items and Bob wins $m-j$ items. We want to prove that for all $i\in \{0,1,\ldots,m\}$:
	\begin{equation*}
	\hat{v}_{A}(j)  + \hat{v}_{B}(m-j)\ge \hat{v}_{A}(i)  + \hat{v}_{B}(m-i)     
	\end{equation*}
	   
	Recall that $P_A$ is the  payment function of Alice in the normalized implementation of $f_k$, which is player decisive for Alice (see footnote \ref{footnote}). Hence, the fact that $P_A$ implements $f_k$ truthfully implies that for all $i\in \{0,\ldots,m\}$:
		\begin{equation} \label{truthfuleq}
		f_k(v_A,v_B)=a_j\implies v_{A}(a_j)- P_A(a_j,v_{B}) \ge v_{A}(a_i)- P_A(a_i,v_{B})
		\end{equation}
		 By the  construction of  $\hat{v}_B$, $P_A(a_j,v_B)=
		 P_A(a_m,v_B)-\hat{v}_B(m-j)$ and $P_A(a_i,v_B)=
		 P_A(a_m,v_B)-\hat{v}_B(m-i)$. We substitute those equalities into (\ref{truthfuleq}):
		 \begin{equation*}
		 v_{A}(a_j)- P_A(a_m,v_B)+\hat{v}_B(m-j)  \ge v_{A}(a_i)- P_A(a_m,v_B)+\hat{v}_B(m-i)
		 \end{equation*}
		Recall that $v_A(a_j)=\hat{v}_A(j)$:
		\begin{equation*}
		\hat{v}_A(j)+\hat{v}_B(m-j)  \ge \hat{v}_A(i)+\hat{v}_B(m-i)
		\end{equation*}
		
%
		
which is the desired conclusion.
We now show that $cc(\hat{f}_k)=\mathcal{O}(k)$. Consider the following protocol: Alice sends a bit that specifies whether her valuation is $\hat{v}_A^\ast$ or not. If it is $\hat{v}_A^\ast$, both players know that the output is $a_m$. Otherwise, Alice and Bob translate their  multi-unit valuations $\hat{v}_A$,$\hat{v}_B$ to the valuations that induce them, $v_A$ and $v_B$, with no communication. Then, they execute the protocol of $f_k$ with $v_A,v_B$ and output the alternative.
By Theorem \ref{mostimportanttheorem}, $cc(f_k)=\mathcal{O}(k)$, so the total communication of the protocol is $\mathcal{O}(k)$ as well. 

	\end{proof}
Thus, a welfare-maximizing allocation can be computed non-truthfully with $\mathcal{O}(k)=\mathcal{O}(\log m)$ bits. It remains to show that achieving this optimum truthfully requires $exp(k-1)=\Omega(m)$ bits. 
\begin{claim} \label{hardnessofmech}
	Let $f_k^\ast:\hat{V}_A\times \hat{V}_B\to \mathcal{A}$ be a welfare-maximizing function. Then, every mechanism $M^\ast$ that implements $f_k^\ast$ requires $\Omega(2^{k})$ bits.  
\end{claim}
\begin{proof}[Proof of Claim \ref{hardnessofmech}]
	Fix a social choice function $f_k^\ast$ and a truthful mechanism for it $M^\ast=(f_k^\ast,P_A^\ast,P_B^\ast)$.
	We say that an auction mechanism is normalized if bidding a constant valuation of $\vec{0}$ guarantees a payment of $0$.\footnote{Similarly to the definition of normalization for single parameter domains in Section \ref{setupmyersonsection}.}
	We explain why we can assume without loss of generality that $M^\ast$ is normalized.
	All the valuations of Bob in $\hat{V}_B$ are strictly increasing, so $\vec{0}\notin \hat{V}_B$, and $P_B^\ast$ is trivially normalized. Regarding $P_A^\ast$, observe that 
	$P_A^\ast(\hat{v}_A,\hat{v}_B)-P_A^\ast(\vec{0},\hat{v}_B)$ is normalized and truthful (due to the same arguments applied in Claim \ref{ccicbiggerccaclaim}).
	Thus, a normalized mechanism can be obtained by executing twice the protocol of $M^\ast$, so it suffices to prove that $cc(M^\ast)= \Omega(2^k)$ where $M^\ast$ is normalized. 
	
	We do so by showing that $cc(P_A)\le cc(M^\ast)+cc(f_k)$, where $f_k$ is the function described in Subsection \ref{hardnessnumberofpaymentssec}, and $P_A$ is Alice's payment according to its normalized implementation. Since $cc(P_A)\ge 2^k$ (by Corollary \ref{ppaboundcorollary}) and $cc(f_k)=\mathcal{O}(k)$ (by Theorem \ref{mostimportanttheorem}), $cc(P_A)\le cc(M^\ast)+cc(f_k)$ implies that $cc(M^\ast)= \Omega(2^k)$. 
	Intuitively, we prove that $cc(P_A)\le cc(M^\ast)+cc(f_k)$ by
	showing that $M^\ast$ can be used to extract Bob's type, which is  \textquote{almost} sufficient
	for the computation of $P_A(v_A,v_B)$.   
	

To this end, we explain why  $f_k^\ast:\hat{V}_A\times \hat{V}_B\to \mathcal{A}$
	 has a \emph{unique} normalized payment scheme. Note that $f_k^\ast$ is implementable because it is a welfare maximizing function. In particular, if we restrict $f_k^\ast$ to the domain $\hat{V}_A/\{\hat{v}_A^\ast\}$, it remains implementable. 
	We remind that the domain $\hat{V}_A/\{\hat{v}_A^\ast\}$ contains the same valuations as the domain $V_A =\{V_A=\{r\cdot w_A(\cdot) \hspace{0.25em}|\hspace{0.25em}r\in [0,2^{k+1}-1]\}$ (because for all $i\in\{0,\ldots,m\}$, $\hat{v}_A(i)=v_A(a_i)$). Thus, we can apply Myerson's Lemma (Proposition \ref{myersonlemmainterval}) on $f^\ast_k:\hat{V}_A/\{\hat{v}_A^\ast\}\times \hat{V}_B\to \mathcal{A}$ and get that $f_k^\ast$ has a unique normalized payment scheme for Alice, whenever the valuations of Alice and Bob are drawn from the domains   $\hat{V}_A/\{\hat{v}_A^\ast\}$ and $\hat{V}_B$. It implies that $f_k^\ast$ has a unique normalized payment scheme also after including  $\hat{v}_A^\ast$ in Alice's domain.\footnote{Because adding another valuation only
		restricts the set of truthful payment schemes, so it cannot increase their number.  Hence, the facts that $f_k^\ast$ satisfies uniqueness of payments for $\hat{V}_A/\{\hat{v}_A^\ast\}\times \hat{V}_B$ and that $f_k^\ast$ is implementable for $\hat{V}_A\times \hat{V}_B$ jointly imply  that $f_k^\ast:\hat{V}_A\times \hat{V}_B\to\mathcal{A}$ satisfies uniqueness of payments as well.} 


Recall that $f_k^\ast$ maximizes the welfare, so it is well known to be implemented by the payment scheme $P_{vcg}(j,\hat{v}_B)=\hat{v}_B(m)-\hat{v}_B(m-j)$.
 We now show that $P_{vcg}$ is normalized, which means that  $P_{vcg}\equiv P_A^\ast$, because $f_k^\ast$ has a unique normalized payment. 
By Lemma \ref{technicallemma}, if Alice bids $\hat{v}_A\equiv 0$, then $f_k^\ast$ necessarily outputs $a_0$, which means that Alice wins $0$ items.   
Therefore, her payment according to $P_{vcg}$ is $P_{vcg}(0,\hat{v}_B)=\hat{v}_B(m)-\hat{v}_B(m)=0$, so $P_{vcg}$ is normalized. Hence,  $P_A^\ast(a_j,\hat{v}_B)=P_{vcg}(j,\hat{v}_B)=\hat{v}_B(m)-\hat{v}_B(m-j)$. Thus, for $j=m$:
\begin{equation*}
P_A^\ast(a_m,\hat{v}_B)= \hat{v}_B(m)- \underbrace{\hat{v}_B(0)}_{=0} \underbrace{=}_{\substack{\text{by construction}\\\text{of $\hat{v}_B$}}} P_A(a_m,v_B)-P_A(a_0,v_B)  
\end{equation*}
We now explain why $P_A(a_0,v_B)=0$. Recall that $f_k$ is player decisive for Alice and monotone,
so bidding zero ($r_{v_A}=0$) guarantees that $a_0$ is chosen. Also, according to the formula of $P_A$   (specified in (\ref{alicepayments})), we get that $P_A(v_A,v_B)=0$ if $r_{v_{A}}=0$. Thus, a zero bid implies both that $a_0$ is chosen and that $P_A(v_A,v_B)=0$.
Hence, by the taxation principle, $P_A(a_0,v_B)=0$. 
We also remind that $m=2^k$. We get that:
\begin{equation}\label{payment}
P_A^\ast(a_m,\hat{v}_B)= P_A(a_m,v_B)- \underbrace{P_A(a_0,v_B)}_{=0}= P_A(a_{2^k},v_B)
\end{equation}

We now a describe a protocol for $P_A$ with $cc(M^\ast)+cc(f_k)$ bits. Let $\Pi^{M^\ast}$ and $\Pi^{f_k}$  be protocols of $M^\ast$ and of $f_k$, respectively. 

Fix $v_A,v_B$. Execute  $\Pi^{M^\ast}(\hat{v}_A^\ast,\hat{v}_B)$, where $v_A^\ast$ is the valuation specified in (\ref{specialvaluationeq}) and $\hat{v}_B$ is the valuation induced by $v_B$. $M^\ast$ maximizes the welfare, so by Lemma \ref{technicallemma},  $f_k^\ast(\hat{v}_A^\ast,\hat{v}_B)=a_m$. 
Thus, by executing $M^\ast(\hat{v}_A^\ast,\hat{v}_B)$, we get that alternative $a_m$ is chosen, so by equation (\ref{payment}), Alice pays   $P_A(a_{2^k},v_B)$. 
By Claim \ref{ppaboundcorollary}, each $v_B\in V_B$ induces a different value of $P_A(a_{2^k},v_B)$, so once Alice knows $P_A(a_{2^k},v_B)$, she knows Bob's valuation $v_B$ completely.
Next, Alice and Bob execute $\Pi^{f_k}(v_A,v_B)$.  
Afterwards, they both know  $f_k(v_A,v_B)$ and Bob's valuation. Thus, according to the taxation principle, they both know $P_A(f_k(v_A,v_B),v_B)=P_A(v_A,v_B)$, which completes the proof.     
\end{proof}
Note that Claim \ref{hardnessofmech} implies the corollary of Theorem \ref{vcgtheorem}.
All welfare-maximizing truthful mechanisms for this instance require $\Omega(2^k)$ bits, so in particular every mechanism that implements $\hat{f}_k$ requires $\Omega(2^k)$ bits. Thus, $cc_{IC}(\hat{f}_k)= \Omega(2^{k}) \approx \exp(cc(\hat{f}_k))$. 

\end{proof}


\section{Truthful in Expectation Mechanisms} \label{TIE-appendix-section}
Up until now we showed that the communication cost of ex-post implementations of social choice functions might be exponential comparing to output computation. However, we observe that in many domains the payments of the (deterministic) social choice function can be computed randomly so that the expected value of the payment equals the value of the (deterministic) ex-post payment scheme. If the players are risk neutral, this gives us a truthful-in-expectation implementation of the social choice function. For all the domains below we prove that $cc_{TIE}(f)\le poly(n,cc(f))$.

For single parameter domains, where the private information $R_i$ of each player is either an interval $[0,b_i]$ or a finite set with non-negative values, the computation of payment is based on the observation that to compute the expected value of the integral in Myerson's payment formula, it suffices to evaluate the value of the integral at a random valuation and know the type of the player $i$, $v_i$. One point of potential complication is that representing the type of player $i$ might be much more costly than computing $cc(f)$. We get around this problem by  essentially providing a ``similar'' type to $v_i$, which is based on the communication protocol.

We then extend our results to some multi-parameter domains. We first consider scalable domains: domains where for each constant $\lambda\in [0,1]$, if the type $v(\cdot)$ is in the domain, then so does $\lambda\cdot v(\cdot)$. We rely on a result of \cite{babaioff2015truthful} who show an integral-based payments formula similar to Myerson's for this domain. We rely on this formula in the sense that we compute a payment which is equal in expectation to it, similarly to the single parameter case. Finally, we show that scalable and convex domains are computationally equivalent.  
We begin by formally defining truthfulness in expectation.  
\begin{definition}
	Let $f\colon V_1\times \cdots\times V_n\to \mathcal{A}$ be a (deterministic) social choice function. A mechanism $M=(f,P)$ is truthful in expectation if for every player $i$, every $v_{-i}\in V_{-i}$ and every $v_i,v_i'\in V_i$:
	\begin{equation} \label{expectedrequirement}
	\mathbb{E}[v_i(f(v_i,v_{-i}))-P_i(v_i,v_{-i})] \ge \mathbb{E}[v_i(f(v'_i,v_{-i}))-P_i(v_i',v_{-i})] 
	\end{equation} 
	where the expectation is taken over the randomness of $P$. 
\end{definition}

We denote with $cc_{TIE}(f)$ the communication complexity of the optimal truthful-in-expectation implementation for $f$.

\subsection{Single Parameter Domains}
We are especially interested in truthful in expectation implementation for single parameter mechanisms, because they demonstrate an exponential gap between the communication complexities of deterministic truthfulness and  truthfulness in expectation. The gap is established by observing that the functions used in Section \ref{lowerboundsec} to derive lower bounds satisfy $cc_{IC}(f)=\exp(cc(f))$ and $cc_{TIE}(f)\le poly(cc(f))$, where the latter statement follows from Theorem \ref{truthfulinexpectationthm}. 
\begin{theorem} \label{truthfulinexpectationthm}
	Let $f\colon V= V_1\times \cdots\times V_n\to \mathcal{A}$ be an implementable social choice function over a single parameter domain where for every player $i$, $R_i$ is an interval $[0,b_i]$ such that $b_i\in \mathbb{R}$. 
	Then:
	\begin{equation*}
	cc_{TIE}(f)\le (n+1)\cdot  cc(f)
	\end{equation*}
	As a corollary, if $R_i$ is a finite set with non-negative values for every player $i$, $cc_{TIE}(f)\le (n+1)\cdot  cc(f)$.  
\end{theorem}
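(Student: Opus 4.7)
The plan is to use the standard trick of replacing Myerson's integral (\ref{myersonintegral}) by a single-sample unbiased estimator. For any reported $r_{v_i}\ge 0$ and any $v_{-i}$,
\begin{equation*}
\int_{0}^{r_{v_i}} w_i(f(z\cdot w_i,v_{-i}))\,dz \;=\; r_{v_i}\cdot \mathbb{E}_{z\sim U[0,r_{v_i}]}\bigl[w_i(f(z\cdot w_i,v_{-i}))\bigr],
\end{equation*}
so defining the randomized payment
\begin{equation*}
\tilde{P}_i(v_i,v_{-i}) \;=\; r_{v_i}\cdot w_i(f(v_i,v_{-i})) \;-\; r_{v_i}\cdot w_i(f(z_i\cdot w_i,v_{-i})),\qquad z_i\sim U[0,r_{v_i}],
\end{equation*}
yields $\mathbb{E}[\tilde{P}_i(v_i,v_{-i})] = P_i(v_i,v_{-i})$ for the unique normalized payment $P_i$ of Proposition \ref{myersonlemmainterval}. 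By linearity of expectation, the TIE inequality (\ref{expectedrequirement}) for $(f,\tilde P)$ reduces to the ex-post IC inequality for $(f,P)$, which holds because $f$ is implementable.

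Next I would describe the protocol. Let $\pi^f$ be an optimal protocol for $f$, and draw public coins $u_1,\dots,u_n\sim U[0,1]$; player $i$ locally sets $z_i=u_i\cdot r_{v_i}$, which is uniform on $[0,r_{v_i}]$. First the players execute $\pi^f(v_1,\dots,v_n)$ at a cost of $cc(f)$ bits, so that everyone learns $f(v)$. Then, for each $i$ in turn, they run $\pi^f$ once more with player $i$ executing the protocol as if her type were $z_i\cdot w_i$ while the other players use their true types ($cc(f)$ bits per player), revealing $f(z_i\cdot w_i,v_{-i})$. Since $w_i$ is public and player $i$ knows $r_{v_i}$, she can compute $\tilde P_i$ locally, giving total communication $(n+1)\cdot cc(f)$.

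The main obstacle, alluded to in the paragraph preceding the theorem, is that $z_i$ is a real number which need not admit a finite bit representation, so one might worry that ``running $\pi^f$ on input $z_i\cdot w_i$'' costs more than $cc(f)$ bits. The resolution is that player $i$ never transmits $z_i$; she only uses it internally to decide which $cc(f)$-bit message sequence to send in the simulated execution of $\pi^f$. By definition of $cc(f)$, the resulting transcript is of length at most $cc(f)$ regardless of the description length of the underlying type. In effect, the protocol substitutes for $v_i$ a ``similar'' type $z_i\cdot w_i$ whose relevant information in $\pi^f$ is entirely captured by a finite transcript.

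Finally, the corollary for finite $R_i$ reduces to the interval case. Assuming $0\in R_i$, set $b_i=\max R_i$ and extend $f$ to the interval-domain $\{r\cdot w_i : r\in[0,b_i]\}$ by $\tilde f(z\cdot w_i,v_{-i}) = f(r^*\cdot w_i,v_{-i})$ where $r^*=\max\{r\in R_i : r\le z\}$. Monotonicity of $f$ transfers to $\tilde f$, so $\tilde f$ is implementable, and any protocol for $f$ yields one for $\tilde f$ of identical cost because player $i$ maps $z$ to $r^*$ locally using the publicly known $R_i$. Applying the interval bound to $\tilde f$ and restricting back to $R_i$ gives $cc_{TIE}(f)\le (n+1)\cdot cc(f)$.
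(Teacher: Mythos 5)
There is a genuine gap, and it is precisely the point the paper's proof is built to handle. In your protocol, player $i$ herself draws $z_i=u_i\cdot r_{v_i}$ and then \emph{she} executes the second run of $\pi^f$ ``as if her type were $z_i\cdot w_i$''. Since only she knows $r_{v_i}$, the other players cannot verify which type she is simulating, and deviating there is profitable: keeping her first-run behavior truthful (so the outcome $f(v_i,v_{-i})$ and her value are unchanged), she can in the second run simulate the type $b_i\cdot w_i$ instead of $z_i\cdot w_i$; by monotonicity this weakly maximizes $w_i(f(\cdot,v_{-i}))$, so the subtracted term $r_{v_i}\cdot w_i(f(\cdot,v_{-i}))$ is weakly maximal and her realized payment is weakly (typically strictly) below $P_i(v_i,v_{-i})$. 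Hence honest behavior is not optimal and the resulting implementation is not truthful in expectation; the paper explicitly flags exactly this pitfall (``if we let player $i$ sample $z$ he could have potentially misreport the sample''). Your resolution of the representation obstacle (``player $i$ never transmits $z_i$, she only uses it internally'') is what reintroduces the incentive flaw: the real obstacle is that the payment-estimation run must be carried out \emph{without} player $i$, which requires the other players to know a usable proxy for her type.

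The paper's proof supplies exactly the missing ingredient: after the first run of $\pi$, all players know the leaf $L^\ast$ reached and an agreed-upon representative $v_i^\ast\in L^\ast_i$; by the rectangle (mixing) property $w_i(f(v_i^\ast,v_{-i}))=w_i(f(v_i,v_{-i}))$, so by truthfulness $P_i(v_i^\ast,v_{-i})=P_i(v_i,v_{-i})$ (Lemma \ref{samepaymentlemma}). Then a player $j\neq i$ samples $z\sim U[0,r_{v_i^\ast}]$ and the players in $N\setminus\{i\}$ simulate $\pi(z\cdot w_i,v_{-i})$, so player $i$ has no influence on her own payment beyond the outcome-determining first run, and the cost is still $(n+1)\cdot cc(f)$. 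To repair your argument you would need this substitution of $v_i$ by the leaf representative $v_i^\ast$ (or an equivalent device) rather than private local sampling by player $i$. A smaller point: in the corollary you assume $0\in R_i$, which is not given; the paper instead extends $f$ below $\min R_i$ to an alternative minimizing $w_i$, which keeps monotonicity and normalization without that assumption.
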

For the proof of Theorem \ref{truthfulinexpectationthm}, 
we obtain an unbiased estimator for an integral using uniform sampling, similarly to \cite{archer2004approximate} and  \cite{babaioff2015truthful}. Let $U[a,b]$ be the continuous uniform distribution over the interval $[a,b]$.
\begin{lemma} \label{sampleintegrallemma}
	Let $g:[a,b]\to \mathbb{R}$ be an integrable function. Define a random variable $R=(b-a)\cdot g(z)$, where $z\sim U[a,b]$. Then, $R$ is an unbiased estimator of $\int_{a}^{b}g(x)dx$. 
\end{lemma}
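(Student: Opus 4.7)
The plan is to compute $\mathbb{E}[R]$ directly from the definition of expectation for a continuous random variable and verify that it equals $\int_a^b g(x)\,dx$. Since $z \sim U[a,b]$ has density $\frac{1}{b-a}$ on the interval $[a,b]$ and is $0$ elsewhere, the standard law of the unconscious statistician gives $\mathbb{E}[g(z)] = \int_a^b g(x)\cdot\frac{1}{b-a}\,dx$, provided the integral exists — which it does by the integrability assumption on $g$.

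Concretely, I would pull the deterministic factor $(b-a)$ out of the expectation by linearity:
\[
\mathbb{E}[R] \;=\; \mathbb{E}\bigl[(b-a)\cdot g(z)\bigr] \;=\; (b-a)\cdot \mathbb{E}[g(z)] \;=\; (b-a)\cdot \int_{a}^{b} g(x)\cdot \frac{1}{b-a}\,dx \;=\; \int_{a}^{b} g(x)\,dx.
\]
That chain of equalities is the entire argument; no approximation, truncation, or limit passage is needed, so there is essentially no obstacle. The only thing worth flagging explicitly is that $\mathbb{E}[g(z)]$ is well-defined precisely because $g$ is assumed integrable on $[a,b]$, which justifies the use of the uniform density in the LOTUS step.

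Given how short this is, I would write it as a two-line derivation and not dress it up further. In the surrounding text (used later for payment computation) the important point to stress is not the lemma itself but the way it will be invoked: with $g$ chosen to be $w_i(f(z\cdot w_i,v_{-i}))$ and $[a,b] = [0, r_{v_i}]$, a single random sample of $z$ lets the protocol produce a value whose expectation equals the Myerson integral, which is the key ingredient enabling the polynomial bound on $cc_{TIE}(f)$ in Theorem \ref{truthfulinexpectationthm}.
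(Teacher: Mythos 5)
Your proof is correct and follows exactly the same route as the paper's: pull the constant $(b-a)$ out by linearity, apply the law of the unconscious statistician with the uniform density $\frac{1}{b-a}$ on $[a,b]$, and the factors cancel to give $\int_a^b g(x)\,dx$. Nothing is missing.
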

\begin{proof}
	\begin{align*}
	\mathbb{E}_z[(b-a)\cdot g(z)] &= (b-a) \cdot \mathbb{E}_z[g(z)] \\
	&= (b-a) \cdot \int_{-\infty}^{\infty}g(z)\cdot f_Z(z) dz & \text{(law of the unconcious statistician)} \\
	&= (b-a) \cdot \int_{a}^{b}g(z)\cdot \frac{1}{b-a} dz  & \text{($f_Z(z)=\frac{1}{b-a}$ for $z\in [a,b]$, and $0$ otherwise)} \\
	&= \int_{a}^{b}g(z) dz      
	\end{align*}
\end{proof}
\begin{proof}[Proof of Theorem \ref{truthfulinexpectationthm}]
	Denote a protocol of $f$ with $\pi$. We begin by running $\pi(v_1,..,v_n)$. Fix a player $i$. $V_i$ is an interval, so Proposition \ref{myersonlemmainterval} yields that $f$ is \emph{deterministically} implemented by: 
	\begin{equation} \label{myersonpayments}
	P_{i}(v_{i},v_{-i})=r_{v_{i}}\cdot w_{i}(f(v_i,v_{-i})) - \int_{0}^{r_{v_{i}}}w_{i}(f(z\cdot w_i,v_{-i}))dz
	\end{equation}
	Hence, in order to obtain a payment which is truthful in expectation it suffices to compute a payment scheme whose \emph{expected value}  is (\ref{myersonpayments}). 
	Each leaf in the protocol $\pi$ is a combinatorial rectangle, $L=L_1\times\cdots\times L_n$. 
	For each leaf $L$, the players agree in advance on a profile which belongs in the leaf, i.e.  $(v_1^L,..,v_n^L)\in L$. Denote the leaf that $(v_1,..,v_n)$ reaches with $L^\ast$, and denote its agreed upon type for player $i$ with $v_i^\ast$. 
	\begin{lemma} \label{samepaymentlemma}
		$P_i(v_i^\ast,v_{-i})=P_i(v_i,v_{-i})$.
	\end{lemma}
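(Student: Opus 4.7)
The plan is to combine two standard facts: (a) the protocol $\pi$ for $f$ partitions inputs into leaves that are combinatorial rectangles on which $f$ is constant; and (b) the mechanism $(f, P_i)$ given by Myerson's formula (\ref{myersonpayments}) is a deterministically truthful implementation, so the taxation principle applies.

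First I would observe that the leaf $L^\ast = L_1^\ast \times \cdots \times L_n^\ast$ reached by $(v_1,\dots,v_n)$ contains $v_i$ in its $i$-th coordinate set, and by the construction of the pre-agreed profile it also contains $v_i^\ast$ in that coordinate. Hence $(v_i^\ast, v_{-i})$ also reaches $L^\ast$. Since $\pi$ correctly computes $f$, the function $f$ is constant on every leaf, so
\begin{equation*}
f(v_i, v_{-i}) \;=\; f(v_i^\ast, v_{-i}) \;=:\; a.
\end{equation*}

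Second, Proposition \ref{myersonlemmainterval} guarantees that $(f, P_i)$ is truthful. The taxation principle therefore asserts that, for each fixed $v_{-i}$, the payment $P_i(\cdot, v_{-i})$ is a function only of the alternative chosen by $f$. Since $v_i$ and $v_i^\ast$ both yield the alternative $a$ against $v_{-i}$, they must yield the same payment, i.e., $P_i(v_i, v_{-i}) = P_i(v_i^\ast, v_{-i})$.

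There is no real obstacle here; the only point worth noting is that one should not try to compare the two Myerson integrals on the nose — they depend on $r_{v_i}$ and $r_{v_i^\ast}$ separately, and the cancellation is invisible from the formula alone. Invoking the taxation principle avoids this, collapsing the equality to the already-established coincidence of $f$-values at the two inputs. (If one insisted on an algebraic derivation, one could instead use monotonicity of $f$ in $r_i$ to conclude that $w_i\bigl(f(z\cdot w_i, v_{-i})\bigr)=w_i(a)$ for every $z$ strictly between $r_{v_i}$ and $r_{v_i^\ast}$, and then integrate; but the taxation-principle route is cleaner.)
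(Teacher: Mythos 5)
Your proof is correct and follows essentially the same route as the paper: both use the rectangle (mixing) property of the leaf $L^\ast$ to get that $(v_i^\ast,v_{-i})$ and $(v_i,v_{-i})$ yield the same outcome, and then conclude equal payments from truthfulness — your appeal to the taxation principle is just a packaged form of the paper's direct truthfulness argument.
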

	\begin{proof}
		By definition,
		$(v_i,v_{-i})\in L^\ast$ and $v_i^\ast \in L^\ast_i$, so by the mixing property $(v_i^\ast,v_{-i})\in L^\ast$. Thus, $w_i(f(v_i^\ast,v_{-i}))=w_i(f(v_i,v_{-i}))$, so from truthfulness we get that $P(v_i^\ast,v_{-i})=P(v_i,v_{-i})$.
	\end{proof}
	Hence, it suffices to compute a random variable with expectation $P_i(v_i^\ast,v_{-i})$. Note that $v_i^\ast$ is known to all players, due to the execution of $\pi$. Some player $j\neq i$ samples $z\sim U[0,r_{v_{i}^\ast}]$. Note that if we let player $i$ sample $z$ he could have potentially misreport the sample in order to increase the profit. All players in $N/\{i\}$ simulate $\pi(z\cdot w_i,v_{-i})$ in order to obtain $f(z\cdot w_i,v_{-i})$.  The output is:
	\begin{equation} \label{psample}
	\hat{P}=r_{v_{i}^\ast}\cdot w_i(f(v_i^{\ast},v_{-i}))- 
	r_{v_{i}^\ast} \cdot 
	w_i(f(z\cdot w_i,v_{-i}))  
	\end{equation}
	By Lemma \ref{sampleintegrallemma},  $\mathbb{E}[\hat{P}]=P_i(v_i^\ast,v_{-i})$. Due to the two executions of $f$, the players know all the components of $\hat{P}$. Hence, we obtained a truthful in expectation payment for player $i$ by making one extra call to $\pi$. Thus, truthful in expectation implementation of $f$ requires at most $(n+1)\cdot cc(f)$ bits. 
	
	For the corollary, let $R_i$ be a finite domain with non-negative values. Fix a player $i$ and a type $(v_1,\ldots,v_n)$. The players simulate $\pi(v_1,\ldots,v_n)$. Let $[0,b_i]$ be an interval that contains all the elements in $R_i$. 
	We extend $f$ to output for every $x \notin R_i$ the same alternative it assigns to the nearest $r_i$ which is smaller than $x$. For all $x$ smaller than $\min r_i$, the extension $f_{ext}$ always outputs an arbitrary alternative $a$ such that $w_i(a)$ is minimal. This extension preserves the monotonicity of $f$ and its new domain is an interval, so by Proposition \ref{myersonlemmainterval}, $f_{ext}$ is \emph{deterministically} implemented by: 
	\begin{equation*} \label{payments2}
	P_{i}(v_{i},v_{-i})=r_{v_{i}}\cdot w_{i}(f_{ext}(v_i,v_{-i})) - \int_{0}^{r_{v_{i}}}w_{i}(f_{ext}(z\cdot w_i,v_{-i}))dz
	\end{equation*}
	$f_{ext}(v)=f(v)$ for all $v\in V$, so 
	the truthfulness of the payment scheme for $f_{ext}$ implies truthfulness for $f$. Also, notice that $\pi$, the protocol of $f$, computes not only $f$, but also its extension. Thus, one extra call to $\pi$ is needed for the computation of a truthful in expectation payment for player $i$, so $cc_{TIE}(f)\le (n+1)\cdot cc(f)$. 
\end{proof}
\subsection{Scalable Domains}
Roughly speaking, scalable domains are multi-parameter domains that can be \textquote{stretched}. They are useful because of two main properties. The first is that a scalable domain can be projected to a single parameter domain, so upper bounds of payment computation in single parameter settings extend to them. The latter is that they are essentially equivalent (up to translation) to convex domains, so we use them as a means to derive upper bounds for them (as we define and prove formally in Subsection \ref{convexexpectationsec}). Formally, scalable domains are:   
\begin{definition}
	A domain of a player $V_i$ is \emph{scalable} if for every $v_i\in V_i$ and every $\lambda \in [0,1]$, $\lambda \cdot v_i \in V_i$. 
\end{definition}
By definition, a scalable domains necessarily contains a zero valuation,  $v_i\equiv \vec{0}$. Thus, for scalable domains, we say that  a mechanism is \emph{normalized} if $P_i(\vec{0},v_{-i})=0$ for all $v_{-i}$. 
As observed by \cite{babaioff2015truthful}, a corollary of Rochet \cite{rochet1987necessary} is:
\begin{proposition}[\cite{babaioff2015truthful,rochet1987necessary}] \label{scalableuniquenessthm} 
	Let $f\colon V_1\times \cdots \times V_n$ be an implementable social choice function with scalable domains. Then, the mechanism $(f,P)$ is truthful and normalized if and only if for every player $i$:
	\begin{equation} \label{scalablepayment}
	P_i(v_i,v_{-i})= v_i(f(v))-\int_{0}^{1}v_i(f(t\cdot v_i,v_{-i}))dt
	\end{equation}
\end{proposition}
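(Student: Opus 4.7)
The plan is a standard envelope-theorem argument, made clean by the fact that in a scalable domain the segment $\{tv_i : t\in[0,1]\}$ from $\vec 0$ to $v_i$ lies entirely in $V_i$. I handle the two directions separately.

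For the ``only if'' direction (uniqueness), fix a player $i$ and $v_{-i}$, and consider the interim utility along the ray through $v_i$:
$$u(t) := (tv_i)(f(tv_i, v_{-i})) - P_i(tv_i, v_{-i}),\qquad t\in [0,1].$$
Truthfulness implies $u(t)=\sup_{s\in [0,1]}\bigl[(tv_i)(f(sv_i, v_{-i})) - P_i(sv_i, v_{-i})\bigr]$, so $u$ is a pointwise supremum of functions that are affine in $t$ (for each fixed $s$), hence convex and absolutely continuous on $[0,1]$. Combining the two pairwise IC inequalities between types $tv_i$ and $t'v_i$ yields the pinching
$$(t'-t)\,v_i(f(tv_i,v_{-i}))\;\le\; u(t')-u(t)\;\le\; (t'-t)\,v_i(f(t'v_i,v_{-i})) \quad (t'>t),$$
so at every point of differentiability $u'(t)=v_i(f(tv_i,v_{-i}))$. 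Integrating from $0$ to $1$ and using normalization $u(0)=-P_i(\vec 0,v_{-i})=0$ gives $u(1)=\int_0^1 v_i(f(tv_i,v_{-i}))\,dt$, which after isolating $P_i(v_i,v_{-i})$ is exactly the claimed formula.

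For the ``if'' direction, normalization is immediate by plugging $v_i\equiv \vec 0$ into the formula: both $v_i(f(v))$ and the integrand vanish. For truthfulness, I invoke implementability of $f$: fix any truthful payment $\tilde P$ for $f$ and define $\tilde P^{\text{nrm}}_i(v_i,v_{-i}) := \tilde P_i(v_i,v_{-i}) - \tilde P_i(\vec 0,v_{-i})$. This shift depends only on $v_{-i}$, so $(f,\tilde P^{\text{nrm}})$ is still truthful and is now normalized; by the ``only if'' direction just proved, $\tilde P^{\text{nrm}}$ agrees with the stated formula, so the formula itself defines a truthful mechanism.

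The main obstacle is rigorously justifying the a.e.\ envelope identity, since $f(tv_i,v_{-i})$ can jump with $t$ and $u$ need not be smooth. The resolution is that $u$, as a supremum of affine functions in $t$, is convex and hence absolutely continuous on $[0,1]$, so the fundamental theorem of calculus applies; the pinching inequalities above force $u'(t)=v_i(f(tv_i,v_{-i}))$ at every $t$ where $u$ is differentiable (a co-null set), which suffices for the integral identity. This specializes to Myerson's formula when $v_i = r_{v_i}\cdot w_i$ and is the promised corollary of Rochet's cycle-monotonicity characterization.
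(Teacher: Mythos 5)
Your proof is correct; note that the paper itself gives no proof of this proposition -- it is stated as a known fact, cited to Rochet's characterization and to Babaioff, Kleinberg, and Slivkins -- so there is no in-paper argument to compare against. Your self-contained route (restrict to the ray $\{t\cdot v_i : t\in[0,1]\}$, which scalability keeps inside $V_i$; write the truthful utility $u(t)$ as a supremum of functions affine in $t$; pinch its difference quotients by the two incentive constraints; integrate using $u(0)=0$ from normalization) is exactly the standard one-dimensional envelope argument, i.e.\ the natural generalization of Myerson's formula, and is essentially how the cited works obtain the identity; your ``if'' direction, which uses implementability of $f$ to produce some truthful payment, shifts it by a term depending only on $v_{-i}$ to make it normalized, and then invokes the uniqueness direction, is also sound and in addition shows the integral in (\ref{scalablepayment}) is well defined. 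One small point: the step ``convex hence absolutely continuous on $[0,1]$'' is slightly glib, since a convex function can be discontinuous at an endpoint of a closed interval; but your own pinching inequalities repair this. Taking the pairs $(0,t')$ and $(t,1)$ in them, together with monotonicity of difference quotients of a convex function, shows every difference quotient of $u$ lies between the fixed real numbers $v_i(f(\vec{0},v_{-i}))$ and $v_i(f(v_i,v_{-i}))$, so $u$ is in fact Lipschitz on all of $[0,1]$ and the fundamental theorem of calculus applies; with that patch the argument is complete, and I see no gaps.
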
 
\begin{theorem} \label{scalableinexpectationthm}
	Let $f\colon V_1\times \cdots\times V_n\to \mathcal{A}$ be a social choice function with scalable domains. Then, $cc_{TIE}(f)\le (n+1)\cdot cc(f)$.
\end{theorem}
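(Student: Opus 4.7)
The strategy closely mirrors the proof of Theorem \ref{truthfulinexpectationthm}, replacing Myerson's formula with the scalable-domain analogue \eqref{scalablepayment} from Proposition \ref{scalableuniquenessthm}. Scalability guarantees that $t \cdot v_i \in V_i$ for every $t \in [0,1]$, so the integral in \eqref{scalablepayment} is well-defined and is, by Proposition \ref{scalableuniquenessthm}, the unique normalized deterministic truthful payment for $f$. My goal will be to construct a randomized payment $\hat P_i$ whose expectation, as a function of the reported type of each player, coincides with this deterministic payment; given that, truthfulness in expectation \eqref{expectedrequirement} reduces to the (already known) deterministic truthfulness of $P$.

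Let $\pi$ be an optimal protocol for $f$ of cost $cc(f)$. The protocol I propose is: (i) run $\pi(v_1,\ldots,v_n)$ once so that all players learn $f(v)$; (ii) for each player $i$ in turn, have some player $j \neq i$ draw a fresh sample $t_i \sim U[0,1]$, announce $t_i$, and have all players (including $i$, who now uses his scalable valuation $t_i \cdot v_i \in V_i$) execute $\pi(t_i \cdot v_i, v_{-i})$, learning $f(t_i \cdot v_i, v_{-i})$; (iii) define
$$\hat P_i \;=\; v_i(f(v)) \;-\; v_i(f(t_i \cdot v_i, v_{-i})),$$
a quantity that player $i$ can evaluate from his own valuation and the two publicly known alternatives. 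Applying Lemma \ref{sampleintegrallemma} with $g(t) = v_i(f(t \cdot v_i, v_{-i}))$ on $[0,1]$ gives $\mathbb{E}_{t_i}[v_i(f(t_i \cdot v_i, v_{-i}))] = \int_0^1 v_i(f(t \cdot v_i, v_{-i}))\,dt$, hence $\mathbb{E}[\hat P_i] = P_i(v_i, v_{-i})$ by \eqref{scalablepayment}. Since this identity holds for any reported type $v_i'$ in place of $v_i$, the expected payment of $i$ as a function of his report equals the deterministic truthful payment, and \eqref{expectedrequirement} follows. Counting communication: one initial execution of $\pi$ plus one extra execution per player, giving $(n+1)\cdot cc(f)$ bits in total.

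The main subtlety to guard against is that $t_i$ must be generated by a party other than $i$ (or via public randomness): otherwise, player $i$ could correlate his report with the realized sample and bias the expected payment in his favor. The multi-parameter nature of $V_i$ introduces no further difficulty, because $v_i$ enters $\hat P_i$ only through the two scalar evaluations $v_i(f(v))$ and $v_i(f(t_i \cdot v_i, v_{-i}))$, each computable by player $i$ from his reported type and the publicly known alternative; unlike the single-parameter case, there is no need to replace $v_i$ by a protocol-leaf representative $v_i^\ast$, since the second execution of $\pi$ is already carried out on the genuine scaled valuation $t_i \cdot v_i$.
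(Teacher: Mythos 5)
There is a genuine gap, and it is exactly the point you dismiss at the end. Your randomized payment $\hat P_i = v_i(f(v)) - v_i(f(t_i\cdot v_i,v_{-i}))$ is not a function of the protocol's transcript: the two evaluations $v_i(f(v))$ and $v_i(f(t_i\cdot v_i,v_{-i}))$ are known only to player $i$ (the transcript of $\pi$ reveals the leaf reached, not the value of $v_i$ at the chosen alternative), so the mechanism cannot output or charge this payment, and eliciting these numbers from player $i$ both costs extra (unbounded) communication and destroys incentives, since he would simply understate them. This is why $cc_{TIE}$, like $cc_{IC}$, requires the protocol itself to compute the payments. Moreover, your second execution $\pi(t_i\cdot v_i, v_{-i})$ needs player $i$ to participate with an unverifiable private input. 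His strategy in a communication mechanism is his entire behavior in the protocol, not just a ``reported type'': he can play the first run according to $v_i$ and then, in the payment run, behave as if his scaled valuation were $v_i$ itself (i.e.\ as if $t_i=1$), making $f(t_i\cdot v_i,v_{-i})=f(v)$ and driving his payment to $0$ without affecting the outcome. So even granting the payment could be collected, the resulting mechanism is not truthful in expectation; your reduction ``the identity holds for any reported type $v_i'$'' only covers deviations that are consistent across both executions.

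The paper's proof is your proof plus the one ingredient you declared unnecessary: after running $\pi(v)$, replace $v_i$ by the publicly agreed representative $v_i^\ast$ of the leaf reached, justified by Lemma \ref{samepaymentlemma} ($P_i(v_i^\ast,v_{-i})=P_i(v_i,v_{-i})$, which holds here as well since $f(v_i^\ast,v_{-i})=f(v_i,v_{-i})$ by the rectangle property and truthfulness of the deterministic payment \eqref{scalablepayment}). Then a player $j\neq i$ samples $t\sim U[0,1]$, the second execution $\pi(t\cdot v_i^\ast,v_{-i})$ uses only public data (so player $i$ has no way to manipulate it), and $\hat P_i = v_i^\ast(f(v_i^\ast,v_{-i})) - v_i^\ast(f(t\cdot v_i^\ast,v_{-i}))$ is computable by all players from the transcript, with $\mathbb{E}[\hat P_i]=P_i(v_i^\ast,v_{-i})=P_i(v_i,v_{-i})$ by Lemma \ref{sampleintegrallemma}. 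The leaf-representative step is thus not a single-parameter artifact; it is what makes the payment both collectible and incentive compatible, and it is needed verbatim in the scalable case. The communication count $(n+1)\cdot cc(f)$ and your use of Lemma \ref{sampleintegrallemma} with \eqref{scalablepayment} are otherwise correct.
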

\begin{proof}
	Let $\pi$ be a communication protocol for $f$. The players simulate $\pi (v_1,\ldots,v_n)$. By Proposition \ref{scalableuniquenessthm}, the payment in (\ref{scalablepayment}) deterministically implements $f$, so for every player we wish to compute a random variable whose expected value is equal to it.
	For every leaf in $\pi$, the players agree in advance on a profile which belongs in the leaf. Denote the leaf that $\pi (v_1,\ldots,v_n)$ reaches with $L^\ast$, and its agreed type with $v_i^\ast$. Lemma \ref{samepaymentlemma} allows us to focus on computing a random variable with expectation $P_i(v_i^\ast,v_{-i})$. We obtain such a random variable when a player $j\neq i$ samples $t\sim U[0,1]$. The players simulate $\pi(t\cdot v_i^\ast,v_{-i})$ and output:
	\begin{equation*} 
	\hat{P}_i = v^\ast_i(f(v^\ast_i,v_{-i}))  - v_i^\ast(f(t\cdot v_i^\ast,v_{-i}))
	\end{equation*}
	By Lemma \ref{sampleintegrallemma}, $\mathbb{E}[\hat{P}_i]=P_i(v_i^\ast,v_{-i})$. Due to the two executions of $\pi$, the players know all the components of $\hat{P}$. By repeating for all players, we get a truthful-in-expectation implementation of $f$ with $(n+1)\cdot cc(f)$ bits. 
\end{proof}
\subsection{Convex Domains} \label{convexexpectationsec}
Convex domains are useful for mechanism design since they are weak monotonicity domains \cite{saksyuconvex}, i.e., domains where social choice function that satisfy weak monotonicity are necessarily truthful.\footnote{A function $f$ satisfies weak monotonicity if for every player $i$, $v_{-i}\in V_{-i}$ and $v,v'\in V_i$, if $f(v,v_{-i})=a$ and $f(v',v_{-i})=b$, it implies that $v(a)-v(b)\ge v'(a)-v'(b)$.} We will prove properties of convex domains by reducing them to scalable ones: we show that for every function, translating its domain by a constant has no effect on it, and that all convex domains translate to scalable domains.
\begin{definition} (Translation)
	Let $f\colon V= V_1\times \cdots\times V_n \to \mathcal{A}$ and $f^t\colon V^t= V_1^t\times \cdots\times V_n^t \to \mathcal{A}$ be two social choice functions. Then, $(f,V)$ and $(f^t,V^t)$ are \emph{translations} if there exist $n$ vectors $t_1,\ldots,t_n\in \mathbb{R}^{|\mathcal{A}|}$ such that:
	\begin{enumerate}
		\item For every player $i$,  $V_i^t=\{v_i-t_i|v_i\in V_i\}$.
		\item For every $(v_1^t,\ldots,v_n^t)$, $f^t(v_1^t,\ldots,v_n^t)=f(v_1^t+t_1,\ldots,v_n^t+t_n)$.  
	\end{enumerate} 
We write $t_i^a$ for the coordinate of alternative $a$ in the translation vector of player $i$. 
\end{definition}
\begin{lemma} \label{translationpreservespaymentslemma}
	Let $(f,V)$ and $(f^t,V^t)$ be translations of one another with the vectors $t_1,\ldots,t_n$. Then, if the payment scheme $P$ implements (deterministically or in expectation) $f$, the following payment implements $f^t$ (deterministically or in expectation) :
	\begin{equation} \label{thirteenandhalf}
	P_i^t(v_1^t,\ldots,v_n^t)=P_i(v_1^t+t_1,\ldots,v_n^t+t_n)+t_i^a
	\end{equation}
	where  $a=f^t(v_1^t,\ldots,v_n^t)$. 
\end{lemma}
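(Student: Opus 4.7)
The plan is to verify truthfulness of $(f^t,P^t)$ by a direct substitution argument that reduces it to the truthfulness of $(f,P)$. First I would fix a player $i$, her true type $v_i^t\in V_i^t$, an alternative report $w_i^t\in V_i^t$, and the other players' profile $v_{-i}^t\in V_{-i}^t$. Setting $u_i:=v_i^t+t_i$, $u'_i:=w_i^t+t_i$ (both in $V_i$ by the definition of $V_i^t$), and $u_{-i}:=v_{-i}^t+t_{-i}\in V_{-i}$, the translation hypothesis gives $a:=f^t(v_i^t,v_{-i}^t)=f(u_i,u_{-i})$ and $b:=f^t(w_i^t,v_{-i}^t)=f(u'_i,u_{-i})$.

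The key pointwise identity is $u_i(a)=v_i^t(a)+t_i^a$ and $u_i(b)=v_i^t(b)+t_i^b$. Plugging the definition (\ref{thirteenandhalf}) of $P_i^t$ into the utility of type $v_i^t$ when reporting $v_i^t$ under $(f^t,P^t)$ and expanding using this identity, the $t_i^a$ terms coming from the valuation and from the payment combine, so that the utility expression reduces up to a constant (in the report) to $u_i(a) - P_i(u_i,u_{-i})$. Repeating this for the misreport $w_i^t$ gives the analogous expression $u_i(b)-P_i(u'_i,u_{-i})$, with the same constant offset. Subtracting, the truthfulness inequality for $(f^t,P^t)$ at $(v_i^t,w_i^t,v_{-i}^t)$ becomes algebraically equivalent to the truthfulness inequality for $(f,P)$ at $(u_i,u'_i,u_{-i})$, which holds by hypothesis.

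For the truthful-in-expectation case, the same reduction goes through once we apply linearity of expectation: $P_i^t$ is obtained from $P_i$ by an \emph{additive deterministic} shift $t_i^{a}$ that depends only on the (deterministic) output $a=f^t(v^t)$, so $\mathbb{E}[P_i^t(v_i^t,v_{-i}^t)]=\mathbb{E}[P_i(u_i,u_{-i})]+t_i^{a}$. Substituting into the expected-utility inequality (\ref{expectedrequirement}) for $(f^t,P^t)$ and repeating the same cancellation as in the deterministic case reduces it to the expected-utility inequality for $(f,P)$ at the translated profile $(u_i,u'_i,u_{-i})$.

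The main ``obstacle'' is essentially bookkeeping: one must keep the two sign conventions (the shift $-t_i$ on types versus the shift on the payment in (\ref{thirteenandhalf})) aligned so as to see the cancellation between the valuation-side and payment-side translation terms. Once this is done, no structural property of the domain (convexity, scalability, or normalization) is needed for the argument; the statement is a ``cosmetic'' invariance of ex-post and in-expectation truthfulness under affine translation of each player's valuation space.
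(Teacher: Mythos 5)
Your proof is correct and follows essentially the same route as the paper's own proof: fix a player and a deviation, pass to the translated types $u_i=v_i^t+t_i$, use the translation identities for outcomes, valuations and payments, and observe that the shift terms cancel so that the incentive constraint for $(f^t,P^t)$ is equivalent to the one for $(f,P)$; the truthful-in-expectation case is, as you note, the same computation via linearity of expectation (which the paper simply declares ``identical''). The only caveat is exactly the sign bookkeeping you flag: with $V_i^t=\{v_i-t_i\}$ the valuation-side shift is $-t_i^a$, so the exact cancellation requires the payment to be shifted by $-t_i^a$ (equivalently, the domain translated by $+t_i$); the paper's proof contains the same convention slip (it asserts $v^t(a)=v(a)+t_i^a$), so this is an inconsistency in the stated signs of the lemma rather than a gap in your argument.
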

\begin{proof}
	We prove for deterministic implementation, and the proof for for truthful in expectation implementation is identical. Fix a player $i$ and $v_{-i}^t \in V_{-i}^t$. We want to show that for all $v^t,\hat{v}^t \in V_i^t$:
	\begin{equation} \label{goal}
	v^t(f^t(v^t,v_{-i}^t))-P_i^t(v^t,v_{-i}^t) \ge v^t(f^t(\hat{v}^t,v_{-i}^t))-P_i^t(\hat{v}^t,v_{-i}^t)   
	\end{equation}
	We denote $v=v^t+t_i$ and $\hat{v}=\hat{v}^t+t_i$. Similarly, $v_{-i}=(v^t_1 +t_1,\ldots,v^t_{i-1} +t_{i-1},v^t_{i+1} +t_{i+1},\ldots,v^t_{n} +t_{n})$. By definition, $f^t(v^t,v_{-i}^t)=f(v,v_{-i})$ and  $f^t(\hat{v}^t,v_{-i}^t)=f(\hat{v},v_{-i})$. We denote these alternatives with $a$ and $\hat{a}$, respectively. Similarly, by (\ref{thirteenandhalf}), $P^t_i(v^t,v_{-i}^t)=P_i(v,v_{-i})+t_i^a$ and  $P^t_i(\hat{v},v_{-i}^t)=P_i(\hat{v},v_{-i})+t_i^{\hat{a}}$. Hence, (\ref{goal}) is equivalent to:
	\begin{equation} \label{instrument}
	v^t(\underbrace{f(v,v_{-i})}_{a})-P_i(v,v_{-i}) - t_i^a \ge v^t(\underbrace{f(\hat{v},v_{-i})}_{\hat{a}})-P_i(\hat{v},v_{-i}) - t_i^{\hat{a}}  
	\end{equation}
	By definition, $v^t(a)=v(a)+t_i^a$ and $v^t(\hat{a})=v(\hat{a})+t_i^{\hat{a}}$. Hence, the truthfulness of $(f,P)$ implies that (\ref{instrument}) holds. (\ref{instrument}) holds only if (\ref{goal}) holds, so $P^t$ implements $f^t$. 
\end{proof}
We show that $f$ and $f^t$ require the same amount of communication for truthful implementation. 
\begin{lemma} \label{translationsamecommcomplexitylemma}
	Let $(f,V)$ and $(f^t,V^t)$ be translations of one another. Then, $cc(f)=cc(f^t)$, $cc_{TIE}(f)=cc_{TIE}(f^t)$ and $cc_{IC}(f)=cc_{IC}(f^t)$.
\end{lemma}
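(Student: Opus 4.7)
The plan is to show that translations are ``zero-cost'' transformations, since each player knows their own translation vector $t_i$ locally (the translations $t_1, \ldots, t_n$ are part of the public specification relating the two problems). All three equalities will then follow by symmetric simulation arguments in both directions.

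For the equality $cc(f)=cc(f^t)$, I would start with the direction $cc(f^t)\le cc(f)$. Given any communication protocol $\pi$ computing $f$ in $cc(f)$ bits, I construct a protocol $\pi^t$ for $f^t$ as follows: on input $(v_1^t,\ldots,v_n^t)$, each player $i$ locally computes $v_i = v_i^t + t_i$ (no communication needed since $t_i$ is known to player $i$), and then the players simulate $\pi$ on $(v_1,\ldots,v_n)$. By definition of translation, $f(v_1,\ldots,v_n)=f^t(v_1^t,\ldots,v_n^t)$, so $\pi^t$ correctly computes $f^t$ with $cc(f)$ bits. The opposite direction $cc(f)\le cc(f^t)$ is symmetric: $(f,V)$ is itself a translation of $(f^t,V^t)$ via the vectors $-t_1,\ldots,-t_n$, so the same argument applies.

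For $cc_{IC}(f)=cc_{IC}(f^t)$, I would argue analogously, leveraging Lemma \ref{translationpreservespaymentslemma}. Let $(f,P)$ be the most efficient truthful mechanism for $f$, with protocol $\pi$ of length $cc_{IC}(f)$ computing both $f$ and $P$. To build a protocol for $(f^t,P^t)$ where $P^t$ is as in~\eqref{thirteenandhalf}, each player $i$ locally translates $v_i^t \mapsto v_i^t + t_i$, and the players simulate $\pi$ to learn both the chosen alternative $a$ and the payment $P_i(v_1^t+t_1,\ldots,v_n^t+t_n)$ for each player $i$. Once $a$ is known, player $i$ locally adds the correction $t_i^a$ (since both $a$ and $t_i$ are known to player $i$) to obtain $P_i^t(v^t)$. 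By Lemma \ref{translationpreservespaymentslemma}, $(f^t,P^t)$ is a truthful mechanism, so $cc_{IC}(f^t)\le cc_{IC}(f)$. The symmetric direction again follows from the fact that translation is its own inverse.

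The proof of $cc_{TIE}(f)=cc_{TIE}(f^t)$ is identical to the $cc_{IC}$ argument, since Lemma \ref{translationpreservespaymentslemma} explicitly handles truthful-in-expectation implementations with exactly the same correction formula, and the simulation and local adjustment carry over without change. There is no real obstacle here; the only subtlety worth noting is that the correction $t_i^a$ needed to transform $P_i$ into $P_i^t$ depends only on $a$ and on data known privately to player $i$, so it costs zero additional communication once the protocol has revealed the chosen alternative.
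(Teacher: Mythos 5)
Your proof is correct and follows essentially the same route as the paper: translate each input locally by the public vectors $t_i$, simulate the protocol for the other function, add the correction $t_i^a$ once the alternative $a$ is known (costing no communication), invoke Lemma \ref{translationpreservespaymentslemma} for truthfulness, and use the symmetry of translation for the reverse direction. The paper's proof is just a terser version of exactly this argument.
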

\begin{proof}
	It is clear that $cc(f)=cc(f^t)$ for every pair of translations $f$ and $f^t$. 
	We explain why $cc_{IC}(f)=cc_{IC}(f^t)$ and the proof of  $cc_{TIE}(f)=cc_{TIE}(f^t)$ is identical.  
	The proof is by reduction of  $(f^t,V^t)$ to $(f,V)$. Denote their translations with $t_1,\ldots,t_n$, and the protocol of a truthful mechanism for $f$ with $\pi$. For 
	$(v_1^t,\ldots,v_n^t)\in V^t$, the players run $\pi(v_1^t+t_1,\ldots,v_n^t+t_n)$. 
	By construction,  $f(v_1^t+t_1,\ldots,v_n^t+t_n)=f^t(v_1^t,\ldots,v_n^t)=a$. For the payment of every player $i$, the players output $P_i+t_i^a$, where $a=f(v_1^t+t_1,\ldots,v_n^t+t_n)$. It  
 is truthful by Lemma \ref{translationpreservespaymentslemma}.     
\end{proof}
\begin{lemma} \label{convextranslatesscalablelemma}
	Let $f\colon V=V_1\times ..\times V_n$ social choice with convex domains. Then, $(f,V)$ has a translation $(f^t,V^t)$ such that $V^t_1,\ldots,V^t_n$ are scalable.  
\end{lemma}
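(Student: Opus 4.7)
The plan is to exhibit an explicit translation vector for each player and verify the two required properties (the translated domain contains $\vec{0}$, and it is closed under scaling by $\lambda \in [0,1]$).

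First, for each player $i$, I would fix an arbitrary valuation $v_i^0 \in V_i$ and set the translation vector to be $t_i = v_i^0 \in \mathbb{R}^{|\mathcal A|}$. Define $V_i^t = \{v_i - v_i^0 \mid v_i \in V_i\}$ and let $f^t$ be the induced social choice function $f^t(v_1^t,\ldots,v_n^t) = f(v_1^t + t_1, \ldots, v_n^t + t_n)$. By construction $(f,V)$ and $(f^t, V^t)$ are translations of one another in the sense of the definition above.

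Next, I would verify that each $V_i^t$ is scalable. Two observations make this immediate. First, $\vec{0} \in V_i^t$, since taking $v_i = v_i^0 \in V_i$ yields $v_i - v_i^0 = \vec{0}$. Second, $V_i^t$ is convex: if $u, u' \in V_i^t$ then $u + v_i^0, u' + v_i^0 \in V_i$, so for every $\mu \in [0,1]$, convexity of $V_i$ gives $\mu(u+v_i^0) + (1-\mu)(u'+v_i^0) \in V_i$, and subtracting $v_i^0$ shows $\mu u + (1-\mu) u' \in V_i^t$. Now take any $v \in V_i^t$ and any $\lambda \in [0,1]$. Applying convexity to the pair $v, \vec{0} \in V_i^t$ with coefficient $\lambda$ yields $\lambda v + (1-\lambda)\vec{0} = \lambda v \in V_i^t$, which is exactly the scalability condition.

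There is no real obstacle here: the statement is essentially the observation that a convex set translated to contain the origin is star-shaped with respect to the origin, hence scalable. The only care needed is to confirm that the translation framework defined earlier does allow an arbitrary vector $t_i \in \mathbb{R}^{|\mathcal A|}$ (which it does), and to note that this lemma can then be chained with Lemma~\ref{translationsamecommcomplexitylemma} and Theorem~\ref{scalableinexpectationthm} to conclude $cc_{TIE}(f) \le (n+1)\cdot cc(f)$ for convex domains as well.
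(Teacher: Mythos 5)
Your proof is correct and follows essentially the same route as the paper: translate each $V_i$ by an arbitrary type so that $\vec{0}$ lies in the translated domain, note that translation preserves convexity, and then obtain scalability from the convex combination $\lambda v + (1-\lambda)\vec{0}$. The only difference is that you spell out the convexity-preservation step, which the paper states without detail.
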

\begin{proof}
	The translation is as follows. Fix a function $f$ and a convex domain $V$.
	For every domain $V_i$, we take an arbitrary type $v_i^\ast \in V_i$ and form a translation of $V_i$ by taking $V_i^t=\{v_i- v_i^\ast | v_i \in V_i \}$ and setting $f^t(v_1^t,\ldots,v_n^t)=f(v_1^t+v_1^\ast,\ldots,v_n^t+v_n^\ast)$ for all $(v_1^t,\ldots,v_n^t)\in V_1^t \times \cdots \times V_n^t$. We want to show that for all $i$,   $V^t_i$ is scalable.  $V^t_i$ is convex, because translating sets preserves convexity. Also, by construction it contains $\vec{0}$. Hence, $V^t_i$ is scalable because every $v\in V^t_i$ and every $\lambda \in [0,1]$ satisfy by convexity  that $\lambda v + (1-\lambda)\cdot \vec{0} = \lambda v \in V_i^t$.    
\end{proof}
\begin{theorem}\label{convexalgorithmthm}
	Let $f\colon V= V_1\times ..\times V_n \to \mathcal{A}$ be a social choice function with convex domains. Then, $cc_{TIE}(f)\le (n+1)\cdot cc(f)$. 
\end{theorem}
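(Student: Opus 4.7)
The plan is to assemble the result directly from the three ingredients the paper has just established: Lemma \ref{convextranslatesscalablelemma} (every convex domain admits a translation to a scalable domain), Theorem \ref{scalableinexpectationthm} (the $(n{+}1)\cdot cc(f)$ bound for scalable domains), and Lemma \ref{translationsamecommcomplexitylemma} (translations preserve both $cc(\cdot)$ and $cc_{TIE}(\cdot)$). None of these needs to be re-proved; the task is simply to chain them correctly.

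First, I would apply Lemma \ref{convextranslatesscalablelemma} to the given $(f,V)$ to obtain a translation $(f^t, V^t)$ with vectors $t_1,\ldots,t_n$ such that every $V_i^t$ is scalable. Next, since $f^t$ is a social choice function on scalable domains, Theorem \ref{scalableinexpectationthm} yields
$$cc_{TIE}(f^t) \le (n+1)\cdot cc(f^t).$$
Finally, Lemma \ref{translationsamecommcomplexitylemma} gives $cc(f)=cc(f^t)$ and $cc_{TIE}(f)=cc_{TIE}(f^t)$, and concatenating these equalities with the displayed inequality produces
$$cc_{TIE}(f) = cc_{TIE}(f^t) \le (n+1)\cdot cc(f^t) = (n+1)\cdot cc(f),$$
which is the desired bound.

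Since the hard work was packaged into the lemmas, there is no genuine obstacle here; the only thing I would be careful about is making explicit the direction in which each lemma is applied. In particular, I would point out that one needs $f^t$ to be \emph{implementable} for the scalable-domain theorem to be invoked, which follows because implementability is preserved under translation (the reduction in Lemma \ref{translationpreservespaymentslemma} converts any truthful-in-expectation mechanism for $f$ into one for $f^t$ and vice versa). Once that observation is recorded, the proof is a two-line chain and the theorem is complete.
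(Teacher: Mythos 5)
Your proposal is correct and follows essentially the same route as the paper: translate the convex domain to a scalable one via Lemma \ref{convextranslatesscalablelemma}, invoke Theorem \ref{scalableinexpectationthm}, and transfer the bound back with Lemma \ref{translationsamecommcomplexitylemma}. Your extra remark that implementability is preserved under translation (via Lemma \ref{translationpreservespaymentslemma}) is a small point of care the paper leaves implicit, but it does not change the argument.
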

\begin{proof}
	By Lemma \ref{convextranslatesscalablelemma}, $(f,V)$  has a translation $(f^t,V^t)$ such that $V^t$ is scalable. Therefore:
	\begin{equation*}
	cc_{TIE}(f)\underbrace{=}_\text{By Lemma \ref{translationsamecommcomplexitylemma}} cc_{TIE}(f^t) \underbrace{\le}_\text{by Theorem \ref{scalableinexpectationthm}} (n+1)\cdot cc(f^t) \underbrace{=}_\text{by Lemma \ref{translationsamecommcomplexitylemma}} (n+1)\cdot cc(f)
	\end{equation*}
\end{proof}

\section{An Algorithm for Single Parameter Settings}\label{gsalgorithmsection}
We now return to considering \emph{deterministic} ex-post implementations. 
Notice that the exponential lower bounds of Section \ref{lowerboundsec} were proven using single parameter social choice functions. We provide an algorithm for all such domains. The upper bound on the communication complexity of the algorithm  has a linear in $|\mathcal{A}|$ factor.
The communication complexity of the algorithm is optimal in the sense that the dependence on $|\mathcal{A}|$ is necessary, as demonstrated by the examples in Theorems \ref{mostimportanttheorem} and \ref{nopolylogthm}.  

\begin{theorem} \label{singleparamalgthm1}
	For all single parameter environments, $cc_{IC}(f)= \mathcal{O}(n\cdot cc^{2}(f)\cdot |\mathcal{A}|)$. As a corollary, for binary single parameter domains, $cc_{IC}(f)= \mathcal{O}(n\cdot cc^{2}(f))$.
\end{theorem}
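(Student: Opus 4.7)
The plan is to reduce payment computation to locating the breakpoints of the monotone step function $g_{v_{-i}}(z) := w_i(f(z\cdot w_i, v_{-i}))$ and then apply Myerson's formula from Proposition \ref{myersonlemmainterval}. First I would reduce to the interval-domain case: if $R_i$ is a finite set rather than an interval, I extend $f$ to $R_i'=[0,b_i]$ in a monotone way exactly as in the proof of Theorem \ref{truthfulinexpectationthm}, so that Myerson's formula applies and $cc(f)$ is unaffected.

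By monotonicity of $f$ and because $|w_i(\mathcal{A})| \le |\mathcal{A}|$, the integrand $g_{v_{-i}}$ is a non-decreasing step function with at most $|\mathcal{A}|-1$ breakpoints on $[0, r_{v_i}]$. Hence Myerson's integral decomposes into a finite sum $\sum_j (\tau_{j+1}-\tau_j)\cdot u_j$ determined by the breakpoints $\tau_j$ and corresponding values $u_j$, and the whole payment $P_i(v_i,v_{-i})$ can be recovered once, for each $u$ in the image of $w_i$ up to $w_i(f(v_i,v_{-i}))$, we compute the threshold $\tau_u := \inf\{r\in R_i : g_{v_{-i}}(r) \ge u\}$.

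I would find each threshold by binary search using the communication protocol $\pi$ for $f$ as an oracle: each query picks a candidate $r$ and simulates $\pi(r\cdot w_i, v_{-i})$ at cost $cc(f)$ to read off $g_{v_{-i}}(r)$. The key structural observation is that for fixed $v_{-i}$ the protocol $\pi$ reaches at most $2^{cc(f)}$ distinct leaves as $v_i$ varies, so by monotonicity $R_i$ decomposes into at most $2^{cc(f)}$ intervals on which $g_{v_{-i}}$ is constant; binary search over this discrete family of candidate breakpoints terminates in $O(cc(f))$ rounds. Running this for each of the at most $|\mathcal{A}|$ breakpoints and each of the $n$ players gives $cc_{IC}(f) = O(n\cdot cc^2(f) \cdot |\mathcal{A}|)$, and in the binary case $w_i(\mathcal{A})\subseteq\{0,1\}$ only a single win-threshold needs to be found per player, eliminating the $|\mathcal{A}|$ factor and yielding the stated $O(n\cdot cc^2(f))$.

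The main obstacle I anticipate is making the binary search genuinely discrete even when $R_i$ is a continuous interval: naively bisecting a real interval does not halve the number of remaining candidate leaves, and the search would fail to terminate in $O(cc(f))$ rounds. I would handle this by maintaining, during each binary search, a pair of already-probed values $r_{\text{lo}}<r_{\text{hi}}$ straddling the target threshold and always probing a new value guaranteed to lie in a reachable leaf strictly between them (for instance by descending the part of the protocol tree still compatible with the observations at $r_{\text{lo}}$ and $r_{\text{hi}}$), so that each new query rules out at least one of the at most $2^{cc(f)}$ reachable leaves and the search finishes in $O(cc(f))$ steps.
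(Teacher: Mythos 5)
Your overall plan -- reconstruct the threshold structure of $z\mapsto w_i(f(z\cdot w_i,v_{-i}))$ by binary searches that use the protocol for $f$ as an oracle, with at most $|\mathcal{A}|$ thresholds per player -- is the same skeleton as the paper's proof, but the step that carries all the difficulty is exactly the one your argument leaves broken: why does each binary search terminate after $O(cc(f))$ probes? Your fallback ("each new query rules out at least one of the at most $2^{cc(f)}$ reachable leaves, so the search finishes in $O(cc(f))$ steps") is a non sequitur: eliminating one leaf per probe only bounds the number of probes by $2^{cc(f)}$, giving communication $cc(f)\cdot 2^{cc(f)}$, i.e.\ exponentially worse than claimed. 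For a genuine binary search you need a \emph{publicly known, ordered, finite} candidate set for the threshold so that each probe halves it. This is what the paper supplies: Lemma \ref{alggspfoolingset} (via the fooling-set argument of Lemma \ref{Rabccfbound}) shows that the threshold map $inf_j:V_{-i}\to\mathbb{R}\cup\{\pm\infty\}$ has image of size at most $2^{cc(f)}$, and Lemma \ref{protocollemma} then binary-searches over this public list, taking additional care because the candidate infima need not themselves lie in $V_i$ (the probe types $v_l$ are chosen between consecutive candidates, with a separate case when no feasible type exists strictly above a candidate) and because one must also learn the value of $w$ \emph{at} the threshold (Lemma \ref{howtolearnwlemma} needs both the infimum and whether/how it is attained). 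An alternative public candidate set that would rescue your version is the set of infima of the $V_i$-projections of the protocol's leaves (the threshold is always the minimum of such infima over the leaves consistent with $v_{-i}$), but some such identification must be made explicit; "descending the part of the protocol tree compatible with the observations" does not by itself yield a halving rule.

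A secondary scope issue: you route everything through Myerson's integral formula, which in Proposition \ref{myersonlemmainterval} requires $R_i=[0,b_i]$, and you only sketch the extension for finite non-negative $R_i$ (as in Theorem \ref{truthfulinexpectationthm}). Theorem \ref{singleparamalgthm1} is claimed for \emph{all} single parameter environments, where $R_i$ may be an arbitrary (possibly unbounded, possibly containing negative values, not necessarily closed) subset of $\mathbb{R}$; the paper deliberately avoids Myerson's formula here and instead uses Lemma \ref{wsufficeslemma}, which says that some implementing payment depends only on the function $w_i(f(\cdot,v_{-i}))$, so that reconstructing the threshold structure suffices without ever writing the payment as an integral. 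Your binary-case conclusion ($m=|\Ima w_i|=2$, hence $O(n\cdot cc^2(f))$) is fine once the main gap is repaired.
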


\begin{proof}
	Recall that in single parameter settings, the valuations set of each player is composed of a public function $w_i\colon\mathcal{A}\to \mathbb{R}$ and a type space which contains scalar private information $R_i$. Each valuation $v_i(\cdot)$ is equal to $r_i\cdot w_i(\cdot)$ for some $r_i\in R_i$. For brevity, throughout the proof we slightly abuse notation by writing $v_i$ both for a valuation and for the scalar private information associated with it, $r_{v_{i}}$.

	We will show that $cc_{IC}(f)\le \mathcal{O}(n\cdot cc^2(f)\cdot \max_i|\Ima w_i|)$. It implies the theorem, because $\max_i|\Ima w_i| \le |\mathcal{A}|$. Since binary single parameter functions satisfy that $\Ima w_{i}=\{0,1\}$ for every player $i$, it is immediate that they satisfy $cc_{IC}(f)= \mathcal{O}(n\cdot cc^{2}(f))$. 
	  
The proof is based on the observations that the payment fully depends on $w_i(f(\cdot,v_{-i}))$ and that by multiple binary searches, the players know $w_i(f(v_i,v_{-i}))$ for all $v_i \in V_i$. The binary searches are for the sake of finding the \textquote{threshold} values of each alternative.  
The theorem has no assumptions at all about neither the domain nor the function, but it comes at a price: the proof involves a lot of technicalities in order to include all single parameter domains. 
\begin{lemma}\label{wsufficeslemma}
	Let  $f\colon V_1\times \cdots\times V_n\to \mathcal{A}$ be an implementable social choice function. Then, there exists a payment scheme $P$ which implements $f$ such that for every player $i$ and and every $v_{-i}^{1},v_{-i}^{2}\in V_{-i}$:
	\begin{equation}\label{cond}
	w_{i}(f(\cdot,v_{-i}^{1}))\equiv w_{i}(f(\cdot,v_{-i}^{2})) \implies P_{i}(\cdot,v_{-i}^{1})\equiv P_{i}(\cdot,v_{-i}^{2}) 
	\end{equation}   
\end{lemma}
	If two types in $V_{-i}$ have the same effect on the alternative chosen for player $i$, then they can clearly be implemented with the same payment. Therefore,  $cc_{IC}(f)$
	is at most the communication complexity of fully characterizing the function $w_{i}(f(\cdot,v_{-i}))$ for every player $i$.    

Fix a player $i$. Denote the number of elements in $\Ima w_{i}$ with $m$, and the elements in $\Ima w_{i}$ with $w_{1},\ldots,w_{m}$. For brevity, from now on we call $w_{i}(f(\cdot,v_{-i}))$ simply $w(\cdot,v_{-i})$. Note that each $v_{-i}\in V_{-i}$ defines a partition of $V_{i}$ to $\mathcal{V}_{1}\cupdot\ldots\cupdot \mathcal{V}_{m}$ where $\mathcal{V}_{j}=\{v_{i}\big|w_i(v_{i},v_{-i})=w_{j}\}$. We define the infimum of unbounded from below sets as $-\infty$, and the supremum of unbounded from above sets as $\infty$, so the infimum and supremum of $\mathcal{V}_{j}$ are always well defined. 
	 Clearly, knowing the partition of $V_{i}$ that $v_{-i}$ induces $\mathcal{V}_{1}\cupdot\ldots\cupdot \mathcal{V}_{m}$ is equivalent to fully computing $w(\cdot,v_{-i})$. $f$ is monotone, so we can focus on computing the \textquote{thresholds} of those subsets:
	\begin{lemma}\label{howtolearnwlemma}
		Let $v_{-i}^{1},v_{-i}^{2}\in V_{-i}$ be two types, and denote the partition of $V_{i}$ that they induce as $\mathcal{V}_{1}^{1}\cupdot\ldots\cupdot \mathcal{V}_{m}^{1}$ and as $\mathcal{V}_{1}^{2}\cupdot\ldots\cupdot \mathcal{V}_{m}^{2}$, respectively. Suppose that for all $j\in[m]$, 
			 $\inf \mathcal{V}_{j}^{1}=\inf \mathcal{V}_{j}^{2}$ and 
  $w(\inf \mathcal{V}_{j}^1,v_{-i}^{1})=w(\inf \mathcal{V}_{j}^2,v_{-i}^{2})$ (if  
  $\inf \mathcal{V}_{j}^1, \inf \mathcal{V}_{j}^2$ belong in $V_{i}$).
		Then, $w(\cdot,v_{-i}^{1})\equiv w(\cdot,v_{-i}^{2})$.  
	\end{lemma}
	For every $j\in[m]$, we define a function $inf_{j}:V_{-i}\to \mathbb{R}\cup\{\pm\infty\}$ which returns $\inf \mathcal{V}_{j}^{v_{-i}}$. We  define  $sup_{j}:V_{-i}\to \mathbb{R}\cup\{\pm\infty\}$ similarly. 
	\begin{lemma} \label{alggspfoolingset}
		For every $j\in [m]$,	$cc(f)\ge \log |\Ima inf_{j}|$ and $cc(f)\ge \log |\Ima sup _{j}|$. Hence, $\Ima inf_{j}$ and $\Ima sup_{j}$ are finite.
	\end{lemma}
	\begin{proof}[Proof of Lemma \ref{alggspfoolingset}]
		The first part of the lemma implies its second part, because $cc(f)$ is assumed to be finite (Remark \ref{finiteccfremark}). 
		The proof is by a projection of the single-parameter domain to a multi-dimensional domain. 
		
		We \textquote{spread} each $v\in V_i$ to an $m-$dimensional representation: $(v(1),\ldots,v(m))$ where $v(j)=r_v\cdot w_j$. We unite the alternatives by their $w_i(\cdot)$ values, so there are now $m$ alternatives, denoted with $w_1,\ldots,w_m$. Thus, the value of player $i$ for alternative $w_j$ is $v(j)$.  We define $\delta_{j1}=\inf \{v(j) - v(1) \hspace{0.25em}|\hspace{0.25em}f(v,v_{-i})=w_j\}$. Notice that if $\inf_j(v_{-i})=x$ in the single dimensional perspective, it means that $\delta_{j1}(v_{-i})=x(w_j-w_1)$. Therefore,  $|\Ima inf_j| = |\Ima \delta_{j1}|$. By Lemma \ref{Rabccfbound}, $cc(f)  \ge \log |\Ima \delta_{j1}|$, so $cc(f)\ge \log |\Ima inf_j|$. $cc(f)\ge \log \Ima sup_j$ is proved  analogously.  	
	\end{proof}
	\begin{lemma}\label{protocollemma}
		For every $j\in[m],$ $cc(inf_j)\le cc^{2}(f)$. 
	\end{lemma} 
	\begin{proof}
		We describe a protocol which is  a variant of binary search over the elements in $\Ima inf_j$. Denote the elements in $\Ima inf_j$ which differ from $\infty$ as $i_{1}<\ldots<i_{r}$.\footnote{$\Ima inf_j$ is countable by
			Lemma \ref{alggspfoolingset}.} Note that the infimums do not necessarily belong in $V_i$, so we cannot just do a binary search over the set of infimums. In order to overcome this problem, we need to find types in $V_{i}$ such that the following conditions hold:
		\begin{enumerate}
			\item $i_{1}\le v_1<i_2\le\ldots<i_{r}\le v_r$.
			\item For every $v_{-i}\in V_{-i}$ such that $inf_j(v_{-i})=i_l$, $w(v_{l},v_{-i})=w_j$ or
			$w(i_l)=w_j$.  
		\end{enumerate}
		Fix an index $1 \le l \le r$.
		For $i_{l+1}$ to be always well defined, we set $i_{r+1}$ as $\infty.$ Define:
		\begin{equation} \label{minset}
		sup(l)= \min\{ sup \in \Ima sup_j \hspace{0.3em}|\hspace{0.3em} \exists v_{-i} \hspace{0.3em} \text{s.t.} \hspace{0.3em} inf_j(v_{-i})=i_l \hspace{0.25em} \land sup_j(v_{-i})=sup \hspace{0.25em} \land \hspace{0.25em} sup\neq i_l  \}
		\end{equation}
		In words, this is the smallest supremum of $\mathcal{V}_j$ whenever $inf_{j}(v_{-i})=i_l$, which is strictly larger than $i_l$.\footnote{ $\Ima sup_j$ is finite by Lemma \ref{alggspfoolingset}, so the set in (\ref{minset}) has a minimum if it is not empty.} If the subset is empty, $sup(l)\gets \infty$.  Denote $\min\{sup(l),i_{l+1}\}$ with $min(l)$. Both $sup(l)$ and $i_{l+1}$ are by definition strictly larger than $i_l$, so $i_l<min(l)$. 
		If there exists $v\in V_{i}$ such that $i_l<v<min(l)$, we take it to be $v_{l}$. Otherwise, $v_{l}\gets i_l$.  
		
		 We explain why the chosen $v_1,\ldots,v_r$ satisfy the desired properties. First, it is immediate from the construction that $i_l\le v_{l}<i_{l+1}$ for all $l\in[r]$. For the second condition, let $v_{-i}\in V_{-i}$ be a type such that $inf_{j}(v_{-i})=i_l$. Recall that $v_{-i}$ defines a partition of $V_{i}$ to $\mathcal{V}_{1}\cupdot\ldots\cupdot \mathcal{V}_{m}$. By definition, $\inf \mathcal{V}_{j}=i_l$. If $inf_{j}(v_{-i})=sup_{j}(v_{-i})=i_l$, it implies that $\mathcal{V}_{j}$ is a singleton, so $\mathcal{V}_{j}=\{i_{l}\}$ and $w(i_{l},v_{-i})=w_j$, as needed. Otherwise,$v_{-i}$ satisfies that $inf_j(v_{-i})< sup_j(v_{-i})$. 
		 We now handle two cases separately:  $i_l<v_l$ and $i_l=v_l$.
		 If $i_l<v_l$: 
		 \begin{equation*}
		 inf_j(v_{-i})=i_l<v_l \underbrace{<}_\text{by construction} min(l)\le sup(l) \le sup_j(v_{-i}) 
		 \end{equation*}
		 which implies that $w(v_l,v_{-i})=w_j$, as needed.
		If $i_l=v_l$, the way we have chosen $v_{l}$ allows us to deduce that there is no type in $V_{i}$ that belongs in the interval $(i_l,min(l))$. Since $\mathcal{V}_{j}\subseteq V_{i}$, it implies that $\mathcal{V}_{j}\cap (i_l,min(l))=\varnothing$. However, $i_l$ is the greatest lower bound of $\mathcal{V}_{j}$, so clearly $i_l\in \mathcal{V}_{j}$ and thus $w(i_l,v_{-i})=w_j$.  
		
		Then, using these types,  we do a binary search for $inf_j(v_{-i})$, by looking up for an element $i_l$ such that $w(v_{l-1},v_{-i})<w_{j}$ and $w(v_{l},v_{-i})=w_{j}$ or $w(i_l,v_{-i})=w_j$. Due to monotonicity, $w(v_{l-1},v_{-i})<w_{j}$ implies that $inf_j(v_{-i})> i_{l-1}$. If $w(v_{l},v_{-i})=w_{j}$ or $w(i_l)=w_j$, it implies that $\mathcal{V}_{j}$ contains $i_l$ or $v_l$, so $inf_{j}(v_{-i})<i_{l+1}$. Hence, $inf_{j}(v_{-i})=i_{l}$. Clearly, if $inf_{j}(v_{-i})=\infty$, there is no index $l$ that satisfies those conditions and the protocol outputs $\infty$. It is easy to see that the properties of the sequence $i_1\le v_1<\ldots<i_r\le v_r$ guarantee that whenever the algorithm focuses on the left half or on the right half of the sequence $i_1,\ldots,i_r$, the element that we look for,  $inf_{j}(v_{-i})$, belongs in it (if it differs from $\infty$). 
		
		As for communication,
		each step in the binary search requires at most $2\cdot cc(f)$ bits due to the computations of  $w(f(v_{l},v_{-i}))$ and $w(f(i_{l},v_{-i}))$. Combining well known properties of binary search with Lemma \ref{alggspfoolingset} yields that there are at most $\mathcal{O}(\log r)= \mathcal{O}(\log |\Ima inf_{j}|)=\mathcal{O}(cc(f))$ steps. Hence, the total communication of computing $inf_j$ is $\mathcal{O}(cc^{2}(f))$ bits.
	\end{proof}	
Thus, computing $inf_j(\cdot)$ and executing $f$ on the instance $(inf_j(v_{-i}),v_{-i})$ for all $j\in [m]$  takes at most $\mathcal{O}(m\cdot cc^2{f})$ bits. By Lemmas \ref{wsufficeslemma} and \ref{howtolearnwlemma}, it suffices for payment computation. Recall that $m=|\Ima w_{i}|$. By repeating for all players, we get that the total communication of the suggested protocol is $\mathcal{O}(n\cdot cc^{2}(f)\cdot \max_{i} |\Ima w_{i}|)$. 
		\end{proof}
\paragraph{Tightness.} We explain why the factors $|\mathcal{A}|$ and $cc(f)$ cannot be omitted, i.e. that it cannot be the case that for all functions, or even for all single parameter functions that $cc_{IC}(f)\le poly(n,cc(f))$ or $cc_{IC}(f)\le poly(n,|\mathcal{A}|)$. For non-degenerate functions $cc(f)\ge n$, so we consider the $n$ factor to be less significant.  The social choice functions in Section \ref{lowerboundsec}  serve as counterexamples to $cc_{IC}(f)\le poly(n,cc(f))$. All of them satisfy that $cc_{IC}(f)=\exp(cc(f))$ with a constant number of players. 

 Similarly, we can easily provide a function with two alternatives and two players where the communication complexity of its implementation is arbitrarily large: let
	$f_k\colon  V_1\times V_2\to \{a_0,a_1\}$. The valuations of the players are their private information, and they do not depend on the alternative chosen:  $R_{1}=V_{1}=R_2=V_{2}=\{0,1,\ldots,2^{k}-1\}$. $f_k(v_1,v_2)=a_1$ if and only if the bit representations of $v_{1},v_{2}$ are disjoint. Clearly, $f_k$ is harder than the function $DISJ_{k}$ and it is well known that $cc(DISJ_{k})=\Omega(k)$ \cite{ccbook}. $f_k$ is implementable with no payments, because the valuations of  both players do not depend on the outcome. Hence, $cc_{IC}(f_k)= cc(f_k) = \Omega(k)$, whereas $n=2$ and $|\mathcal{A}|=2$.

\section{Payment Computation in Multi-Parameter Settings}\label{uniquesection}
So far, we considered deterministic algorithms only for single-parameter domains. In this section, we venture into the more challenging multi-parameter setting. We begin by proving an efficient algorithm for functions that satisfy uniqueness of payments.   
There is a vast literature on the topic of characterizing domains and functions where implementability guarantees uniqueness of payments (for example, \cite{vohrarevneueequiv,noamchapter}). Notice that \textquote{uniqueness of payments} is often called \textquote{revenue equivalence}. 

We conclude by showing that an efficient algorithm for functions that satisfy uniqueness of payments yields efficient algorithms for implementable functions with scalable and convex domains (Claims \ref{scalabledeterministicthm} and \ref{convexdetclaim}).


The intuition to the proof is as follows. Instead of providing a deterministic protocol that proves this bound as usual, we provide a non-deterministic protocol that shows it. A deterministic mechanism follows by relying on the known fact that the connection between deterministic and non-deterministic mechanisms is polynomial (in fact, this known fact was not proven for promise problems that are needed in our proofs, so we extend the connection to hold also for promise problems in Section \ref{promisesection}).

Thus, the problem boils down to providing a succinct witness that determines the price of the altrnative chosen. We make the observation that a payment can be determined by a conjunction of $\mathcal{O}(|\mathcal A|^2)$ inequalities. For illustration, fix $v_{-i}$. Let $v^a$ be such that $f(v^a,v_{-i})=a$ and $v^b$  such that $f(v^b,v_{-i})=b$. Then, it obviously holds that $v^a(a)-p^a \geq v^a(b)-p^b$ and $v^b(b)-p^b \geq v^b(a)-p^a$, so $v^a(a)-v^a(b)\geq p^a-p^b \geq v^b(a)- v^b(b)$. We show that if we choose, for each such $a$ and $b$, $v^a$ and $v^b$ such that the inequality is ``tight'' as possible, then the payment of an alternative can be determined. These $\mathcal{O}(|\mathcal A|^2)$ types will serve as our non-deterministic witness, which completes the proof except that the description of the types might be huge. We rely on the communication protocol of $f$ to provide a succinct description of them. 
We begin with some formalities. 
\begin{definition} 
	(Uniqueness of Payments)
	A social choice function $f\colon V=V_{1}\times\cdots\times V_n \to\mathcal{A}$ satisfies uniqueness of payments if for every pair of truthful mechanisms $(f,P)$ and $(f,P')$, it holds that there exist $n$ functions $h_{1},..,h_{n}$ where $h_{i}:V_{-i}\to \mathbb{R}$, such that for every player $i$ and every $(v_{1}...,v_{n})\in V$:
	\begin{equation}\label{paymentsuniqueeq}
	P_{i}(v_{1},\ldots,v_{n})=
	P'_{i}(v_{1},\ldots,v_{n})+h_{i}(v_{-i})
	\end{equation}
\end{definition}
In Section \ref{setupmyersonsection}, we define normalized mechanisms for single parameter settings. We generalize the definition to multi-parameter domains.  
\begin{definition} \label{normalizempdef} (Multi-Parameter Normalization)
For every player $i$, let $v_i^0 \in V_i$ be its zero type. We say that a mechanism $M=(f,P)$ is \emph{normalized} if for every player $i$ and every $v_{-i}\in V_{-i}$, $P_i(v_i^0,v_{-i})=0$. 
\end{definition}
For every $v_{-i}$, define $f(v_0,v_{-i})=a_0$ as the \emph{zero alternative} with respect to $v_{-i}$. 
It is easy to see that if $f$ satisfies uniqueness of payments, there exists a unique normalized mechanism which implements it. 
\begin{theorem} \label{uniquepaymentsalgthm}
Let $f\colon V_1\times \cdots \times V_n \to \mathcal{A}$ be an implementable social choice function that satisfies uniqueness of payments. Then, $cc_{IC}(f)\le  poly(n,|\mathcal{A}|,cc(f))$. 
\end{theorem}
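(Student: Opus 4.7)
The plan is to first construct a non-deterministic protocol of cost $\mathrm{poly}(n,|\mathcal{A}|,cc(f))$ that computes, for each player $i$, the unique normalized payment $P_i(v_i,v_{-i})$ whose existence is guaranteed by uniqueness of payments together with normalization. The deterministic bound then follows by invoking the polynomial equivalence between non-deterministic and deterministic communication complexity (extended to the promise setting, as the paper signals via the forward reference to Section~\ref{promisesection}).

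Fix a player $i$ and $v_{-i}\in V_{-i}$. By the taxation principle, $P_i$ depends only on the alternative $a=f(v_i,v_{-i})$ and on $v_{-i}$, so abbreviate this price as $p^a$. For every ordered pair of alternatives $(a,b)$ that are both reachable from $v_{-i}$, and every $v^a,v^b\in V_i$ with $f(v^a,v_{-i})=a$ and $f(v^b,v_{-i})=b$, truthfulness yields
$$
v^b(a)-v^b(b) \;\le\; p^a-p^b \;\le\; v^a(a)-v^a(b).
$$
Uniqueness of payments forces the tightest such bounds to coincide: if the supremum of $v^a(a)-v^a(b)$ over admissible $v^a$ were strictly larger than the infimum of $v^b(a)-v^b(b)$ over admissible $v^b$, then the differences $p^a-p^b$ could be perturbed inside this gap without violating any truthfulness constraint, yielding two distinct truthful payment schemes that do not differ only by a constant in $v_{-i}$, contradicting uniqueness. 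Hence, for every pair $(a,b)$, there exist tight witnesses $v^{a\to b}, v^{b\to a}\in V_i$ that pin down $p^a-p^b$ exactly, and in particular the witnesses needed to determine the entire chart of payments number only $O(|\mathcal{A}|^2)$.

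The non-deterministic protocol now proceeds as follows. For every ordered pair $(a,b)$ the prover sends a tight witness $v^{a\to b}$, certifying the identity $f(v^{a\to b},v_{-i})=a$ together with the two scalars $v^{a\to b}(a)$ and $v^{a\to b}(b)$. A raw description of $v^{a\to b}$ could be huge, so instead the prover only sends the leaf of the protocol $\pi^f$ that the input $(v^{a\to b},v_{-i})$ would reach. This leaf is a combinatorial rectangle labelled by $a$, and any type in its $V_i$-side can serve as the witness; the verifier checks consistency by confirming that $v_{-i}$ lies in the appropriate face and that the leaf label is $a$. Describing one leaf costs $cc(f)$ bits, so the whole certificate fits in $O(|\mathcal{A}|^2\cdot cc(f))$ bits. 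From the tight bounds the verifier recovers every difference $p^a-p^b$, and the anchoring condition $P_i(v_i^0,v_{-i})=0$ (where $v_i^0$ is the zero type from Definition~\ref{normalizempdef}) pins down each $p^a$ in absolute terms. Repeating over all $n$ players multiplies the bound by $n$.

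The main obstacle is the last step: converting non-deterministic to deterministic communication. The Aho--Ullman--Yannakakis-type equivalence is classically stated for total functions, whereas the verification task above is intrinsically a promise problem (dishonest purported witnesses that do not correspond to any type in $V_i$ need not be correctly rejected). The paper handles this by extending the polynomial equivalence to promise problems, which is exactly the content of Section~\ref{promisesection}. Once that extension is available, applying it to the non-deterministic protocol described above yields $cc_{IC}(f)\le \mathrm{poly}(n,|\mathcal{A}|,cc(f))$, as required.
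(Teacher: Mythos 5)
Your high-level skeleton matches the paper's (a non-deterministic protocol whose witness is $O(|\mathcal{A}|^2)$ types encoded as leaves of $\pi^f$, followed by the promise-problem non-deterministic-to-deterministic conversion of Section~\ref{promisesection}), but the central step --- why $O(|\mathcal{A}|^2)$ witnesses pin down the price --- is wrong as argued. You claim that uniqueness of payments forces, for every ordered pair $(a,b)$, the pairwise bounds on $p^a-p^b$ to be tight, so that ``tight witnesses'' $v^{a\to b},v^{b\to a}\in V_i$ exist and determine $p^a-p^b$ exactly. Neither half of this holds. First, uniqueness of payments is a global property of the whole constraint system: perturbing $p^a-p^b$ inside a gap of the \emph{direct} $(a,b)$ constraint changes the constraints between $a$ and every other alternative as well, so a gap in one pairwise constraint does not yield a second truthful payment scheme; payments can be pinned down by paths through intermediate alternatives while every individual two-cycle constraint is slack. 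Second, even where the bound $\delta_{ab}(v_{-i})=\inf\{v(a)-v(b)\mid f(v,v_{-i})=a\}$ is the relevant quantity, the infimum need not be attained by any type in $V_i$, so the ``tight witness'' you rely on may simply not exist. The paper circumvents both problems: it never claims pairwise tightness, but instead shows (Lemma~\ref{uniqueconstraints1}) that the \emph{system} $\mathcal{M}(a)-\mathcal{M}(b)\le\delta_{ab}(v_{-i})$ for all reachable $a,b$, together with normalization, has a unique solution; it proves $\Ima\delta_{ab}$ is finite with $\log|\Ima\delta_{ab}|\le cc(f)$ (Lemma~\ref{Rabccfbound}); and it uses a witness type with $v(a)-v(b)$ strictly below the \emph{next} element of $\Ima\delta_{ab}$ (Lemma~\ref{typeexistslemma}) as a certificate that $\delta_{ab}(v'_{-i})\le\delta_{ab}(v_{-i})$ for every consistent $v'_{-i}$, which via Proposition~\ref{asquaredtypeslemma} forces all consistent $v'_{-i}$ to present the same price. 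Your proposal contains no analogue of Lemma~\ref{Rabccfbound} or Lemma~\ref{typeexistslemma}, and without them the witness has no meaning.

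A related soundness gap: in a non-deterministic protocol the verifier must be protected against arbitrary witnesses, but your verifier ``recovers every difference $p^a-p^b$'' by trusting that the supplied witnesses are extremal, something it cannot check from a single leaf; a prover sending valid but non-tight types would make it output a wrong price. The paper's condition that \emph{all} $v'_{-i}$ consistent with the sent leaves yield the same price is precisely what provides soundness without the verifier ever knowing the extremal values. Two smaller issues: your certificate includes the real scalars $v^{a\to b}(a)$ and $v^{a\to b}(b)$, which need not admit any bounded-bit encoding (the paper sends only leaf indices), and in the perturbation sentence the roles of the supremum over $v^a$ and the infimum over $v^b$ are swapped relative to the inequality you wrote.
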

\begin{proof}[Proof of Theorem \ref{uniquepaymentsalgthm}]

	Fix an implementable function $f$ that satisfies uniqueness of payments, and denote the normalized mechanism for it as $M$. We show an upper bound for $cc_{IC}(f)$ by presenting a communication protocol for $M$. Computing the outcome requires $cc(f)$ bits, so clearly the tricky part is computing the payments. 
	By the taxation principle, the payment of every player is a function of $v_{-i}$ and of the alternative chosen. Hence,  we define a promise function for the price of an alternative $a\in \mathcal{A}$ given $v_{-i}$, $price_{i}^{a}:V_{-i}\to \mathbb{R}$, with the promise that $a$ is reachable from $v_{-i}$.\footnote{All elements in  $\Ima price^{i}_{a}$ are real. The reason for it is as follows. The price of a reachable alternative $a$ given $v_{-i}$ cannot be $-\infty$, because then a player would be incentivized to misreport his type whenever it is $v_{0}$, by reporting instead a type that reaches $a$. Similarly, the price  cannot be $\infty$, because in this case the player would deviate from truthfulness whenever his real type is the type that reaches $a$.}   
	\begin{claim}\label{claimstep2}
		For every player $i$, $a\in \mathcal{A}$ and price $p\in \Ima price_{i}^{a}$, it holds that
		$N(price_i^{a}) \le \mathcal{O}(|\mathcal{A}|^{2} \cdot cc(f))$.
	\end{claim}
	The proof of Claim \ref{claimstep2} can be found in Section \ref{claimstep2subsec}. 
	Combining the polynomial relation between deterministic and non-deterministic communication complexity (Section \ref{promisesection}) with Claim \ref{claimstep2}, we get that: 
	\begin{equation*}
	cc_{IC}(f) \le \underbrace{cc(f)}
	_{\substack{\text{computing}\\ {f(v=a)}}} 
	+ \sum_{i=1}^{n} cc(price_{i}^{a})  
	\underbrace{\le}_{\substack{\text{by}\\\text{Theorem \ref{multipartynondtodreductionpromise}}}}
	cc(f) + \sum_{i=1}^{n} poly(n,N(price_i^a)) 
	\underbrace{\le}_{\substack{\text{by}\\\text{Claim \ref{claimstep2}}}}
	poly(n,|\mathcal{A}|,cc(f)) 
	\end{equation*} 
	Observe that $f(v)=a$ implies that for every player $i$, $a$ is reachable from $v_{-i}$, so $v_{-i}$ satisfies the promise. 
\end{proof}
\subsection{Proof of Main Claim} \label{claimstep2subsec}
\begin{proof}[Proof of Claim \ref{claimstep2}]

Fix a player $i$ and an alternative $a^{\ast}$. We will prove the claim by presenting a proof system for $price_{i}^{a^{\ast}}$. Since $price_{i}^{a^{\ast}}$ is a promise function, a valid proof system for it satisfies that if $v_{-i}$ breaks the promise, i.e., $a^{\ast}$ is not reachable from $v_{-i}$, then  the players reject all witnesses for it.\footnote{See Definition \ref{promisedefinition}.}  

First, we make some definitions and prove useful properties. Following the notation of \cite{bikhchandani}, let $\delta_{ab}:V_{-i}\to\mathbb{R}\cup\{\pm\infty\}$ be a function  that maps $v_{-i}\in V_{-i}$ to  $\inf\{v(a)-v(b)\hspace{0.25em}|\hspace{0.25em}f(v,v_{-i})=a\}$ for every pair of alternatives $a,b\in \mathcal{A}$.\footnote{If the  set defined by $v_{-i}$ is not bounded from below,  $\delta_{ab}(v_{-i})$ outputs $-\infty$.} Because of truthfulness, $\inf\{v(a)-v(b)\hspace{0.25em}|\hspace{0.25em}f(v,v_{-i})=a\}$ is un upper bound of the difference between the payment of $a$ and the payment  of $b$ in every payment scheme that implements $f$. 
We begin with the following technical lemma:
\begin{lemma} \label{Rabccfbound}
	Let $f\colon V_1\times \cdots\times V_n\to\mathcal{A}$ be a social choice function.Then, for every player $i$ and every pair of alternatives $a,b\in \mathcal{A}$, $cc(f)\ge \log |\Ima \delta_{ab}|$. In particular, $\Ima \delta_{ab}$ is finite. 
\end{lemma}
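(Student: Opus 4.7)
The strategy is a fooling set argument. Let $k = |\mathrm{Im}\,\delta_{ab}|$ (possibly $\infty$; if $k$ is finite we enumerate, otherwise we truncate to an arbitrary finite subfamily and derive a contradiction if $cc(f)$ is finite). Enumerate the distinct values as $d_1 < d_2 < \ldots < d_k$ in $[-\infty,+\infty]$ and pick representatives $v_{-i}^{1}, \ldots, v_{-i}^{k} \in V_{-i}$ with $\delta_{ab}(v_{-i}^{j}) = d_j$. Set $d_{k+1} := +\infty$ for notational convenience. The key observation is that for each $j$ with $d_j < +\infty$, the infimum condition $\delta_{ab}(v_{-i}^{j}) = d_j < d_{j+1}$ guarantees a valuation $v_i^{j} \in V_i$ with $f(v_i^{j}, v_{-i}^{j}) = a$ and $v_i^{j}(a) - v_i^{j}(b) < d_{j+1}$. (For $d_j = -\infty$ the set is unbounded below, so such a witness certainly exists; for finite $d_j < d_{j+1}$ it is immediate from the definition of infimum.)

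Next, I would verify the fooling-set property. For any indices $j < j'$ among those for which $d_j < +\infty$, suppose for contradiction that $f(v_i^{j}, v_{-i}^{j'}) = a$. Then by the definition of $\delta_{ab}(v_{-i}^{j'})$ we would have $v_i^{j}(a) - v_i^{j}(b) \ge \delta_{ab}(v_{-i}^{j'}) = d_{j'}$; but we chose $v_i^{j}$ precisely so that $v_i^{j}(a) - v_i^{j}(b) < d_{j+1} \le d_{j'}$, a contradiction. Hence the pairs $\{(v_i^{j}, v_{-i}^{j})\}_{j\colon d_j<\infty}$ form a fooling set for the event ``$f = a$'', each pair lying in a distinct $a$-monochromatic rectangle of the protocol for $f$.

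Finally, I would combine cases. If $d_k < +\infty$ the fooling set has size $k$, yielding at least $k$ monochromatic rectangles and hence $2^{cc(f)} \ge k$. If $d_k = +\infty$, the fooling set provides $k-1$ distinct $a$-monochromatic rectangles; in this case no valuation gives output $a$ against $v_{-i}^{k}$, so any input of the form $(v_i, v_{-i}^{k})$ lies in a rectangle whose constant output is different from $a$, contributing (at least) one additional monochromatic rectangle disjoint from the previous ones. Either way, the protocol has at least $k$ leaves, so $cc(f) \ge \log k = \log |\mathrm{Im}\,\delta_{ab}|$. Finiteness of $\mathrm{Im}\,\delta_{ab}$ is then immediate from Remark~\ref{finiteccfremark}: if the image were infinite, the above argument (applied to any $k$-element subfamily) would force $cc(f) \ge \log k$ for arbitrarily large $k$, contradicting the finiteness of $cc(f)$.

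\textbf{Main obstacle.} The only subtle point is correctly handling the boundary values $\pm\infty$ of $\delta_{ab}$: the ``$+\infty$'' representative cannot be placed into a same-output fooling set at all (no input against $v_{-i}^{k}$ has output $a$), so one must account for it via an additional non-$a$ monochromatic rectangle rather than a fooling-set element. The ``$-\infty$'' case is harmless because the required witness $v_i^{1}$ with $v_i^{1}(a) - v_i^{1}(b) < d_2$ still exists. Once this bookkeeping is done, the bound $k \le 2^{cc(f)}$ is exact.
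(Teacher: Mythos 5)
Your proposal is correct and takes essentially the same route as the paper's proof: a fooling-set argument that pairs a representative $v_{-i}^{j}$ for each image value $d_j$ with a witness $v_i^{j}$ satisfying $f(v_i^{j},v_{-i}^{j})=a$ and $v_i^{j}(a)-v_i^{j}(b)<d_{j+1}$, so that crossed inputs violate the rectangle property, with finiteness of $\Ima \delta_{ab}$ deduced from the finiteness of $cc(f)$ by applying the bound to arbitrary finite subsets. The only cosmetic difference is the treatment of the value $+\infty$ (i.e.\ $a$ unreachable): the paper puts that representative, paired with an arbitrary $v_i$, directly into the fooling set and notes its leaf has a different output, while you count it separately as one additional non-$a$ leaf — the same idea.
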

\begin{proof}[Proof of Lemma \ref{Rabccfbound}] First, $cc(f)\ge \log|\Ima \delta_{ab}|$ implies that $\Ima \delta_{ab}$ is finite, because we assume that $cc(f)$ is finite (see Remark \ref{finiteccfremark}). In order to prove $cc(f)\ge \log |\Ima \delta_{ab}|$, we take an arbitrary \emph{finite} subset $T$ of $\Ima\delta_{ab}$. Denote the $t$ elements with $\delta_{1}<\ldots<\delta_{t}$. The proof is by the fooling set method. 
	For every ${j}\in[t]$, we take a type $v_{-i}^{j}$ such that $\delta_{ab}(v_{-i}^{j})=\delta_{j}$. We pair it with a type $v^{j}_{i}\in V_i$ in the following way. If $\delta_t=\infty$, we take $v_i^{t}$ to be an arbitrary type in $V_i$. Otherwise, we take $v_i^{t}$ such that $f(v^{t}_{i},v_{-i}^{t})=a$ and $\delta_t \le v^{t}_i(a)-v^{t}_i(b)$. For $j<t$, we take $v^{j}_i$ such that $f(v^{j}_{i},v_{-i}^{j})=a$
	and $\delta_{j}\le v^{j}_{i}(a)-v^{j}_{i}(b)<\delta_{j+1}$. 
	
	We need to show that every $v^{j}_{-i}$ has a matching  $v^{j}_{i}$ that satisfies those requirements. 
	If $\delta_t=\infty$, it is trivial that an arbitrary type in $V_i$ exists. If $\delta_t<\infty$, a matching $v_i^t$ for $v_{-i}^t$ necessarily exists because $\{v(a)-v(b)\hspace{0.25em}|\hspace{0.25em}f(v,v_{-i}^{t})=a\}$ is non empty and $\delta_{t}$ is its real lower bound. For $j<t$, such a type necessarily exists, since $\delta_{j}$ is by definition the greatest lower bound of $\{v(a)-v(b)\hspace{0.25em}|\hspace{0.25em}f(v,v_{-i}^{j})=a\}$. The fact that $\delta_{j+1}$ is not the infimum even though it is larger than $\delta_{j}$ implies that it is not a lower bound of the subset, so there exists a type $v_{i}^{j}$ such that $\delta_{j} \le v_{i}^{j}(a)-v_{i}^{j}(b)<\delta_{j+1}$ and $f(v_{i}^{j},v_{-i}^{j})=a$.
	
	Let $S = \{(v^{j}_{i},v_{-i}^{j}) \big| 1 \le j \le t\}$. We will show that any two inputs in $S$ cannot belong in the same leaf in any communication protocol for $f$. Let $(v_i^k,v_{-i}^k),(v_i^l,v_{-i}^l)$ be two inputs in $S$ where $k<l$. If $\delta_{l}=\infty$, clearly $f(v_i^l,v_{-i}^l)\neq a$, whereas $f(v_i^k,v_{-i}^k)=a$. Then, they clearly do not belong in the same leaf. If $\delta_{ab}(v_{-i}^l)<\infty$, we know that $f(v_i^k,v_{-i}^k)=(v_i^l,v_{-i}^l)=a$. 
	Note that
	$\delta_{k} \le v^{k}_{i}(a)-v^{k}_{i}(b)< \delta_{k+1}$.\footnote{$\delta_{k+1}$ is well defined because $k<l$.} $k<l$ means that $\delta_{k+1}\le \delta_{l}$. Combining these two inequalities yields:
	$	v^{k}_{i}(a)-v^{k}_{i}(b)<\delta_{l}=\inf\{v(a)-v(b)\hspace{0.25em}|\hspace{0.25em}f(v,v_{-i}^{l})=a\}$. 
	Thus, by the definition of infimum,  $f(v_{i}^{k},v_{-i}^{l})\neq a$ so  $(v^{k}_{i},v_{-i}^{k})$ and $(v^{l}_{i},v_{-i}^{l})$ cannot belong in the same leaf because they violate the mixing property. Hence, the number of leaves is at least $t$, so $cc(f)\ge \log t$. If every finite subset $T$ of the set $\Ima\delta_{ab}$ is of size at most $2^{cc(f)}$, then the size of the set itself is at most $2^{cc(f)}$. Therefore, $2^{cc(f)}\ge |\Ima\delta_{ab}| \implies cc(f)\ge \log |\Ima\delta_{ab}|$. 
\end{proof}	
\begin{proposition} \label{asquaredtypeslemma}
	Let $f$ be function that satisfies uniqueness of payments and suppose that $\Ima \delta_{ab}$ is countable for all $a,b\in \mathcal{A}$. Then, for every 
	$v_{-i}$ and for every reachable alternative from it $a^\ast$, there exists a set $R\subseteq V_i$ of types $v_1,\ldots,v_t$, $t=\mathcal{O}(|\mathcal{A}|^2)$, such that if $v_{-i}'\in V_{-i}$ satisfies $f(v,v_{-i})=f(v,v_{-i}')$ for all $v\in R$, then $price_i^{a^\ast}(v_{-i})=price_i^{a^\ast}(v_{-i}')$. Moreover, $R$ contains the zero type $v_0$ and a type $v^{\ast}$ such that $f(v^{\ast},v_{-i})=a^\ast$.     
\end{proposition}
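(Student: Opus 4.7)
The plan is to construct an explicit witness set $R \subseteq V_i$ of $\mathcal{O}(|\mathcal{A}|^2)$ types such that any $v'_{-i}$ agreeing with $v_{-i}$ on $R$ is forced to give $a^\ast$ the same normalized price. First I will place $v_0$ and a type $v^\ast$ with $f(v^\ast, v_{-i}) = a^\ast$ into $R$. Then, for every ordered pair of distinct alternatives $(a, b)$ with $a$ reachable from $v_{-i}$, I will add a witness $v^{a \to b} \in V_i$ satisfying $f(v^{a \to b}, v_{-i}) = a$ and $v^{a \to b}(a) - v^{a \to b}(b) \in [\delta_{ab}(v_{-i}), \delta^{+}_{ab}(v_{-i}))$, where $\delta^{+}_{ab}(v_{-i})$ denotes the smallest element of $\Ima \delta_{ab}$ strictly larger than $\delta_{ab}(v_{-i})$ (interpreted as $+\infty$ if no such element exists). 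Such a witness exists by the definition of infimum and the finiteness of $\Ima \delta_{ab}$ granted by Lemma \ref{Rabccfbound}. The total size is $|R| \le 2 + |\mathcal{A}|(|\mathcal{A}|-1) = \mathcal{O}(|\mathcal{A}|^2)$, and by construction $v_0, v^\ast \in R$ as the statement requires.

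Next, I will verify the pinning-down property. Suppose $v'_{-i}$ agrees with $v_{-i}$ on every $v \in R$. Then $f(v_0, v'_{-i}) = a_0$ makes $a_0$ the zero alternative of $v'_{-i}$ as well, so normalization yields $price_i^{a_0}(v'_{-i}) = 0$; $f(v^\ast, v'_{-i}) = a^\ast$ guarantees that $a^\ast$ is reachable from $v'_{-i}$; and $f(v^{a \to b}, v'_{-i}) = a$ forces $\delta_{ab}(v'_{-i}) \le v^{a \to b}(a) - v^{a \to b}(b) < \delta^{+}_{ab}(v_{-i})$. Since $\delta_{ab}(v'_{-i}) \in \Ima \delta_{ab}$, this pins down $\delta_{ab}(v'_{-i}) \le \delta_{ab}(v_{-i})$; moreover, every alternative reachable from $v_{-i}$ remains reachable from $v'_{-i}$.

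The price equality will then follow from the classical shortest-path characterization of payments (Rochet; Heydenreich--M\"uller--Vohra): in the directed graph on reachable alternatives with edge from $b$ to $a$ of weight $\delta_{ab}(\cdot)$, the normalized price of $c$ equals $\ell(a_0 \to c)$, the shortest-path length from $a_0$ to $c$, and uniqueness of payments is equivalent to the cycle identity $\ell(a_0 \to c) + \ell(c \to a_0) = 0$. Because the $v_{-i}$-graph embeds into the $v'_{-i}$-graph with entrywise smaller-or-equal edge weights, every path in the former lifts to a no-longer path in the latter, giving $\ell_{v'_{-i}}(a_0 \to a^\ast) \le \ell_{v_{-i}}(a_0 \to a^\ast)$ and, symmetrically, $\ell_{v'_{-i}}(a^\ast \to a_0) \le \ell_{v_{-i}}(a^\ast \to a_0)$. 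Combining these with the cycle identity applied in both graphs yields the reverse inequality $\ell_{v'_{-i}}(a_0 \to a^\ast) = -\ell_{v'_{-i}}(a^\ast \to a_0) \ge -\ell_{v_{-i}}(a^\ast \to a_0) = \ell_{v_{-i}}(a_0 \to a^\ast)$, hence equality, so $price_i^{a^\ast}(v'_{-i}) = price_i^{a^\ast}(v_{-i})$.

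The main obstacle I foresee is justifying the shortest-path characterization together with the cycle identity under the uniqueness-of-payments assumption, while carefully handling non-reachable alternatives and possible $\pm\infty$ entries of $\Ima \delta_{ab}$; once those ingredients are in place, the combinatorial witness count and the inf/sup manipulation are elementary.
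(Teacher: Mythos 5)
Your construction of the witness set and the way you pin down the menu constraints are exactly the paper's: the same $\mathcal{O}(|\mathcal{A}|^2)$ types (the zero type, a type reaching each relevant alternative, and for each ordered pair $(a,b)$ a type achieving $a$ with $v(a)-v(b)$ below the next element of $\Ima\delta_{ab}$), and the same conclusion that agreement on $R$ forces $A\subseteq A'$, a common zero alternative, and $\delta_{ab}(v_{-i}')\le\delta_{ab}(v_{-i})$ for all $a,b\in A$ (this is the paper's Lemma \ref{typeexistslemma}). Where you diverge is the last step: the paper closes with its self-contained Lemma \ref{uniqueconstraints1}, which shows that under uniqueness of payments the normalized restricted menu $\mathcal{M}^{A}_{v_{-i}}$ is the \emph{unique} normalized solution of the difference constraints $\mathcal{M}(a)-\mathcal{M}(b)\le\delta_{ab}(v_{-i})$, so the $v_{-i}'$-menu restricted to $A$ must coincide with it; you instead invoke the shortest-path characterization of normalized payments together with the revenue-equivalence cycle identity $\ell(a_0\to c)+\ell(c\to a_0)=0$, and then use monotonicity of shortest paths under the pinned edge weights. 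The two closings are equivalent in substance (the cycle identity is precisely what makes the constrained system have a unique normalized solution), and your path-lifting argument is correct as stated, including the direction of the inequalities. The one ingredient you still owe — and correctly flag — is the justification that any normalized vector satisfying the constraints (in particular both extremal shortest-path solutions) actually arises as the menu of a truthful payment scheme for $f$, so that uniqueness of payments, which is stated for full payment schemes, bites; the paper handles exactly this by extending a candidate restricted menu with $\mathcal{M}_{v_{-i}}$'s prices on unreachable alternatives inside Lemma \ref{uniqueconstraints1}. Your route buys a conceptual link to the classical Rochet/Heydenreich--M\"uller--Vohra machinery, while the paper's route is shorter and avoids importing the cycle-identity characterization; with that extension step spelled out, your proof is complete.
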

Fix a function $f$ that satisfies those assumptions.  By the taxation principle, we interpret a truthful mechanism as a process where a menu $\mathcal{M}_{v_{-i}}$ is presented to every player $i$ with 
her prices for all alternatives,
and then an alternative which maximizes  the utilities of  all players is chosen. We assume that if an alternative is not reachable from $v_{-i}$, its price in the menu is $\infty$ and otherwise by definition $\mathcal{M}_{v_{-i}}(a)=price_i^a(v_{-i})$.   
Let $\mathcal{M}\in \mathbb{R}^{|\mathcal{A}|}$ be a menu. Given a menu $\mathcal{M}$ and a subset of alternatives $A\subseteq \mathcal{A}$, we denote  with $\mathcal{M}^{A}$ the \textquote{restricted} version of $\mathcal{M}$, with prices only for the alternatives in $A$. We say that a menu $\mathcal{M}$ is truthful for $v_{-i}$ if player $i$ never increases his utility by misreporting his type given the prices in $\mathcal{M}$.  
\begin{lemma} \label{uniqueconstraints1}
	Fix a type $v_{-i}$, denote its reachable alternatives as $A$ and its zero alternative as $a_0$. Let $\mathcal{M}$ be a normalized menu vector, i.e., $\mathcal{M}(a_{0})=0$. Then, $\mathcal{M}(a)-\mathcal{M}(b) \le \delta_{ab}(v_{-i})$ for every $a,b\in A$ if and only if $\mathcal{M}^{A}=\mathcal{M}^{A}_{v_{-i}}$. Moreover, every truthful menu $\mathcal{M}$ for $v_{-i}$ satisfies that for all reachable alternatives $a,b\in A$,  $\mathcal{M}(a)-\mathcal{M}(b)\le \delta_{ab}(v_{-i})$. 
\end{lemma}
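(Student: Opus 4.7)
The plan is to split the argument into three sub-claims: the ``moreover'' clause about every truthful menu, the easy ($\Leftarrow$) direction, and the main ($\Rightarrow$) direction which does the real work via uniqueness of payments. For the moreover clause, I would fix a pair $a,b \in A$ and pick any $v\in V_i$ with $f(v, v_{-i}) = a$ (which exists because $a \in A$ is reachable from $v_{-i}$). Truthfulness of $\mathcal{M}$ for $v_{-i}$ applied to the deviation $a \to b$ yields $v(a) - \mathcal{M}(a) \ge v(b) - \mathcal{M}(b)$, i.e.\ $\mathcal{M}(a) - \mathcal{M}(b) \le v(a) - v(b)$. Taking the infimum over all such $v$ gives $\mathcal{M}(a) - \mathcal{M}(b) \le \delta_{ab}(v_{-i})$. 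The ($\Leftarrow$) direction is then immediate: $\mathcal{M}_{v_{-i}}$ is the menu induced by the (truthful) mechanism $M$ at the profile $v_{-i}$, hence truthful for $v_{-i}$, so applying the moreover clause to $\mathcal{M}_{v_{-i}}$ gives the required inequalities, which transfer to any $\mathcal{M}$ satisfying $\mathcal{M}^A = \mathcal{M}_{v_{-i}}^A$.

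For the ($\Rightarrow$) direction, suppose $\mathcal{M}$ is normalized and satisfies $\mathcal{M}(a) - \mathcal{M}(b) \le \delta_{ab}(v_{-i})$ for every $a,b \in A$. I would build a modified mechanism $M' = (f, P')$ that is identical to $M$ everywhere except that, at the fixed profile $v_{-i}$, it charges player $i$ according to $\mathcal{M}$: set $P_j' = P_j$ for $j \neq i$, $P_i'(v, v_{-i}') = P_i(v, v_{-i}')$ for $v_{-i}' \neq v_{-i}$, and $P_i'(v, v_{-i}) = \mathcal{M}(f(v, v_{-i}))$ for all $v \in V_i$. Truthfulness for any $j \neq i$ is inherited from $M$ since neither $P_j$ nor $f$ changed. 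The only non-trivial check is truthfulness of player $i$ at the profile $v_{-i}$: given $v, v' \in V_i$ with $a = f(v, v_{-i})$ and $b = f(v', v_{-i})$, both $a,b \in A$, and the fact that $v$ witnesses membership in the infimum set defining $\delta_{ab}(v_{-i})$ yields $\mathcal{M}(a) - \mathcal{M}(b) \le \delta_{ab}(v_{-i}) \le v(a) - v(b)$, which is exactly the required incentive inequality. The main obstacle is recognizing that the conjunction of these $\mathcal{O}(|A|^2)$ linear inequalities is \emph{precisely} the condition for $\mathcal{M}$ to serve as a globally consistent truthful menu at $v_{-i}$; everything else is bookkeeping.

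Once $M'$ is shown to be truthful, uniqueness of payments supplies a function $h_i \colon V_{-i} \to \mathbb{R}$ with $P_i'(v, v_{-i}'') = P_i(v, v_{-i}'') + h_i(v_{-i}'')$ for every $v, v_{-i}''$. Evaluating at the zero type $v_0 \in V_i$ and at the profile $v_{-i}$, for which $f(v_0, v_{-i}) = a_0$, normalization of $\mathcal{M}$ gives $P_i'(v_0, v_{-i}) = \mathcal{M}(a_0) = 0$, and normalization of $M$ gives $P_i(v_0, v_{-i}) = \mathcal{M}_{v_{-i}}(a_0) = 0$; therefore $h_i(v_{-i}) = 0$. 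Consequently $\mathcal{M}(f(v, v_{-i})) = \mathcal{M}_{v_{-i}}(f(v, v_{-i}))$ for every $v \in V_i$, and since every alternative $a \in A$ is of the form $f(v, v_{-i})$ for some $v$ by definition of reachability, we conclude $\mathcal{M}^A = \mathcal{M}_{v_{-i}}^A$.
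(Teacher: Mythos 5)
Your proof is correct and follows essentially the same route as the paper: the moreover clause and the easy direction via truthfulness-plus-infimum are identical, and the hard direction likewise verifies that the $\delta$-inequalities make $\mathcal{M}$ a truthful menu at $v_{-i}$ and then invokes uniqueness of payments together with normalization at the zero type. The only difference is presentational — you spell out the modified mechanism $M'$ and pin down $h_i(v_{-i})=0$ explicitly, where the paper argues by contradiction with the uniqueness of the normalized menu, which is the same underlying argument.
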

\begin{proof} We first show that $\mathcal{M}_{v_{-i}}$ satisfies the constraints. Let $a,b$ be two reachable alternatives from $v_{-i}$. Recall that mechanism $M$ implements $f$, so for all $v$ such that $f(v,v_{-i})=a$:
	\begin{equation*}
	v(a)-\mathcal{M}_{v_{-i}}(a) \ge  v(b)-\mathcal{M}_{v_{-i}}(b)
	\implies v(a)-v(b) \ge  \mathcal{M}_{v_{-i}}(a) -\mathcal{M}_{v_{-i}}(b) 
	\end{equation*}
	This weak inequality holds for all the elements in $\{v(a)-v(b)\hspace{0.25em}|\hspace{0.25em}f(v,v_{-i})=a\}$, so it also holds for the infimum: 
	\begin{equation} \label{truthfuleq1}
	\delta_{ab}(v_{-i})=\inf\{v(a)-v(b)\hspace{0.25em}|\hspace{0.25em}f(v,v_{-i})=a\}\ge \mathcal{M}_{v_{-i}}(a) -\mathcal{M}_{v_{-i}}(b)
	\end{equation}
	Clearly, the same argument implies that every truthful menu $\mathcal{M}$ for $v_{-i}$ satisfies (\ref{truthfuleq1}).
	  
	Now, let $\mathcal{M}^{A}\neq \mathcal{M}^{A}_{v_{-i}}$ be a normalized menu vector that satisfies $\mathcal{M}(a)-\mathcal{M}(b) \le \delta_{ab}(v_{-i})$ for all $a,b\in A$. It means that $f(v,v_{-i})=a$ implies that: 
	\begin{eqnarray*}
		v(a)-v(b)\ge \inf\{v(a)-v(b)\hspace{0.25em}|\hspace{0.25em}f(v,v_{-i})=a\}=\delta_{ab}(v_{-i})\ge \mathcal{M}(a)-\mathcal{M}(b)\implies \\ v(a)-v(b)\ge \mathcal{M}(a)-\mathcal{M}(b)\implies v(a)-\mathcal{M}(a)\ge v(b) - \mathcal{M}(b)
	\end{eqnarray*}
	In words, the menu $\mathcal{M}^{A}$ induces truthful behaviour for player $i$ given $v_{-i}$. We extend $\mathcal{M}^{A}$ with the same prices as $\mathcal{M}^{A}_{v_{-i}}$ for all $a\notin A$ and obtain a non-restricted menu, $\mathcal{M}$.
	By assumption, $\mathcal{M}$ 
	is  normalized, so we get that it is truthful and normalized, but differs from $\mathcal{M}_{v_{-i}}$, which is a contradiction to the uniqueness of payments assumption. 
\end{proof}
Denote the elements in $\Ima \delta_{ab}$ with $\{\delta_{1}<\ldots<\delta_{j}<\ldots\}$.
The fact that $\Ima \delta_{ab}$ is countable allows us to enumerate its elements. 
\begin{lemma} \label{typeexistslemma}
	Fix a type $v_{-i}$ and a pair of reachable alternatives $a,b\in \mathcal{A}$ from $v_{-i}$. Fix $\delta_j\in \Ima \delta_{ab}$ where $j<|\Ima \delta_{ab}|$. Then, $\delta_{ab}(v_{{-i}})\le \delta_j$ if and only if there exists a type $v\in V_i$ such that:
	\begin{enumerate}
		\item \label{v1} $f(v,v_{-i})=a$.
		\item \label{v2}  $v(a)-v(b)< \delta_{j+1}$.
	\end{enumerate}
	Therefore, the existence of a type  that satisfies conditions \ref{v1} and \ref{v2} implies that every truthful menu $\mathcal{M}$ for $v_{-i}$ satisfies that $\mathcal{M}(a)-\mathcal{M}(b)\le \delta_j$.  
\end{lemma}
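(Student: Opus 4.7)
The plan is to prove the biconditional directly from the definition of infimum together with the fact (given in the lemma) that the elements of $\Ima \delta_{ab}$ are enumerated in strictly increasing order $\delta_1 < \delta_2 < \ldots$, and then obtain the closing sentence by combining the reverse direction with Lemma \ref{uniqueconstraints1}. Write $S := \{v(a)-v(b) \mid v \in V_i,\ f(v,v_{-i})=a\}$, so that $\delta_{ab}(v_{-i}) = \inf S$ by definition.

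For the forward direction, assume $\delta_{ab}(v_{-i}) \le \delta_j$. Since the enumeration is strict, $\delta_j < \delta_{j+1}$, so $\delta_{ab}(v_{-i}) < \delta_{j+1}$. Because $\delta_{ab}(v_{-i})$ is the greatest lower bound of $S$, the strictly larger number $\delta_{j+1}$ cannot be a lower bound of $S$. Therefore some element of $S$ lies strictly below $\delta_{j+1}$, which is precisely a witness $v \in V_i$ with $f(v,v_{-i})=a$ and $v(a)-v(b) < \delta_{j+1}$, giving conditions 1 and 2. For the reverse direction, suppose such a $v$ exists. Then $v(a)-v(b) \in S$, so $\delta_{ab}(v_{-i}) = \inf S \le v(a)-v(b) < \delta_{j+1}$. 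Since $\delta_{ab}(v_{-i})$ is one of the enumerated values $\delta_1 < \delta_2 < \ldots$ and is strictly smaller than $\delta_{j+1}$, it must coincide with some $\delta_k$ with $k \le j$, hence $\delta_{ab}(v_{-i}) \le \delta_j$.

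For the concluding ``therefore'' statement, apply the reverse direction just established: the existence of $v$ satisfying conditions 1 and 2 yields $\delta_{ab}(v_{-i}) \le \delta_j$, and then the second part of Lemma \ref{uniqueconstraints1} gives $\mathcal{M}(a) - \mathcal{M}(b) \le \delta_{ab}(v_{-i}) \le \delta_j$ for every truthful menu $\mathcal{M}$ for $v_{-i}$. There is no real obstacle here; the only mild subtlety worth noting is the corner case where $S$ is unbounded below (so $\delta_{ab}(v_{-i}) = -\infty$), but then $\delta_{ab}(v_{-i}) \le \delta_j$ holds trivially and unboundedness immediately yields an element of $S$ below $\delta_{j+1}$, so both directions still go through verbatim.
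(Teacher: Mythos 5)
Your proof is correct and follows essentially the same argument as the paper: the forward direction uses that the infimum $\delta_{ab}(v_{-i})<\delta_{j+1}$ cannot leave $\delta_{j+1}$ as a lower bound, the reverse direction uses membership of $v(a)-v(b)$ in the infimized set together with $\delta_{ab}(v_{-i})\in \Ima\delta_{ab}$, and the final claim combines this with Lemma \ref{uniqueconstraints1} exactly as the paper does. Your remark on the $-\infty$ corner case is a harmless addition the paper leaves implicit.
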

\begin{proof} 
	Assume $\delta_{ab}(v_{-i}) \le \delta_j$. $\delta_j<\delta_{j+1}$, so $\delta_{ab}(v_{-i})<\delta_{j+1}$. Recall that by definition, $\delta_{ab}(v_{-i})=\inf\{v(a)-v(b)\hspace{0.25em}|\hspace{0.25em}f(v,v_{-i})=a\}$. The fact that $\delta_{ab}(v_{-i})$ is the greatest lower bound and not $\delta_{j+1}$ implies that it is not a lower bound at all, and thus there exists a type $v\in V_i$ such that $f(v,v_{-i})=a$ and $v(a)-v(b)< \delta_{j+1}$.     
	
	For the other direction, assume that there exists $v\in V_{i}$ such that  $f(v,v_{-i})=a$ and $v(a)-v(b)< \delta_{j+1}$. It means that $v(a)-v(b)\in \{v(a)-v(b)\hspace{0.25em}|\hspace{0.25em}f(v,v_{-i})=a\}$, so a lower bound for this subset is necessarily smaller than $v(a)-v(b)$, which is strictly smaller than $\delta_{j+1}$.  Thus, $\delta_{ab}(v_{-i})<\delta_{j+1}$. Combining it with the fact that  $\delta_{ab}(v_{-i}) \in \Ima \delta_{ab}$ yields that $\delta_{ab}(v_{-i})$ belongs in  $\{\delta_1,\ldots,\delta_j\}$ and thus $\delta_{ab}(v_{-i}) \le \delta_j$. Recall that by Lemma \ref{uniqueconstraints1} every truthful menu $\mathcal{M}$ for $v_{-i}$ satisfies that $\mathcal{M}(a)-\mathcal{M}(b)\le \delta_{ab}(v_{-i})$. Combining the two inequalities gives that every truthful menu  $\mathcal{M}$ of $v_{-i}$ satisfies $\mathcal{M}(a)-\mathcal{M}(b)\le \delta_{j}$. 
\end{proof}
\begin{proof}[Proof of Proposition \ref{asquaredtypeslemma}] 
	Fix a type $v_{-i}$ and an  alternative $a^{\ast}$ which is reachable from it. Denote its set of reachable alternatives with $A$, its zero alternative with $a_0$ and its normalized menu with $\mathcal{M}_{v_{-i}}$. 
	
	We construct a subset $R\subseteq V_i$ as follows. First, we add the zero type $v_0$ to $R$. Then, for every reachable alternative $a\in A$, we add to $R$ a type $v_i$ such that $f(v_i,v_{-i})=a$. For every \emph{ordered} pair   $(a,b) \in A$, where $\delta_{ab}(v_{-i})$ is not the largest element in $\Ima \delta_{ab}$, we add to $R$ the following type: denote with $j_{ab}$ the index of $\delta_{ab}(v_{-i})$ in $\Ima \delta_{ab}$.  By Lemma \ref{typeexistslemma}, there exists a type $v_{ab}$ such that $f(v_{ab},v_{-i})=a$ and $v_{ab}(a)-v_{ab}(b)<\delta_{j_{ab}+1}$. We add it to $R$. Notice that $|R|=\mathcal{O}(|\mathcal{A}|^2)$.   
	
	Let $v_{-i}'\in V_{-i}$ be a type such that $f(v,v_{-i})=f(v',v_{-i})$ for all $v\in R$. We want to prove that $price_{i}^{a^\ast}(v_{-i})=price_{i}^{a^\ast}(v_{-i}')$. 
	Denote the set of reachable alternatives of $v_{-i}'$ as $A'$.
	By construction, all alternatives in $A$ are  reachable from $v_{-i}'$ as well, so $A\subseteq A'$. Also, $f(v_0,v_{-i}')=f(v_{0},v_{-i}')=a_0$, i.e. $v_{-i}$ and $v_{-i}'$ have the same zero alternative, so a normalized menu for $v_{-i}'$ necessarily satisfies $\mathcal{M}(a_0)=0$. Fix an ordered pair $(a,b)\in A$, and denote $\delta_{ab}(v_{-i})$ with $\delta_{j_{ab}}$. If $j_{ab}<|\Ima \delta_{ab}|$, by construction there exists $v_{ab}\in R$ such that $a=f(v_{ab},v_{-i})=f(v_{ab},v_{-i}')$ and $v_{ab}(a)- v_{ab}(b)< \delta_{j_{ab}+1}$. Recall that $A\subseteq A'$, so $a$ and $b$ are reachable from $v_{-i}'$ as well. Hence, we can use Lemma \ref{typeexistslemma} for $v_{-i}'$ and get that  $\delta_{ab}(v_{-i}')\le \delta_{j_{ab}}$ and that every truthful menu $\mathcal{M}$ for $v_{-i}'$ satisfies that $\mathcal{M}(a)- \mathcal{M}(b) \le  \delta_{j_{ab}}$. If $j_{ab}=|\Ima \delta_{ab}|$, $\Ima \delta_{ab}$ is necessarily finite. By Lemma \ref{uniqueconstraints1}, the fact that alternatives $a$ and $b$ are reachable from $v_{-i}'$ implies that every truthful menu for $v_{-i}'$ satisfies:
	\begin{equation*}
	\mathcal{M}(a)- \mathcal{M}(b) \le \delta_{ab}(v_{-i}') \le \max_{v_{-i}\in V_{-i}}\delta_{ab}(v_{-i})=\delta_{j_{ab}}   
	\end{equation*}
	Denote with $\mathcal{M}^{A}_{v_{-i}}$ and with $\mathcal{M}^{A}_{v_{-i}'}$ the menus presented by the mechanism $M$ for $v_{-i}$ and for $v_{-i}'$, restricted to the alternatives in $A$. By Lemma \ref{uniqueconstraints1}, $\mathcal{M}^{A}_{v_{-i}}$ is the \emph{only} menu that satisfies 
	$\mathcal{M}(a_0)=0$ and $\mathcal{M}(a)-\mathcal{M}(b)\le\delta_{ab}(v_{-i})= \delta_{j_{ab}}$ for every $a,b\in A$.  As we have just shown, a truthful and normalized menu $\mathcal{M}$ for $v_{-i}'$ must satisfy those conditions as well. Therefore, $\mathcal{M}^A_{v_{-i}}\equiv \mathcal{M}^A_{v_{-i}'}$, because otherwise we get a contradiction to the uniqueness of $\mathcal{M}_{v_{-i}}^A$.  By definition, $a^\ast \in A$, so $\mathcal{M}^A_{v_{-i}}(a^\ast)=\mathcal{M}^A_{v_{-i}'}(a^\ast)$, and thus $price_{i}^{a^\ast}(v_{-i})=price_{i}^{a^\ast}(v_{-i}')$, which completes the proof. 
\end{proof}
Observe the following naive proof system. Denote the most efficient protocol of $f$ with $\Pi^{f}$. Fix a type $v_{-i}$. Notice that by Lemma \ref{Rabccfbound}, $\Ima \delta_{ab}$ is finite for all $a,b\in \mathcal{A}$ and $f$ satisfies uniqueness of payments, so there exists a set $R\subseteq V_i$ of  $\mathcal{O}(|\mathcal{A}|^2)$ types that satisfies the conditions of Proposition \ref{asquaredtypeslemma}. 
The protocol is as follows: 
the prover sends all the types in $R$, and the players simulate $\Pi^{f}(r,v_{-i})$ for every $r\in R$.  By Proposition \ref{asquaredtypeslemma}, it suffices for the players for the extraction of the price of $a^{\ast}$. 
However, this naive protocol might be too costly, because if the size of the domain $V_i$ is large, pointing to a single index in it might require too many bits.

We overcome this problem as follows.  Instead of sending the types in $R$ themselves, the prover sends for each $r\in R$ the leaf in $\Pi^{f}$ that $(r,v_{-i})$ reaches. By that, we take advantage of the facts that the protocol $\Pi^{f}$ is public and has at most $2^{cc(f)}$ leaves. 
We denote the leaf (combinatorial rectangle) of each $r\in R$ with $L^r=L^r_1\times\cdots\times L^r_n$, and the set of leaves sent by the prover with $\mathcal{L}$. We also define a set of types in $V_{-i}$ which are congruent with $\mathcal{L}$, $cands(\mathcal{L})=\{v_{-i}\in V_{-i} \hspace{0.25em} | \hspace{0.25em} \forall L\in \mathcal{L},v_{-i}\in L_{-i}  \}$. Before outputting a price, the players verify that:
\begin{enumerate}
	\item \label{rectanglecond} For every player $j\in N/\{i\}$, and for every $L\in \mathcal{L}$, $v_j\in L_j$. In other words, the leaves sent by the prover are congruent with the players' types.  
	\item \label{emptycond}  $cands(\mathcal{L})\neq \varnothing$. 
	\item \label{zerocond} There exists a leaf $L_0$ such that $(v_{0},v_{-i})\in L_0$. The alternative associated with this leaf is the zero alternative.
	\item \label{promcond3} There exists a leaf  $L\in \mathcal{L}$ labelled with alternative $a^{\ast}$. 
	\item \label{propcond} 	For  every $v_{-i}^1, v_{-i}^2\in cands(\mathcal{L})$,
	$price_{i}^{a^{\ast}}(v_{-i}^1)=price_{i}^{a^{\ast}}(v_{-i}^2)$. Denote this price as $price_{i}^{a^\ast}(\mathcal{L})$.    
\end{enumerate}
Note that verifying all conditions does not require any communication between the players. If the verification fails, they reject. Otherwise, they output $price_i^{a^\ast}(\mathcal{L})$. 
\paragraph{Correctness.} First, we will show that for every $v_{-i}$ and every reachable $a^{\ast}$, if the prover sends the set $R$ specified in Proposition \ref{asquaredtypeslemma}, the players output $price_i^{a^\ast}(v_{-i})$.
By construction, the set of leaves $\mathcal{L}$ sent by the prover satisfies that $v_{-i}\in L_{-i}$ for all $L_r\in \mathcal{L}$ because $\Pi^f(r,v_{-i})$ reaches the leaf $L_r$, so conditions \ref{rectanglecond} and \ref{emptycond} hold. Proposition \ref{asquaredtypeslemma} guarantees that conditions \ref{zerocond} and \ref{promcond3} hold as well. 

For condition \ref{propcond}, fix a type $v_{-i}' \in cand(\mathcal{L})$. 
We will show that for all  $r\in R$, $(r,v_{-i})$ and $(r,v_{-i}')$ belong in the same leaf of $\Pi^f$, and thus $f(r,v_{-i})=f(r,v_{-i}')$. Using Proposition \ref{asquaredtypeslemma}, it implies that  $price_i^{a^\ast}(v_{-i}')=price_i^{a^\ast}(v_{-i})$ as needed.       

To this end, fix a  type $r\in R$, and denote the leaf sent for it in $\mathcal{L}$ with $L^r$. By construction, $(r,v_{-i})$ reaches $L^r$ and $v_{-i}'\in L^{r}_{-i}$ because $v_{-i}' \in cand(\mathcal{L})$.  $L^r$ is a combinatorial rectangle, so using its mixing property, we get that $(r,v_{-i}')$ reaches $L^r$ as well.  

We still need to prove that if $a^{\ast}$ is not reachable from $v_{-i}$, i.e., $v_{-i}$ violates the promise, then the players reject all witnesses for it. We also need to prove that if $price_i^{a^\ast}(v_{-i})=p$, there is no witness that convinces the players that the price is different. 

 If $a^{\ast}$ is not reachable from $v_{-i}$, no witness for $v_{-i}$  satisfies conditions \ref{rectanglecond} and \ref{promcond3} simultaneously, so the players always reject types in $V_{-i}$ that violate the promise. Fix a type $v_{-i}$ such that $price_i^{a^\ast}(v_{-i})=p$, and denote its set of reachable alternatives with $A$. $f$ satisfies uniqueness of payments, so $\mathcal{M}_{v_{-i}}^{A}$ is the only restricted menu which is truthful and normalized for $A$.  By definition, $\mathcal{M}_{v_{-i}}^{A}(a^{\ast})=p$. If the players return a payment other than $p$, it means that there exist leaves in the protocol $\Pi^{f}$ that point at outcomes of $f$ that disqualify $p$ from being the price of $a^{\ast}$ for $v_{-i}$ because of violations of truthfulness or normalization. If $p$ is disqualified, the menu  $\mathcal{M}_{v_{-i}}^{A}$ is invalidated as well. However,  $\mathcal{M}_{v_{-i}}^{A}$ is truthful and normalized, so it  cannot be invalidated and we get a contradiction.
\paragraph{Communication.} The total communication of the protocol is $\mathcal{O}(|\mathcal{A}|^2\cdot cc(f))$ bits, because $R$ is of size $\mathcal{O}(|\mathcal{A}|^2)$ and for every type in $R$ the prover sends an index of a leaf in $\Pi^f$, using $cc(f)$ bits. 
\end{proof}
\subsection{Scalable and Convex Domains} 
Scalable domains satisfy unique payments (Proposition \ref{scalableuniquenessthm}) and convex domains are basically equivalent to scalable domains, so we get the following claims for free.   
 \begin{claim}  \label{scalabledeterministicthm}
	Let $f\colon V_1\times \cdots\times V_n\to \mathcal{A}$ be an implementable social choice function with scalable domains. Then, $cc_{IC}(f)\le poly(n,cc(f),|\mathcal{A}|)$. 
\end{claim}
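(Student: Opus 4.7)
The plan is to derive this claim as a direct corollary of Theorem \ref{uniquepaymentsalgthm}. That theorem already gives $cc_{IC}(f)\le poly(n,|\mathcal{A}|,cc(f))$ for every implementable social choice function that satisfies uniqueness of payments, so all that is needed here is to verify that scalable domains fall into this setting.

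First, I would argue that on scalable domains every implementable $f$ satisfies uniqueness of payments. Let $(f,P)$ and $(f,P')$ be two truthful mechanisms. Define the ``normalizations'' $\tilde P_i(v_i,v_{-i})=P_i(v_i,v_{-i})-P_i(\vec{0},v_{-i})$ and $\tilde P'_i(v_i,v_{-i})=P'_i(v_i,v_{-i})-P'_i(\vec{0},v_{-i})$. Subtracting a function of $v_{-i}$ from a truthful payment scheme preserves truthfulness (and scalable domains contain $\vec{0}$, so the normalization is well-defined), hence $(f,\tilde P)$ and $(f,\tilde P')$ are both truthful \emph{and} normalized. By Proposition \ref{scalableuniquenessthm}, the truthful-and-normalized payment scheme on a scalable domain is unique, namely the one given by the integral formula \eqref{scalablepayment}. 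Therefore $\tilde P_i\equiv \tilde P'_i$, which upon rearranging yields
\begin{equation*}
P_i(v_i,v_{-i})=P'_i(v_i,v_{-i})+\bigl(P_i(\vec{0},v_{-i})-P'_i(\vec{0},v_{-i})\bigr),
\end{equation*}
and setting $h_i(v_{-i}):=P_i(\vec{0},v_{-i})-P'_i(\vec{0},v_{-i})$ matches the definition of uniqueness of payments exactly.

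Second, having established that $f$ satisfies uniqueness of payments, I would simply invoke Theorem \ref{uniquepaymentsalgthm} to conclude $cc_{IC}(f)\le poly(n,|\mathcal{A}|,cc(f))$. There is no real obstacle in this proof, since both ingredients (uniqueness via Proposition \ref{scalableuniquenessthm} and the algorithmic upper bound) are already available; the only step that requires any care is the observation that the normalizing constant $P_i(\vec{0},v_{-i})$ is well-defined, which is exactly where the scalability assumption (guaranteeing $\vec{0}\in V_i$) is used. The analogous claim for convex domains then follows from Lemmas \ref{translationpreservespaymentslemma}, \ref{translationsamecommcomplexitylemma} and \ref{convextranslatesscalablelemma}, which reduce the convex case to the scalable case without affecting $cc_{IC}$.
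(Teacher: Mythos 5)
Your proposal is correct and follows the same route as the paper: the paper's proof likewise deduces uniqueness of payments from Proposition \ref{scalableuniquenessthm} and then applies Theorem \ref{uniquepaymentsalgthm}. The only difference is that you spell out the (standard, correct) normalization argument showing why the uniqueness of the normalized payment scheme yields uniqueness of payments in the sense of the definition, a step the paper treats as immediate.
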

\begin{proof}
	By Proposition \ref{scalableuniquenessthm}, $f$ satisfies uniqueness of payments. 
	Hence, by Theorem \ref{uniquepaymentsalgthm},  $cc_{IC}(f) \allowbreak \le \allowbreak poly(n,|\mathcal{A}|,cc(f))$.
\end{proof}
\begin{claim} \label{convexdetclaim}
	Let $f\colon V_1\times ..\times V_n \to \mathcal{A}$ be a social choice function with convex domains. Then, $cc_{IC}(f)\le poly(n,cc(f),|\mathcal{A}|)$.
\end{claim}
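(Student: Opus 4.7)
The plan is to chain together three results already established in the paper. Given a social choice function $f\colon V_1\times\cdots\times V_n\to\mathcal{A}$ over convex domains, I would first invoke Lemma \ref{convextranslatesscalablelemma} to obtain a translation $(f^t,V^t)$ of $(f,V)$ such that each $V_i^t$ is scalable. Since implementability is preserved under the correspondence given by Lemma \ref{translationpreservespaymentslemma} (any truthful payment scheme for $f$ pulls back to a truthful payment scheme for $f^t$ and vice versa), $f^t$ is itself implementable.

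Next, I would apply Claim \ref{scalabledeterministicthm} to $f^t$, yielding
\[
cc_{IC}(f^t)\le poly(n,cc(f^t),|\mathcal{A}|).
\]
Finally, invoking Lemma \ref{translationsamecommcomplexitylemma}, which states that translations have identical communication complexities (both for $cc(\cdot)$ and for $cc_{IC}(\cdot)$), I substitute $cc(f^t)=cc(f)$ and $cc_{IC}(f^t)=cc_{IC}(f)$ to conclude
\[
cc_{IC}(f)=cc_{IC}(f^t)\le poly(n,cc(f^t),|\mathcal{A}|)=poly(n,cc(f),|\mathcal{A}|),
\]
which is the desired bound.

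There is no real obstacle here; all the substantive work has been carried out in the preceding lemmas. The argument for Theorem \ref{convexalgorithmthm} in the truthful-in-expectation setting follows exactly the same template (convex $\to$ translated scalable $\to$ apply the scalable bound $\to$ pull back via translation invariance), so the deterministic case just reuses that same reduction with Claim \ref{scalabledeterministicthm} playing the role of Theorem \ref{scalableinexpectationthm}. The only minor point worth stating explicitly is that the translation $(f^t,V^t)$ provided by Lemma \ref{convextranslatesscalablelemma} has the same alternative set $\mathcal{A}$, so the factor $|\mathcal{A}|$ in the polynomial bound is unchanged when passing between $f$ and $f^t$.
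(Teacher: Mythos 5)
Your proposal is correct and follows essentially the same route as the paper: translate the convex domain to a scalable one via Lemma \ref{convextranslatesscalablelemma}, apply Claim \ref{scalabledeterministicthm} to the translated function, and transfer the bound back using Lemma \ref{translationsamecommcomplexitylemma}. Your extra remarks (implementability is preserved under translation, and the alternative set $\mathcal{A}$ is unchanged) are harmless clarifications that the paper leaves implicit.
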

\begin{proof}
	By Lemma \ref{convextranslatesscalablelemma}, $(f,V)$ has a translation $(f^t,V^t)$ where the domains $V^t$ are scalable. 
	\begin{equation*}
	cc_{IC}(f)\underbrace{=}_\text{By Lemma \ref{translationsamecommcomplexitylemma}} cc_{IC}(f^t) \underbrace{\le}_\text{by Claim \ref{scalabledeterministicthm}} poly(n,|\mathcal{A}|,cc(f^t)) \underbrace{=}_\text{By Lemma \ref{translationsamecommcomplexitylemma}} poly(n,|\mathcal{A}|,cc(f))
	\end{equation*}
\end{proof}

\section{Hardness of Computing Payments in a Menu}\label{sec-hardness}

In most of the paper we have assumed that we are given an implementable social choice function $f$, an instance $(v_1,\ldots, v_n)$ and our goal was to compute the payment of each player. In this section we explore a very related but slightly different setting. In this setting, we are given the mechanism $M$ (which is stronger than having $f$), but now we are only given the valuations $v_{-i}$ of all players except player $i$. We ask the following fundamental questions: for a given alternative $a$, is there some $v_i$ such that $f(v_i,v_{-i})=a$? If so, can we efficiently find it and compute its price?

Formally, let $M=(f,P):V_1\times \ldots\times V_n\to \mathcal{A}\times \mathbb{R}^n$ be a mechanism. 
	For every player $i$ and alternative $a\in \mathcal{A}$, we define $reach_i^a:V_{-i}\to \{0,1\}$ as a  function that indicates whether $a$ is reachable from $v_{-i}$ or not. We also define a search variant of $reach$,  $reachWitness_i^a:V_{-i}\to V_i\cup \{\perp\}$. $reachWitness_i^a(v_{-i})$ outputs a type $v_i$ such that $f(v_i,v_{-i})=a$, or $\perp$ if $a$ is not reachable from $v_{-i}$. The third function we define is $price_i^a:V_{-i} \to \mathbb{R} \cup \{\infty\}$ that returns the price of alternative $a$ presented by $v_{-i}$. Throughout this section, we assume that $M$ assigns non-reachable alternatives a $\infty$ price. 
The deterministic communication complexity of reachability of a mechanisms, denoted with ${cc(reach(f))}$, is defined to be ${\max_{i,a} \{cc(reach_i^a)\}}$. Similarly, we write $cc(reachWitness(f))$ for
${\max_{i,a}   \{cc(reachWitness_i^a)\}}$ and $cc(price(P))$ for $\max_{i,a} \{cc(price_i^a)\}$.

\subsection{Reachability is Hard} \label{reachabilityhardsection}
\begin{proposition} \label{reachahardprop}
	There exists a mechanism $M=(f_k,P)$ with single parameter domain, three players and a constant number of alternatives such that $cc(reach(f_k))\ge \exp(cc(M))$. 
\end{proposition}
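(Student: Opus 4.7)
The plan is to exhibit a three-player single-parameter mechanism $M=(f_k,P)$ on a constant number of alternatives in which computing $(f_k,P)$ is cheap while deciding reachability of a particular alternative is exponentially harder. I would take Bob and Charlie to be indifferent single-parameter players ($w_B \equiv w_C \equiv 1$), so that $f_k$ is automatically monotone in their types and their normalized payments vanish; their types range over (singly-)exponentially large domains such as $R_B=R_C=\{0,1\}^{2^k}$ interpreted as bit strings. Alice would have a small constant-size single-parameter domain $R_A$ (say $\{0,1,2\}$) and the alternatives would be a constant-size set $\mathcal{A}=\{a_0,a_1,a_2\}$ with $w_A(a_j)=j$, so by Proposition~\ref{myersonlemmainterval}-style thresholds the whole mechanism is pinned down by two thresholds $t_1\leq t_2$ that are functions of $(v_B,v_C)$.

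The function $f_k$ would be constructed so that, for each fixed $r_A$, the outcome and Alice's Myerson price depend only on a constant number of bits of $v_B$ and $v_C$ at positions dictated by $r_A$; Alice's reported type then serves as an index into Bob's and Charlie's long inputs. This would give an $O(k)$-bit protocol: Alice broadcasts $r_A$ in $O(1)$ bits, each of Bob and Charlie sends $O(1)$ bits determined by $r_A$, and both the chosen alternative and Alice's price (which, by a careful choice of thresholds, lies in a constant-size set and so is describable in $O(1)$ bits) are computed. Hence $cc(M)=O(k)$. On the other hand, I would arrange that for one specific alternative $a^{\ast}$, the reach predicate $\text{reach}_A^{a^{\ast}}(v_B,v_C)$ coincides with a classical two-party hard problem, such as $\textsc{Disjointness}$ or $\textsc{Equality}$ on $2^k$-bit strings; by the standard $\Omega(2^k)$ deterministic communication lower bounds for these problems, we would get $cc(\text{reach}(f_k))=\Omega(2^k)=\exp(cc(M))$.

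The main obstacle is that single-parameter monotonicity forces the set of Alice-types leading to $a^{\ast}$ to be an up-interval determined by the thresholds $(t_1,t_2)$, and these thresholds ordinarily drive both the menu prices for Alice and the reach predicate; naively, the $O(k)$ bits revealed by $M$ would already encode enough information to determine reach, collapsing the gap. To overcome this I would engineer $t_1,t_2$ to take values in a small constant-size set (so that the payments and outputs are short and $cc(M)=O(k)$), while arranging that the underlying condition on the full strings $(v_B,v_C)$ selecting among these few threshold configurations is precisely the hard problem, verified by a fooling-set argument adapted from the lower bounds of Section~\ref{lowerboundsec}. The three-player structure is essential here: Alice's declared type acts as a pivot that, with constantly many bit-queries to Bob and Charlie, locally disambiguates exactly one of the finitely many possible output profiles, whereas without Alice's input Bob and Charlie alone must resolve the full global question, which requires $\exp(k)$ communication.
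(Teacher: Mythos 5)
Your construction cannot work as stated because you fix Alice's type space to have constant size. Reachability for Alice is a problem whose input is only $(v_B,v_C)$, and Bob and Charlie can always decide it by enumerating Alice's constantly many types and simulating the protocol of $M$ for each hypothetical $r_A$: in a deterministic protocol Alice's messages are a function of $r_A$ and the public transcript, both of which Bob and Charlie know, so each simulation costs them at most $cc(M)$ bits of actual communication. Hence $cc(reach(f_k))\le |R_A|\cdot cc(M)=O(cc(M))$ for \emph{any} mechanism with $|R_A|=3$, and no exponential gap is possible. The same obstruction is visible inside your own description: you require that for each fixed $r_A$ the outcome depends on $O(1)$ bits of $(v_B,v_C)$ at positions dictated by $r_A$; with only three values of $r_A$, the predicate ``$\exists r_A$ with $f_k(r_A,v_B,v_C)=a^\ast$'' depends on $O(1)$ bits of $(v_B,v_C)$ in total, so it cannot coincide with Disjointness or Equality on $2^k$-bit strings. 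Your proposed remedy (thresholds taking constantly many values, selected by a hard global condition on the long strings) does not escape this: whatever selects among the finitely many configurations is recoverable from $|R_A|$ simulations of $M$, so if $cc(M)=O(k)$ then reach is $O(k)$ as well.

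The missing idea---and the route the paper takes---is to let the hidden player's domain grow with $k$ while keeping the mechanism's communication only logarithmic in that size. In the paper Alice's type is an index $r_A\in\{0,\ldots,k-1\}$, Bob and Charlie hold $k$-bit strings, and $f_k$ looks only at the $r_A$'th bits, with four alternatives recording the four possible bit patterns; each player's outcome-relevant value $w_i(f_k(\cdot))$ is independent of his own report, so the all-zero payment implements $f_k$ and $cc(M)=O(\log k)$. Reachability of the ``both bits on'' alternative for Alice is then an existential statement over all $k$ indices, i.e.\ exactly non-disjointness of $r_B$ and $r_C$, so $cc(reach(f_k))\ge \Omega(k)=\exp(cc(M))$. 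The exponential gap is thus $\log k$ versus $k$ rather than $k$ versus $2^k$, and making the relevant part of each valuation depend only on the \emph{other} players' reports eliminates the monotonicity-threshold and payment-leakage complications you were trying to engineer around.
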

\begin{proof}
	Fix some integer $k$.
The mechanism $M=(f_k,P)$ is as follows. There are three players: Alice, Bob and Charlie. The set of alternatives is $\mathcal{A}=\{b,c,bc,n\}$. The domains of Alice and Bob are single parameter: $r_A\in \{0,\dots,k-1\}$ and $r_B, r_C\in\{0,1\}^k$. We define a function that converts the types of Bob and Charlie to integers, $int:\{0,1\}^k\to \mathbb{N}$.
Alice's value for all alternatives is identical and equal to her private information $r_A$, i.e., $w_A\equiv 1$. Thus, we use $v_A$ and $r_A$ interchangeably.  Bob's value for alternative $a$ is $int(r_B)\cdot w_B(a)$, where $w_B(b)=w_B(bc)=1$ and $w_B(c)=w_B(n)=0$. Similarly, 
Charlie's value for alternative $a$ is $int(r_C)\cdot w_C(a)$, where $w_C(c)=w_C(bc)=1$ and $w_C(b)=w_C(n)=0$.
Denote with $r_B(j)$ and $r_C(j)$ the $j$'th bits of the binary string associated with $v_B$ and with $v_C$.   
The social choice function $f_k:V_A\times V_B\times V_C\to \mathcal{A}$ is as follows. For every $v_A\in V_A, v_B\in V_B, v_C\in V_C$: 
 	\begin{equation*}
 f_k(v_A,v_B,v_C)=
 \begin{cases}
  bc   &\qquad 
 r_{v_B}(v_A) = r_{v_C}(v_A ) = 1 \\
 b &\qquad r_{v_B}(v_A) =0, r_{v_C}(v_A)= 1 \\
c &\qquad r_{v_B}(v_A) =1, r_{v_C}(v_A)= 0 \\
n &\qquad \text{otherwise.}
 \end{cases} 
 \end{equation*} 

In words, Bob gets an alternative he values more if Charlie's $v_A$'th bit is on, and vice versa. Therefore, for every player $i$, $w_i$ is either constant or depends entirely on the types of other players. Thus, a constant zero payment implements $f$. Observe that $cc(M)=\mathcal{O}(\log k)$, because $M$ is computed by a trivial protocol where Alice sends $v_A$ using $\log k$ bits, and then Bob and Charlie send a single bit each:  $r_B(v_A)$ and $r_C(v_A)$. 

In order to show a gap, consider $reach^{bc}_{Alice}:V_B\times V_C\to \{0,1\}$. Observe that alternative $bc$ is reachable for Alice if and only if $r_{v_B}$ and $r_{v_C}$ are not disjoint. Hence,  $reach^{bc}_{Alice}$ can be trivially reduced to $DISJ_k$, which requires $k$ bits of communication \cite{ccbook}. Thus, $cc(reach(f_k))\ge k$.  
\end{proof}	
\subsection{Hardness of Reachability Determines Hardness of Computing Payments} \label{priceisreachsection}
\cite{dobzinski2016computational} shows that every truthful mechanism $M=(f,P)$ for domain with additive valuations satisfies that $cc(price(P))\le cc(M)$. We add that for player decisive functions, $cc(price(P))\le poly(n,cc(M))$. Furthermore, we demonstrate that the hardness of $price(\cdot)$ stems from the hardness of $reach(\cdot)$ (Proposition \ref{priceMreachMupperbound}). In contrast, we prove that we cannot derive a similar upper bound for $ReachWitness(\cdot)$ (Proposition \ref{gap2thm}).   
\begin{definition}(Player Decisiveness)
	We say that a function $f\colon V_1\times\cdots\times V_n\to \mathcal{A}$ is \emph{player decisive} if for every player $i$, alternative $a\in \mathcal{A}$ and type $v_{-i}\in V_{-i}$, there exists $v_i\in V_i$ such that $f(v_i,v_{-i})=a$. 
\end{definition}	
\begin{proposition} \label{priceMreachMupperbound}
	Let $M=(f,P)$ be a truthful mechanism. Then:
	\begin{equation*}
	cc(price(P))\le cc(reach(f))+poly(n,cc(M))
	\end{equation*} 
	As a corollary, if $f$ is player decisive, $cc(price(P))\le poly(n,cc(M))$.
\end{proposition}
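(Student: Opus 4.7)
The plan is to split the computation into a reachability query followed by a compact non-deterministic argument for the price itself. First, the players invoke the protocol for $reach_i^a(v_{-i})$ at a cost of $cc(reach(f))$ bits; if $a$ is unreachable from $v_{-i}$ they output $\infty$, which is the correct price by our assumption on $M$. Otherwise, it remains to show that $price_i^a(v_{-i})$ can be recovered using $poly(n,cc(M))$ additional bits.

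The key structural observation is that the publicly known protocol $\pi^M$ of the mechanism, with at most $2^{cc(M)}$ leaves each labelled by an alternative and a payment vector, already encodes every possible price. Because $M$ is truthful, the taxation principle implies that, once $a$ is reachable from $v_{-i}$, the value $price_i^a(v_{-i})$ is uniquely determined and equals the payment $p_L$ of every leaf $L$ satisfying $v_{-i}\in L_{-i}$ and whose alternative label is $a$. Under the promise that $a$ is reachable, this yields a short non-deterministic protocol for the decision problem of whether $price_i^a(v_{-i})\le p$: the prover sends the index of a witnessing leaf $L$ using $cc(M)$ bits, each player $j\neq i$ verifies locally that $v_j\in L_j$, and everyone publicly checks that the alternative of $L$ is $a$ and that $p_L\le p$. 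Uniqueness of the price under the promise guarantees that a symmetric $O(cc(M))$-bit co-non-deterministic witness exists for the complementary inequality, so invoking the multi-party non-deterministic-to-deterministic reduction for promise problems (Theorem \ref{multipartynondtodreductionpromise}) converts the decision problem into a deterministic protocol of cost $poly(n,cc(M))$.

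To move from decision to function, note that only the at most $2^{cc(M)}$ payments labelled in $\pi^M$ are candidate prices, and this list is public. A binary search over its sorted version uses $O(cc(M))$ decision queries for a total deterministic cost of $poly(n,cc(M))$, and combining with the initial reachability check yields $cc(price(P))\le cc(reach(f))+poly(n,cc(M))$. The corollary is then immediate: if $f$ is player decisive then every alternative is reachable from every $v_{-i}$, so each $reach_i^a$ is the constant-one function, $cc(reach(f))=0$, and $cc(price(P))\le poly(n,cc(M))$.

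The main subtlety I expect is the careful alignment of the promise structure. When $v_{-i}$ makes $a$ unreachable, neither the non-deterministic nor the co-non-deterministic protocol is sound, so it is essential that the initial $reach$ query validates the promise before Theorem \ref{multipartynondtodreductionpromise} is applied; this interplay, along with the fact that the set of candidate prices is publicly inherited from $\pi^M$ (so that binary search is well defined without further communication), is the part that requires the most care in the formal write-up.
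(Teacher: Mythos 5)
Your proposal is correct and follows essentially the paper's proof: run the $reach_i^a$ protocol first, then use a leaf of the mechanism's protocol (verified locally by each player $j\neq i$) as a non-deterministic witness under the promise that $a$ is reachable, and invoke Theorem \ref{multipartynondtodreductionpromise}. The only difference is that you detour through boolean threshold queries (with a matching co-non-deterministic witness) plus a binary search over the public list of leaf payments, whereas the paper applies Theorem \ref{multipartynondtodreductionpromise} directly to the multi-valued promise function $price_i^a$ -- the leaf witness already pins down the exact price and there are at most $2^{cc(M)}$ possible outputs, so $N(price_i^a)\le cc(M)$ and the extra decision/binary-search layer is unnecessary (though harmless).
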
  
\begin{proof}
	Fix a player $i$ and an alternative $a\in \mathcal{A}$. 
	Fix a type $v_{-i}\in V_{-i}$.  
	First, the players execute the protocol of $reach_i^a(v_{-i})$. If alternative $a$ is not reachable from $v_{-i}$, they output $\infty$. Otherwise, they know that it is reachable.  
	By Theorem \ref{multipartynondtodreductionpromise}, \emph{deterministically} computing the price of a reachable alternative takes $poly(n,N(price_i^a))$, where $N(price_i^a)$ is the number of bits needed for the prover to present a witness for $price_i^a(v_{-i})=P_i$, with the guarantee that no witness is ever approved if $a$ is not reachable from $v_{-i}$. 

Observe the following naive proof system. Denote the protocol of the \emph{mechanism} $M$ with $\pi$, and observe that is has at most $2^{cc(M)}$ leaves. A prover sends to the players a name of a leaf $L$ such that $L$ is labelled with alternative $a$ and $v_{-i}\in L_{-i}$. It takes $CC(M)$ bits. Each player $j\neq i$ accepts if $v_j\in L_j$. Notice that such a leaf exists if and only if $a$ is reachable from $v_{-i}$, so all witnesses for inputs that violate the promise are indeed rejected. Since the leaf belongs to a protocol of the \emph{mechanism}, it is labelled with the payment of player $i$ $P_i$, so the leaf serves as a proof that  $price_i^a(v_{-i})=P_i$. Thus, $N(price_i^a)\le cc(M)$.  
	Therefore,  for every player $i$ and alternative $a\in \mathcal{A}$:
	\begin{align}
	cc(price_i^a) 
	&\le   cc(reach_i^a) + poly(n,N(price_i^a)) \nonumber & \text{(by Theorem \ref{multipartynondtodreductionpromise})} \\
	&\le 
	cc(reach_i^a) + poly(n,cc(M)) \nonumber \\
	&\implies cc(price(P))\le cc(reach(f)) + poly(n,cc(M)) \nonumber
	\end{align}
	An observant reader might wonder whether this protocol can be used to derive an upper bound for $reachAndWitness(f)$ as well. Note that an $a-$leaf that $v_{-i}$ belongs to contains information not only about price of player $i$, but also about types in  $V_i$ that reach $a$ given $v_{-i}$.  However, this line of thought fails. The reason for it is that Theorem \ref{multipartynondtodreductionpromise} applies to functions, not to relations. Given $v_{-i}$, the mechanism presents a \emph{single} price for alternative $a$, so $price_i^a(v_{-i})$ is a function. In contrast, $v_{-i}$ has numerous types in $V_i$ that reach $a$. 
\end{proof}
Is it possible that $cc(reachWitness(f)) \allowbreak \le \allowbreak cc(reach(f))+poly(n,cc(M))$, but our proof technique fails to show it?  We answer this question in the negative. 
\begin{proposition} \label{gap2thm}
	Let $k$ be a number divisible by $3$. Then, there exists a mechanism $M=(f_k,P)$ with single parameter domain, three players and a constant number of alternatives such that $cc(reach(f_k))=cc(M)=\mathcal{O}(\log k)$, but  $cc(reachWitness(f_k))=\Omega(k)$.
\end{proposition}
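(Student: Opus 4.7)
The plan is to construct, for every $k$ divisible by $3$, a mechanism $M=(f_k,P)$ with three single-parameter players (Alice, Bob, Charlie) and a constant number of alternatives, mirroring the high-level style of Proposition~\ref{reachahardprop} but swapping where the hardness lives, so that it migrates from $reach$ to $reachWitness$. First, I would let Alice's type space $V_A$ be exponentially large ($|V_A|=2^{\Theta(k)}$), so that even writing down a witness can carry $\Omega(k)$ bits, while Bob's and Charlie's types are chosen so the local check of $f_k$ stays cheap. All three players would be indifferent (constant $w_i$), so the payment rule can be identically zero and $cc(M)=cc(f_k)$.

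Second, I would define $f_k$ so that evaluating $f_k(v_A,v_B,v_C)$ requires only $O(\log k)$ bits: Alice broadcasts a short selector drawn from her type (namely $O(\log k)$ bits pointing into one of the three $k/3$-blocks and identifying a position within it), Bob and Charlie reply with a single-bit local check at that location, and the alternative is read off from those responses. By arranging that for the target alternative $a_1$ there is always at least one valid Alice-type regardless of $(v_B,v_C)$, I obtain $cc(reach(f_k))=0=O(\log k)$ and therefore $cc(M)=cc(f_k)=O(\log k)$.

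Third, the main task is to show $cc(reachWitness(f_k))=\Omega(k)$. The strategy is to arrange $f_k$ so that no single $r_A\in V_A$ is a valid witness for more than a $2^{-\Omega(k)}$ fraction of $(v_B,v_C)$ inputs. Any deterministic protocol for $reachWitness_A^{a_1}$ partitions $V_B\times V_C$ into combinatorial rectangles, each labelled by the output $v_A$, and in every rectangle the label must be a simultaneous witness for all inputs in it. By making the valid-witness sets ``thin'' in this sense, I force $2^{\Omega(k)}$ distinct rectangles, yielding the $\Omega(k)$ lower bound. Concretely, I plan to bake a reduction from a hard $\Omega(k)$ communication problem (for example, recovering $r_B\oplus r_C$ or the search version of set disjointness on $k/3$-weight subsets) into the construction, so that any protocol realizing a valid witness function implicitly solves that hard problem.

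The main obstacle is the tension between the three requirements: $f_k$ must admit an $O(\log k)$ local check, must leave $a_1$ always reachable, yet must force witness sets thin enough to preclude small common witnesses across rectangles. I expect to resolve this exactly as the divisibility hypothesis suggests, by partitioning the $k$ coordinates into three equal blocks of size $k/3$ and using each player's block to play a complementary role---with the verifier reading only a short ``index$+$bit'' part of Alice's type while the validity condition forces the remaining $\Omega(k)$ bits of $r_A$ to encode information jointly determined by $v_B$ and $v_C$.
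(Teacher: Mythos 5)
There is a genuine gap in your lower-bound plan, and it is fatal to the construction as you describe it. You design $f_k$ so that the chosen alternative is read off from an $O(\log k)$-bit ``selector'' of Alice's type together with one bit from Bob and one from Charlie. But then whether a given $v_A$ is a valid witness for $a_1$ at $(v_B,v_C)$ depends only on its selector, of which there are only $\mathrm{poly}(k)$ many. Since every $(v_B,v_C)$ has at least one valid selector (you need this for reachability to be easy), an averaging argument shows some single selector --- and hence a single Alice type carrying it --- is valid for at least a $1/\mathrm{poly}(k)$ fraction of inputs. Your rectangle-counting argument therefore can never force more than $\mathrm{poly}(k)$ leaves, i.e.\ it caps out at an $O(\log k)$ bound, not $\Omega(k)$; the ``thin witness sets'' you need are incompatible with your own $O(\log k)$ evaluation of $f_k$. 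The alternative hard problems you float do not repair this: forcing the unique witness to be $r_B\oplus r_C$ makes $f_k$ itself an equality-type function with $cc(f_k)=\Omega(k)$, killing $cc(M)=O(\log k)$; and plain search-disjointness is only known to be hard via its decision version, which is exactly the promise you must destroy (the sets must always intersect) in order to keep $reach$ easy --- the same tension that makes Proposition~\ref{reachahardprop} a hardness result for $reach$ rather than for $reachWitness$.

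The paper resolves this tension with an ingredient your proposal is missing: a \emph{promise search relation over a small witness domain} that is nevertheless provably hard. Bob holds a graph on $k$ vertices containing a matching of size $m=k/3$, Charlie holds one with no such matching (this, not a three-way split of coordinates among the players, is where divisibility by $3$ enters), and the third player's type is merely an edge index plus three special values. Every alternative is always reachable (the special values cover three alternatives, and the matching/no-matching promise guarantees an edge in $x\setminus y$, covering the fourth), $f_k$ and the zero-payment mechanism cost $O(\log k)$, yet producing a witness for alternative $c$ is exactly the relation $MATCH$ of Definition~\ref{matchdef}, whose $\Omega(k)$ lower bound is the Raz--Wigderson theorem (Theorem~\ref{razmonotonematchthm}). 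Note that here the witness is only $O(\log k)$ bits long --- the hardness is in \emph{finding} it, established by a deep known theorem rather than by counting. Without identifying such a hard promise relation, your construction cannot be completed along the lines you sketch.
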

We use the following relation in the proof of Proposition \ref{gap2thm}. 
\begin{definition} \label{matchdef}
Set $k=3m$. Let $X$ be the set of all graphs of $k$ vertices with a matching of size $m$, and let $Y$ be the set of all graphs of $k$ vertices with no matching of size $m$.  Define the relation $MATCH \subseteq X\times Y \times \{1,\ldots,\binom{k}{2}\}$ as following. $(x,y,i)\in MATCH$ if edge $i$ is in the graph $x$, but not in the graph $y$. 
\end{definition} 
Recall that a protocol computes a relation if for every $(x,y)\in X\times Y$, the protocol outputs $i$ such that $(x,y,i)\in R$.  
\begin{theorem}[\cite{ccbook,raz1992monotone}] \label{razmonotonematchthm}
	$cc(MATCH) = \Omega(k)$. 
\end{theorem}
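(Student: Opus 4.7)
\textbf{Proof proposal for Theorem \ref{razmonotonematchthm}.}

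The plan is to realize MATCH as the Karchmer--Wigderson game associated with a monotone Boolean function, and then invoke the linear depth lower bound of Raz and Wigderson on monotone circuits for matching. Concretely, define the monotone Boolean function $\textsc{Match}_k \colon \{0,1\}^{\binom{k}{2}} \to \{0,1\}$ that takes the edge-indicator vector of a graph on $k$ vertices and outputs $1$ iff the graph contains a matching of size $m = k/3$. Since adding edges can only create (not destroy) matchings of a fixed size, $\textsc{Match}_k$ is monotone.

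Next, I would invoke the Karchmer--Wigderson theorem (see \cite{ccbook}), which states that for any monotone Boolean function $g$, the deterministic communication complexity of the relation $\mathrm{KW}_g \subseteq g^{-1}(1) \times g^{-1}(0) \times [N]$, where $(x,y,i) \in \mathrm{KW}_g$ iff $x_i = 1$ and $y_i = 0$, equals the minimum depth of a monotone Boolean circuit computing $g$. Inspecting Definition \ref{matchdef}, one sees that $X = \textsc{Match}_k^{-1}(1)$, $Y = \textsc{Match}_k^{-1}(0)$, and the required output ``edge $i$ in $x$ but not in $y$'' is exactly the KW output coordinate $i$ with $x_i = 1$ and $y_i = 0$. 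Hence $cc(\text{MATCH})$ equals the minimum monotone depth of any circuit for $\textsc{Match}_k$.

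It therefore suffices to exhibit a linear lower bound on the monotone circuit depth of $\textsc{Match}_k$. This is precisely the Raz--Wigderson theorem \cite{raz1992monotone}, which proves that every monotone circuit computing the matching function on $k$ vertices has depth $\Omega(k)$. Their argument proceeds by a carefully designed random-restriction / bottleneck argument, maintaining a large ``live'' subfamily of hard instances (roughly: graphs that are matchings versus graphs that are coverings of one side by too few edges) at every level of the circuit, and showing that each layer can only reduce the number of remaining live instances by a bounded factor, so $\Omega(k)$ layers are necessary. Combining this depth lower bound with the KW equivalence from the previous paragraph yields $cc(\text{MATCH}) = \Omega(k)$.

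The genuine difficulty lies entirely in the Raz--Wigderson depth lower bound, which is the cited black box; once that is granted, the only remaining step is the one-line verification that the relation in Definition \ref{matchdef} coincides with the KW game for $\textsc{Match}_k$, which is immediate from unpacking definitions. No additional reductions, randomization, or probabilistic arguments are needed beyond what is already contained in the two cited results.
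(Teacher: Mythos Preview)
Your proposal is correct and is exactly the standard route through the Karchmer--Wigderson correspondence and the Raz--Wigderson monotone depth lower bound, which is precisely what the citations \cite{ccbook,raz1992monotone} point to. The paper itself does not supply a proof of this theorem; it is stated as a cited black box and used off the shelf in the proof of Proposition~\ref{gap2thm}, so there is nothing further to compare.
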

	\begin{proof}[Proof of Proposition \ref{gap2thm}]
		The mechanism is as follows. There are three players: Bob, Charlie and Diane with single-parameter domains. The set of alternatives is $\mathcal{A}=\{b,c,bc,n\}$. 
		 Bob's type space is $X$ and Charlie's is $Y$, where $X$ and $Y$ are defined in Definition \ref{matchdef}. 
		 We define arbitrary mappings  $int_X:X\to \mathbb{N}$ and $int_Y:Y\to \mathbb{N}$ that convert  Bob's and Charlie's graph types to some integers. Bob's value for an alternative $a$ is $int_X(x)\cdot w_B(a)$, where $w_B(b)=w_B(bc)=1$ and $w_B(c)=w_B(n)=0$. Similarly, 
		 Charlie's value for $a$ is $int_Y(y)\cdot w_C(a)$, where  $w_C(c)=w_C(bc)=1$ and $w_C(b)=w_C(n)=0$.
		 We define $V_X$ and $V_Y$ to be the valuations sets of Bob and Charlie, respectively. 
		 Diane's valuations set is $\{1,\ldots,\binom{k}{2}+3\}$, i.e.
		 her 
		 value for all alternatives is identical and equal to her private information ($w_D\equiv 1$). Recall that $w_B,w_C,w_D:\mathcal{A}\to \mathbb{R}$ are public. 
		  For every valuation $v_x\in V_x$ and every $1\le j \le \binom{k}{2}$, let $v_{x}(j)$ be $1$ if the $j$'th edge exists in the graph $x$ associated with the valuation $v_x$, and  define $v_{y}(j)$ similarly. $f_k\colon V_X\times V_Y \times \{1,\ldots,\binom{k}{2}\} \to \mathcal{A}$ is as follows. For every $v_x\in V_X$, $v_y\in V_Y$ and $e\in \{1,\ldots,\binom{k}{2}+3\}$:
		  \begin{equation*}
		 f_k(v_x,v_y,e)=
		 \begin{cases}
		 bc   &\qquad e=\binom{k}{2}+1 
		 \quad
		 \text{or} \quad (e \le \binom{k}{2}\hspace{0.4em} \text{and} \hspace{0.4em} v_x(e) = v_y(e) = 1 ) 
		 \\
		 n &\qquad e=\binom{k}{2}+2 
		 \quad
		 \text{or} \quad (e \le \binom{k}{2}\hspace{0.4em} \text{and} \hspace{0.4em} v_x(e) = v_y(e) = 0 )  \\
		 b &\qquad 
		 e=\binom{k}{2}+3
		 \quad
		 \text{or} \quad (e \le \binom{k}{2}\hspace{0.4em} \text{and} \hspace{0.4em} v_x(e)=0 \hspace{0.4em} \text{and} \hspace{0.4em} v_y(e) = 1 ) 
		 \\
		 		 c &\qquad v_x(e)=1 \hspace{0.4em} \text{and} \hspace{0.4em} v_y(e) = 0 
		 \end{cases} 
		 \end{equation*}  
		 Intuitively, Diane has three \textquote{special} valuations that make the alternative chosen be one of  $\{bc,n,b\}$, regardless of the other players' valuations. If she does not hold one of them, then if $e\in x$ (i.e. the edge $e$ belongs in  Bob's graph), an alternative that benefits Charlie is chosen, and vice versa. 
		 
		 Due to those special types,  $\{bc,n,b\}$ are reachable from $(v_x,v_y)$ for all $(v_x,v_y)\in V_x\times V_y$. Alternative $c$ is reachable from all $(v_x,v_y)$ as well: by construction $x$ has an $m-$matching, and $y$ does not, so there necessarily exists an edge that belongs in $x$ but not in $y$. Therefore, $cc(reach_{Diane}^a)=0$ for all $a\in \mathcal{A}$. Consider $reach_{Bob}^a:V_Y\times \{1,\ldots,\binom{k}{2}\}\to \{0,1\}$ for some $a\in \mathcal{A}$. Diane can send her type $e$ using $\mathcal{O}(\log k)$ bits, and then Charlie sends back whether $y$ contains the edge $e$ or not. If it contains $e$, only alternatives $\{b,bc\}$ are reachable, otherwise only alternatives $\{c,n\}$ are reachable. Therefore, $cc(reach_{Bob}^a)=\mathcal{O}(\log k)$ for every $a\in \mathcal{A}$. Similarly, $cc(reach_{Charlie}^a)=\mathcal{O}(\log k)$ for every $a\in \mathcal{A}$. Therefore, $CC(reach(f_k))=\mathcal{O}(\log k)$.  
		 
		 Observe that none of the players benefit from misreporting, because for all of them $w_i$ is either constant or depends entirely on the valuations of the other players.  
		Thus, a mechanism $M$ with no payments at all implements $f$.  
Observe that $M$ has the following trivial protocol: Diane sends her valuation $e$, and Bob and Charlie reply with $v_x(e)$ and $v_y(e)$. Thus, all players know the alternatives chosen and that everyone pays zero. Therefore, $cc(M)=\mathcal{O}(\log k)$. 
		 
		 We now want to show that the computation of $reachWitness(f_k)$ is hard. Consider $reachWitness_{Diane}^c$, the function that outputs a valuation in $\{1,\ldots,\binom{k}{2}\}$ that reaches alternative $c$, given  $(v_x,v_y)\in V_X\times V_Y$. As we explained earlier, such a valuation always exists. Observe that $f(v_x,v_y,e)=c$ if and only if the edge $e$ belongs in $x$ and not in $y$. Therefore, computing a valuation $e$ that reaches alternative $c$ is equivalent to computing the relation $MATCH$. Thus, The lower bound on $MATCH$ (Theorem \ref{razmonotonematchthm}) implies   $cc(reachWitness(f_k))=\Omega(k)$. 
	\end{proof}
\subsection{Exponential Upper Bounds on Reachability and on  Payment Computation} \label{reachandkeysection}
We now show that for every mechanism $M=(f,P)$, $cc(reach(f))$, $cc(reachWitness(f))$ and $cc(price(P))$ are at most $\exp(cc(M))$. Thus, the lower bounds presented in Propositions \ref{reachahardprop} and \ref{gap2thm} are tight.
\begin{proposition}
	Let $M=(f,P)$ be a truthful mechanism. Then, $cc(reachWitness(f))\le \exp(cc(M))$. As a corollary, $cc(reach(f))\le \exp(cc(M))$ and $cc(price(P))\le \exp(cc(M))$.  
\end{proposition}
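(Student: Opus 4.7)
The plan is to exploit the protocol tree of $M$ directly. Since $M$ has communication cost $cc(M)$, its protocol $\pi^M$ partitions $V_1\times\cdots\times V_n$ into at most $2^{cc(M)}$ combinatorial rectangles (leaves), each $L=L_1\times\cdots\times L_n$ labeled by the alternative and the tuple of payments that $M$ outputs on any $(v_1,\ldots,v_n)\in L$. I will let the players agree, publicly and without communication, on a canonical representative $v_j^L\in L_j$ for every leaf $L$ and every player $j$ with $L_j\neq\varnothing$; this is purely a function of the public description of $\pi^M$.

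The protocol for $reachWitness_i^a$ will then proceed as follows. Fix player $i$ and alternative $a$, and suppose the $n-1$ remaining players hold $v_{-i}$. They will enumerate the at most $2^{cc(M)}$ leaves of $\pi^M$ labeled with $a$. For each such leaf $L$, each player $j\neq i$ broadcasts a single bit indicating whether $v_j\in L_j$. If every one of them says ``yes'' for some $L$, then by the mixing property of combinatorial rectangles $(v_i^L,v_{-i})\in L$, so $f(v_i^L,v_{-i})=a$, and the players output the publicly agreed witness $v_i^L$: no extra communication is needed since the identity of the succeeding leaf is already common knowledge. If no leaf passes, $a$ is not reachable from $v_{-i}$ and the players output $\perp$. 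The total cost is $O(n\cdot 2^{cc(M)})=\exp(cc(M))$.

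The two corollaries will fall out for free. For $cc(reach(f))$ the same scan outputs $1$ iff at least one $a$-labeled leaf survives the compatibility check, and $0$ otherwise. For $cc(price(P))$, each leaf also carries the tuple of payments that $M$ charges there; by the taxation principle the price of alternative $a$ given $v_{-i}$ depends only on $a$ and $v_{-i}$, so once a compatible $a$-labeled leaf $L$ is located, the $i$-th coordinate of its payment label is exactly $price_i^a(v_{-i})$. If no such leaf exists the players output the convention $\infty$.

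The only place that needs a moment's thought is confirming that ``outputting'' the witness $v_i^L$ is free: this is fine because the map $L\mapsto v_i^L$ is fixed publicly before the protocol begins, and the identity of the succeeding leaf is determined by the broadcast bits, so every player can reconstruct $v_i^L$ locally. No single step is truly hard; the whole difficulty is absorbed into the brute-force enumeration over leaves, which is what yields the $\exp(cc(M))$ bound, naturally paralleling the Fadel--Segal $2^{cc(f)}-1$ upper bound on $cc_{IC}(f)$.
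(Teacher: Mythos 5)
Your proposal is correct and follows essentially the same route as the paper: a brute-force pass over the (at most $2^{cc(M)}$) leaves of the protocol for $M$, with publicly pre-agreed representatives $v_i^L\in L_i$ and the rectangle structure of leaves guaranteeing that a compatible $a$-labeled leaf yields a valid witness, and that the absence of one certifies unreachability. The only differences are cosmetic: you spend $O(n\cdot 2^{cc(M)})$ bits on per-leaf membership checks where the paper has the players reveal all $\le 2^{cc(M)}$ decision-node bits once (skipping player $i$'s nodes) and then check leaves locally, and for $price$ you read the payment off the leaf label via the taxation principle instead of re-running $M$ on the recovered witness; both give the claimed $\exp(cc(M))$ bound.
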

The proof is motivated by Fadel and Segal \cite{fs}. It relies on the inner structure of protocols, so we define them as follows. The definition is based on \cite{babaioff2015truthful}.
\begin{definition} (Communication Protocol)
A protocol $\pi$ is defined for $n$ players, type spaces $V=V_1\times \cdots \times V_n$ and an outcome space $O$. It is a binary tree with decision nodes $U$, and leaves $\mathcal{L}$. Each decision node $u\in U$ is associated with a player $i$ and with a decision function $s_{u}:V_i\to \{0,1\}$. The computation of a protocol is as follows. If $s_u(v_i)=1$, the players proceed to its right child, and otherwise to its left child. We say that a protocol computes a function $f$ if for all $v\in V$, $\pi(v)=f(v)$. 
\end{definition}
\begin{proof}
	First, we see that the proposition implies its corollaries. By definition, $cc(reach(f))\le CC(reachWitness(f))$, so an upper bound for $cc(reachWitness(f))$ applies to $cc(reach(f))$ as well. 
	Regarding $price(P)$, 
	 observe that for each player $i$ and alternatives $a\in \mathcal{A}$,
	$price_i^a$ can be computed by one execution of the protocol of $reachWitness_i^a$ and another execution of the protocol of $M$.
	The first execution yields a type $v_i$ such that $f(v,v_{-i})=a$ or a price of $\infty$, and the second execution simulates $M(v,v_{-i})$ and outputs the price. Thus:
	\begin{equation*}
	 cc(price(P))\le cc(reachWitness(f))+cc(M)\le \exp(cc(M))
	\end{equation*}
	
	Now, we prove that $cc(reachWitness(f))\le \exp(cc(M))$. Denote the optimal protocol of $M$ with $\pi$. 
	The players agree in advance on a profile $(v_1,\ldots,v_n)\in L= L_1\times \cdots\times L_n$ for every leaf in $\pi$. Fix a player $i$ and an alternative $a$. The computation of $reachWitness_i^a(v_{-i})$ is as follows. For every inner node  $u\in U$, the player $j$ who is in charge of it sends $s_u(v_j)$. If an inner node belongs to player $i$, the players skip it. There are at most $2^{cc(M)}$ decision nodes, so it takes $2^{cc(M)}$ bits.     
	
	The next phase of the protocol requires no communication, so all players perform it  simultaneously. Observe that each leaf is associated with a path in the binary tree, which might include decision nodes of player $i$.  Each player   
goes over each $L\in \mathcal{L}$ labelled with alternative $a$ in a an agreed upon order, and checks whether the bits of decision nodes sent by all the players are compatible with the path of the leaf. If there is a compatible $a$-leaf $L$, the players output his agreed upon type $v_i\in L_i$. If there is no such leaf, it means that $a$ is not reachable from $v_{-i}$, and the players output $\perp$. The total communication is $2^{cc(M)}$ bits.    
\end{proof}

\bibliographystyle{alpha}
\bibliography{sample}

\appendix

\section{Deterministic and Nondeterministic Communication Complexity} \label{promisesection}

The proofs of the unique payments algorithm (Theorem \ref{uniquepaymentsalgthm})  and the upper bound on $price(M)$ (Proposition \ref{priceMreachMupperbound})  rely on the polynomial relations between deterministic and non-deterministic communication complexity of promise problems. We hereby prove this assertion. 
We stress that despite the fact that we focus on promise problems, all the results of this section can be easily extended to non-promise problems. Usually, given a deterministic function $f\colon X=X_1\times \cdots \times X_n \to O$ and a protocol $\pi$ we say that  $\pi$ computes $f$ if $f(x)=\pi(x)$ for all $x\in X$. For promise problems, we allow $\pi$ to err sometimes. Formally:  
\begin{definition} \label{promisedefinition} (Promise Problems)
Let $f:X=X_{1}\times \cdots\times X_{n}\to O \cup \{\ast \}$ be a function. We call $x$ such that $f(x)\in O$ \emph{promise inputs}, and denote them with $P\subseteq X$. We say that a \emph{deterministic} protocol $\pi$ computes $f$ if $\pi(x)=f(x)$ for all $x\in P$. We say that a non-deterministic protocol computes $f$ if it presents a valid witness for $x\in P$, whilst all witnesses are rejected if $x$ violates the promise. We denote with $N(f)$ the non-deterministic communication complexity of $f$.  
\end{definition}
Recall that a non-deterministic protocol is equivalent to a cover of inputs. Thus, we require that the cover associated with a non-deterministic protocol for $f$ satisfies that every $o$-monochromatic rectangle in it does not contain $\ast$-inputs, i.e. inputs which violate the promise. 
Our goal is to prove that:
\begin{theorem}
	\label{multipartynondtodreductionpromise}
	Let $f\colon X_{1}\times \cdots\times X_{n}\to O\cup \{\ast\}$ be a promise function.   Then, there exists a deterministic protocol that computes $f$ using $poly(n,N(f))$ bits. 
\end{theorem}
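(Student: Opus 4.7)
The plan is to reduce the general promise problem with range $O$ to a collection of Boolean promise problems, and to solve each Boolean case via a multi-party adaptation of the classical Aho--Ullman--Yannakakis (AUY) cover argument.

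First, I would open up the non-deterministic protocol into its cover: a family of at most $2^{N(f)}$ combinatorial rectangles covering the promise set $P\subseteq X$, each labeled with some $o\in O$, such that no rectangle contains any $\ast$-input and rectangles with distinct labels share no promise input. In particular, at most $2^{N(f)}$ distinct outputs can actually appear as labels, so any realized output can be encoded by a bit-string of length at most $N(f)$. For each bit position $b\in \{1,\dots,\lceil\log|O'|\rceil\}$ of this encoding, define the Boolean promise problem $g_b\colon X\to\{0,1,\ast\}$ whose value on promise inputs is the $b$-th bit of the encoded $f(x)$. A non-deterministic witness for $g_b(x)=1$ (resp.\ $g_b(x)=0$) is simply a witness for $f$ whose rectangle's label has a $1$ (resp.\ $0$) in bit $b$; hence $N(g_b),N(\neg g_b)\le N(f)$. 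Computing $f$ then reduces to computing each $g_b$, with total deterministic cost at most $N(f)\cdot\max_b D(g_b)$.

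The heart of the argument is then showing that any Boolean promise problem $g$ with $\max(N(g),N(\neg g))\le N$ admits a deterministic protocol of cost $\mathrm{poly}(n,N)$. I would do this by a two-party-to-multi-party simulation. Group player $1$ as ``Alice'' and players $2,\dots,n$ as ``Bob''; the two-party non-deterministic complexity is at most $N$ on both sides, and the classical AUY theorem carries over to promise problems essentially unchanged, since its key structural fact --- that any $R^0\in\mathcal{C}^0$ and $R^1\in\mathcal{C}^1$ are disjoint --- remains true (any common point would have to be an $\ast$-input, which no rectangle in either cover contains). This yields a two-party deterministic protocol of cost $O(N^2)$. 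To simulate it in the NIH model, I observe that every message in AUY is of one of two types: (i) an index into one of the covers (cost $N$ bits), picked by one party as a ``probe'' of a locally alive rectangle, or (ii) a single bit testing whether that party's coordinates lie in the probed rectangle, i.e.\ a conjunction of local predicates $[x_i\in R_i]$. Type-(i) messages are simulated by a designated Bob-player (say player $2$) who locally picks a probe using only her own input; type-(ii) messages are simulated by having the remaining $n-2$ Bob-players each announce their local bit and taking the AND. Each two-party bit thus costs $O(n)$ NIH bits, giving $D(g)=O(n\cdot N^2)$. Combining with the bitwise reduction yields $D(f)\le N(f)\cdot O(n\cdot N(f)^2)=\mathrm{poly}(n,N(f))$.

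The main obstacle is verifying that the simulated protocol still makes AUY-style progress: in genuine two-party AUY, Bob's probe is guaranteed to be alive for his entire column $(x_2,\dots,x_n)$, whereas my designated-player probe is only alive for $x_2$. I would handle this by treating the global-aliveness test as part of the verification step: if the $n-2$ confirmations succeed, the probe was globally alive and the round was valid; if any player reports $x_i\notin R_i$, then the probed rectangle is globally dead and can be removed from the cover before the next round, which is exactly the elimination step the AUY analysis relies on. A routine potential-function argument then shows that within $O(N^2)$ rounds either a rectangle containing $x$ is found (revealing the output label) or one of the covers is exhausted (revealing the complementary answer), which completes the proof.
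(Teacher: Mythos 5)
Your first step (bit-decomposing the output and reducing to $\lceil\log|O'|\rceil\le N(f)$ Boolean promise problems $g_b$ with $N^0(g_b),N^1(g_b)\le N(f)$) is fine, and is a legitimate alternative to the paper's route, which instead reduces computation to the verifier problems $V_f^o$ and solves them all at once via a known $\mathcal{O}(n^2\log^2 l)$ protocol for unique-disjointness. The genuine gap is in your Boolean multi-party step. Merging players $2,\ldots,n$ into ``Bob'' and simulating two-party AUY with a designated prober does not work as you describe: the two-party protocol requires Bob's probe $R\in\mathcal{C}^1$ to satisfy $x_j\in R_j$ for \emph{all} $j\ge 2$, and no single player can locate such a rectangle. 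Your fix --- verify global aliveness with $n-2$ bits and, on failure, delete the probed $1$-rectangle --- is not ``exactly the elimination step the AUY analysis relies on.'' AUY's progress is that the set of \emph{live $0$-rectangles} shrinks by a constant factor every round; deleting one dead $1$-rectangle from a cover of size up to $2^{N^1(f)}$ shrinks nothing relevant, so player $2$ can propose locally-alive, low-intersection probes that fail verification for exponentially many rounds, and no potential function gives your claimed $O(N^2)$ bound. The same issue breaks the ``no suitable probe $\Rightarrow$ output $0$'' step: certifying that \emph{no} rectangle of $\mathcal{C}^1$ contains the joint $(x_2,\ldots,x_n)$ and intersects few live rectangles cannot be done by player $2$ alone without potentially verifying exponentially many candidates.

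The repair is not to verify global aliveness but to drop the requirement altogether, which is what the paper's protocol does: all $n$ players (not a designated one) take turns, player $i$ may probe any $R\in\mathcal{C}^1$ that is merely $i$-alive ($x_i\in R_i$) and $i$-intersects at most a $\frac{n-1}{n}$ fraction of the live $0$-rectangles, and the update keeps only the live $0$-rectangles that $i$-intersect the probe. This is sound because the true $0$-rectangle (when $g(x)=0$) always $i$-intersects any probe made by player $i$, and it is complete because when $g(x)=1$ the rectangle $R^\ast\in\mathcal{C}^1$ containing $x$ is disjoint from every live $0$-rectangle, so by the simple combinatorial fact that $n$ subsets with empty intersection include one of size at most $\frac{n-1}{n}$ of the whole, \emph{some} player can legally probe $R^\ast$. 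Each round costs $N^1(f)+O(n)$ bits and shrinks the live set by a factor $\frac{n-1}{n}$, giving $\mathcal{O}(n^2\cdot N^0(f)\cdot N^1(f))$ overall; plugged into your bitwise reduction this yields the theorem, but as written your proposal does not establish the Boolean case.
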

To this end, we define for each $o\in O$ a verifier function which is a promise problem as well:    $V_{f}^{o}:X\to\{0,1,\ast\}$. $V_f^o$ has the same promise inputs as the original $f$. It outputs $1$ if $f(x)=o$, $0$ if $f(x)\in O$ but differs from $o$ and $\ast$ if $x\notin P$. We define the most costly verifier with $V_f$, i.e. $V_{f}=\argmax_{o\in O}cc(V_{f}^{o})$.  
We prove the theorem by combining the two following propositions: 
\begin{proposition}\label{computationisverification}
	Let $f:X_{1}\times\cdots\times X_{n}\to O\cup \{\ast\}$ be a promise problem. Then,  $cc(f)\le \mathcal{O}\Big(n^{2}\big(\log^{2} |O|+ \log|O|\cdot cc(V_{f})+ cc^{2}(V_{f})\big)\Big)$.
\end{proposition}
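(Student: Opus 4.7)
The plan is to prove the bound in two steps: first extract a small combinatorial cover of the promise inputs from the deterministic verifiers, then convert this cover into a deterministic protocol for $f$ via a multi-party analog of the Aho--Ullman--Yannakakis reduction.

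\textbf{Step 1 (cover extraction).} For each $o\in O$ fix an optimal deterministic protocol for $V_f^o$. Its accepting leaves form at most $2^{cc(V_f^o)}\le 2^{cc(V_f)}$ combinatorial rectangles whose union equals $f^{-1}(o)\cap P$; by the promise property of the verifier no input in $X\setminus P$ can lie in an accepting leaf. Unioning over all $o\in O$ gives a family $\mathcal R$ of at most $C:=|O|\cdot 2^{cc(V_f)}$ rectangles, each labelled with an element of $O$, such that every promise input belongs to at least one rectangle and the label of any rectangle containing $x\in P$ equals $f(x)$. In particular $\log C\le \log|O|+cc(V_f)$.

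\textbf{Step 2 (cover $\Rightarrow$ deterministic protocol).} I would now implement an Aho--Ullman--Yannakakis style halving protocol on $\mathcal R$. The players maintain a set $\mathcal L\subseteq \mathcal R$ of still-possible rectangles, initially $\mathcal L=\mathcal R$. In each phase a designated player broadcasts, using $O(\log C)$ bits, the name of a rectangle $R=R_1\times\cdots\times R_n\in\mathcal L$ chosen so that its indicator ``halves'' the live set (in the appropriate multi-party sense), and every player $i$ replies with the single bit $[x_i\in R_i]$. If all $n$ bits are $1$ then $x\in R$ and the protocol outputs the label of $R$, which must equal $f(x)$ by Step 1; otherwise the answers are used to discard the rectangles in $\mathcal L$ that are now inconsistent with the revealed coordinates. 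With a careful choice of $R$, iterating over the $n$ coordinates, $|\mathcal L|$ shrinks by a constant factor every $O(n)$ rounds, so $O(n\log C)$ rounds isolate a single rectangle. Each round costs $O(n+\log C)$ bits; substituting $\log C\le \log|O|+cc(V_f)$ and expanding yields total communication $O\big(n^2(\log^2|O|+\log|O|\cdot cc(V_f)+cc^2(V_f))\big)$, as claimed.

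The main obstacle is the multi-party halving argument in Step 2: in the two-party AUY reduction a single halving row/column in the live sub-matrix already cuts the rectangle cover by a constant factor, but in the $n$-party NIH model a single player's bit answer only constrains one coordinate of $x$, so it is not immediate that $|\mathcal L|$ shrinks geometrically after a constant number of rounds. The key technical lemma I would need to establish is a multi-party analog of the halving rectangle: for any cover $\mathcal L$ of a live product set by rectangles, there exist a rectangle $R\in \mathcal L$ and a player $i$ such that either ``$x_i\in R_i$'' or its negation is consistent with a constant fraction fewer rectangles of $\mathcal L$. Proving this lemma, and tightly accounting for the $n$ and $\log C$ contributions, is exactly where the $n^2$ overhead in the proposition comes from.
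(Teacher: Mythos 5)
Your Step 1 is essentially the paper's encoding: take the optimal deterministic protocol for each verifier $V_f^o$, and use its accepting leaves as a labelled family of at most $|O|\cdot 2^{cc(V_f)}$ rectangles. But your Step 2 contains the genuine gap: the whole communication bound hangs on the ``multi-party halving lemma'' that you only conjecture, and as you state it (``for any cover $\mathcal L$ of a live product set by rectangles\ldots'') it is not true and is not what the two-party AUY argument gives you. The AUY halving step crucially uses that rectangles of different labels share no input, so that a rectangle containing $x$ cannot $i$-intersect a wrongly-labelled live rectangle for every player $i$ (this is exactly how the paper's Lemma \ref{combilemma} is applied in the Boolean case, Proposition \ref{proptwoplayerpromise}, where the covers come from \emph{nondeterministic} protocols that are required to exclude $\ast$-inputs). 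Your cover does not have this property: a deterministic verifier is only required to be correct on promise inputs, so accepting leaves of $\pi^o$ and $\pi^{o'}$ may fully overlap on $\ast$-inputs; your parenthetical claim in Step 1 that no input of $X\setminus P$ lies in an accepting leaf is unjustified, and it is precisely the assumption your halving step would need. So the ``main obstacle'' you flag is not a technicality to be tightened later; it is the content of the proposition, and it is missing.

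The property your cover \emph{does} have, and which you never use, is uniqueness: since each $\pi^o$ is deterministic and correct on promise inputs, a promise input $x$ lies in exactly one leaf of each $\pi^o$, and that leaf is accepting only for $o=f(x)$; hence exactly one rectangle in the entire family contains $x$. This is how the paper proceeds: each player forms the bit vector of ``my coordinate lies in this accepting leaf'' over all $l\le |O|\cdot 2^{cc(V_f)}$ accepting leaves, the promise guarantees a \emph{unique} intersecting bit, and the players run the known unique-disjointness search protocol of Theorem \ref{zdisj} with cost $\mathcal{O}(n^2\log^2 l)$, which expands to the claimed $\mathcal{O}\big(n^2(\log^2|O|+\log|O|\cdot cc(V_f)+cc^2(V_f))\big)$. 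In other words, the $n^2$ overhead comes from that black-box protocol (itself nontrivial, from \cite{dobzinski2016computational,karchmer1994non}), not from per-coordinate bookkeeping in a halving argument. To repair your write-up you would either have to invoke such a unique-intersection protocol, or actually prove a multi-party progress lemma that works despite wrongly-labelled rectangles overlapping on non-promise inputs; neither is done in the proposal.
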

 \begin{proposition} \label{proptwoplayerpromise} 
	Let $f\colon X=X_{1}\times \cdots\times X_{n}\to \{0,1,\ast\}$ be a \emph{boolean} promise function.   Then, there exists a deterministic protocol that computes $f$ using $\mathcal O(n^2\cdot N^0(f)\cdot N^1(f))$ bits. Thus, $cc(f)\le poly(n,N(f))$. 
\end{proposition}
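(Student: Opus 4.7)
The strategy is to reduce to the classical two-party Aho--Ullman--Yannakakis bound on deterministic versus non-deterministic communication, paying an $O(n)$ overhead per bit to simulate the ``coalition'' in the multi-party setting.

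First, I will extract from the nondeterministic protocols witnessing $N^0(f)$ and $N^1(f)$ a cover $\mathcal{C}_0$ of $f^{-1}(0)$ and a cover $\mathcal{C}_1$ of $f^{-1}(1)$ by $n$-dimensional combinatorial boxes $R^1\times\cdots\times R^n$, with $|\mathcal{C}_b|\le 2^{N^b(f)}$. Definition \ref{promisedefinition} implies that rectangles in $\mathcal{C}_b$ avoid $f^{-1}(1-b)$, so for any $R_0\in\mathcal{C}_0$ and $R_1\in\mathcal{C}_1$ there is some coordinate $i$ on which $R_0^i\cap R_1^i=\emptyset$. This global disjointness between the two families is the only structural property the rest of the argument uses, so the presence of promise-violating inputs (which may be covered by neither family, or by rectangles of just one family) does not interfere.

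Next, I will group the players into player~$1$ playing the role of Alice and the coalition $\{2,\ldots,n\}$ playing the role of Bob, viewed as a single party whose input is $(x_2,\ldots,x_n)\in X_2\times\cdots\times X_n$. Under this grouping each $n$-dimensional box becomes a $2$-dimensional rectangle $R^1\times(R^2\times\cdots\times R^n)$, so the two-party promise function $\tilde f\colon X_1\times(X_2\times\cdots\times X_n)\to\{0,1,\ast\}$ satisfies $N^b(\tilde f)\le N^b(f)$. I will then invoke (a promise version of) the two-party Aho--Ullman--Yannakakis theorem, $D(\tilde f)=O(N^0(\tilde f)\cdot N^1(\tilde f))$. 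Its standard inductive proof maintains live candidate families $\mathcal{S}_0\subseteq\mathcal{C}_0$ and $\mathcal{S}_1\subseteq\mathcal{C}_1$, and in each round either halves one of them or eliminates a rectangle via a single Alice/Bob bit; correctness on promise inputs follows from the fallback rule ``output $1-b$ when $\mathcal{S}_b=\emptyset$'', which is valid because any promise input of value $b$ is covered by a rectangle of $\mathcal{C}_b$ that remains in $\mathcal{S}_b$ throughout any transcript consistent with it.

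Finally, I will simulate the two-party protocol in the $n$-party setting. Each bit sent by Alice is still one bit of player~$1$. Each bit sent by Bob in the AUY protocol is of the form ``is $(x_2,\ldots,x_n)\in R^2\times\cdots\times R^n$?'', which decomposes as $\bigwedge_{i\ge 2}(x_i\in R^i)$ and is evaluated by having each player $i\ge 2$ broadcast a single bit. This is an $O(n)$-bit simulation per two-party bit, with another $O(n)$ factor accounting for round-robin identification of the speaking player in each round; the total is $O(n^2\cdot N^0(f)\cdot N^1(f))$ bits. The hard part will be the careful formulation and verification of the promise version of two-party AUY: the classical proof is usually stated under the assumption that $\mathcal{C}_0$ and $\mathcal{C}_1$ together cover the entire input space, whereas here they need only cover the promise inputs and a rectangle may contain promise-violating inputs with no consistent label. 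I expect this to go through by observing that both the halving argument and the fallback rule rely only on (i) the global disjointness of $\mathcal{C}_0$- and $\mathcal{C}_1$-rectangles and (ii) the existence, for each promise input, of a covering rectangle in the correct family---both of which survive unchanged in the promise setting.
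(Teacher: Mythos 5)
There is a genuine gap in the multi-party simulation step, i.e., exactly in the part you flagged as routine. In the two-party Aho--Ullman--Yannakakis protocol, Bob's messages are \emph{not} answers to membership queries of the form ``is $(x_2,\ldots,x_n)\in R^2\times\cdots\times R^n$?'' for a publicly predetermined rectangle $R$; they are the names of $1$-rectangles that Bob \emph{selects adaptively}, namely some rectangle whose Bob-side contains his input and which column-intersects at most half of the current live set of $0$-rectangles. The halving condition is publicly checkable, but among the (possibly $2^{N^1(f)}$ many) rectangles satisfying it, the coalition $\{2,\ldots,n\}$ must find one whose components contain $x_2,\ldots,x_n$ simultaneously. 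Each player $i\ge 2$ only knows the set of candidates consistent with $x_i$, so this is a distributed search for an element of $\bigcap_{i\ge 2}S_i$ over a universe of size up to $2^{N^1(f)}$; deciding whether such an element exists (let alone naming it) can require communication exponential in $N^1(f)$, not $O(n)$ bits per two-party bit as your accounting assumes. A single bit of ``Bob's next message'' is a function of the entire tuple $(x_2,\ldots,x_n)$ through this selection and does not decompose as $\bigwedge_{i\ge 2}(x_i\in R^i)$. You also cannot drop Bob's moves and let only player $1$ announce rectangles: for a $1$-input whose covering rectangle row-intersects more than half of the live set, Alice alone cannot make progress and the protocol would wrongly output $0$.

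The paper sidesteps this by running the halving argument natively in the $n$-party model rather than via a two-party grouping: in each round, each player $i$ looks for a $1$-rectangle $R$ such that only $x_i\in R_i$ (a purely local condition) and such that $R$ $i$-intersects at most an $\frac{n-1}{n}$ fraction of the live $0$-rectangles; a simple pigeonhole lemma shows that on a $1$-input the covering $1$-rectangle certifies that some player can always move, while on a $0$-input the $0$-rectangle containing the input survives every update. This weakens the per-round shrinkage from $\frac12$ to $\frac{n-1}{n}$, giving $O(n\cdot N^0(f))$ rounds of cost $O(n+N^1(f))$ each, which is where the $O(n^2\cdot N^0(f)\cdot N^1(f))$ bound comes from. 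Your treatment of the promise aspect (rectangles avoid $\ast$-inputs, hence $0$- and $1$-rectangles are disjoint, and the fallback rules remain sound) is correct and matches the paper; the missing idea is replacing the two-party grouping by this per-player, per-coordinate halving step.
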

\begin{proof}[Proof of Theorem \ref{multipartynondtodreductionpromise}]
%
By Proposition  \ref{computationisverification}, we get that:
	\begin{equation*}
	cc(f)\le \mathcal{O}\Big(n^{2}\big(\log^{2} |O|+ \log|O|\cdot cc(V_{f})+ cc^{2}(V_{f})\big)\Big)
	\end{equation*}
	$N(f)\ge \log |O|$, so we only need to upper bound $cc(V_f)$. By Proposition \ref{proptwoplayerpromise}, $cc(V_f) \le poly(n, N(V_f))$. 
	Clearly, $N(V_f)\le N(f)$ because a non-deterministic protocol for $f$ can be used to compute every $V_f^{o}$ because $f$ and $V_f^{o}$  share the same promise $P\subseteq X$. Therefore:
	\begin{align*}
	cc(f) &\le \mathcal{O}\Big(n^{2}\big(\log^{2} |O|+ \log|O|\cdot cc(V_{f})+ cc^{2}(V_{f})\big)\Big) \\ 
	&\le poly (n, N(f), cc(V_f)) \\ 
	&\le poly(n, N(f), N(V_f)) & \text{(by Proposition \ref{proptwoplayerpromise})} \\
	&\le poly(n, N(f))  
	\end{align*}
\end{proof}
\subsection{Reduction of Computation to Verification}
The proof is by a reduction to $unique\text{-}disjointness$. $unique\text{-}disjointness$ is a search problem, where each player holds $l$ bits, with the promise that \emph{at most} one of them is intersecting.  A bit is said to be  \emph{intersecting} if it is turned on for all players. A protocol solves the problem if it returns the index of the intersecting bit if it exists, and $\perp$ otherwise.\footnote{The definition of $unique\text{-}disjointness$ is similar to the one in \cite{dobzinski2016computational}, except that we redefine it as a search problem instead of a decision problem.} 
\begin{theorem} \label{zdisj}\cite{dobzinski2016computational,karchmer1994non}
	There is a protocol with communication complexity $\mathcal{O}(n^{2}\cdot \log^{2} l)$ which solves $unique\text{-}disjointness$.
\end{theorem}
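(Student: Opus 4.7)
The natural approach is a binary-search protocol on the universe $[l]$. I would maintain an active interval $I \subseteq [l]$, initialized to $[l]$, with the invariant that the unique intersecting index -- if it exists -- lies in $I$. In each of $O(\log l)$ rounds, the players halve $I$ by jointly determining which of its two halves contains the intersection (or that neither does). After $O(\log l)$ halvings, $|I|\le 1$, and a single verification round in which each of the $n$ players broadcasts the relevant bit ($O(n)$ bits total) decides whether the surviving candidate is truly in $\bigcap_p S_p$ and produces the final answer.

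The crux is the halving subroutine: given $I = L\cupdot R$ under the promise $|\bigcap_p S_p \cap I|\le 1$, decide whether the intersection is in $L$, in $R$, or is empty, using $O(n^2 \log l)$ bits. This is precisely the decision version of $n$-player unique-disjointness on the universe $I$. I would implement it by an inner iterated nomination: each player, in a round-robin fashion, sends the smallest index of their set restricted to a dynamically shrinking subinterval of $L$ (respectively $R$), and the other $n-1$ players check this nominated index against their own sets using $O(n)$ bits. Whenever a nominated index $j$ fails, some player reveals that $j\notin S_p$, and the uniqueness promise together with the ordering of nominations lets us permanently shrink the ``alive'' subinterval for a player by at least one index beyond $j$. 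A potential-function argument shows that $O(n\log l)$ nominations suffice to either produce a successful candidate (which then passes the final $O(n)$-bit verification) or exhaust all possibilities, for a total of $O(n^2\log l)$ bits per halving.

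Combining the two levels gives $O(\log l)$ outer halvings, each of cost $O(n^2 \log l)$, for a total of $O(n^2 \log^2 l)$ bits, as required.

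\textbf{Main obstacle.} The principal difficulty lies in making the halving subroutine genuinely logarithmic in $l$. A naive ``send the indicator vector'' costs $\Theta(l)$, while recursing the outer binary search inside each halving would compound to $\Theta(\log^3 l)$. The remedy is to use the uniqueness promise aggressively: every failed nomination must eliminate a nontrivial prefix of some player's surviving indices, rather than just the singleton $\{j\}$. Proving this elimination property rigorously -- tracking, for each player $p$, a ``still alive'' subinterval $I_p\subseteq I$ and bounding the total number of nominations before either success or global exhaustion -- is the technical heart of the argument, and it mirrors the classical $O(\log^2 m)$ two-party unique-disjointness protocol whose multi-player extension is cited here as \cite{karchmer1994non}.
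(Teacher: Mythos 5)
The paper does not actually prove Theorem~\ref{zdisj}: it is imported as a black box from \cite{dobzinski2016computational,karchmer1994non} and only used inside the proof of Proposition~\ref{computationisverification}, so there is no internal proof to compare against; your proposal must therefore stand on its own, and it has a genuine gap at what you yourself call its heart, the halving subroutine. The outer binary search is fine, but it only reduces the search version to the decision version on a sub-universe, which, as you note, is the same problem; everything rests on solving that decision problem in $\mathcal{O}(n^{2}\log l)$ bits by iterated nomination, and the elimination property you invoke does not deliver this. When a nominated index $j$ (the smallest element of the nominating player's set inside the current subinterval) is rejected, all that can be permanently discarded is the prefix of the subinterval up to $j$, and that prefix may contain only a single relevant index. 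Take $n=2$ with $S_1$ the odd indices and $S_2$ the even indices of $L$ (disjoint, so the promise holds): every failed nomination advances the alive subinterval by exactly one position, so the subroutine makes $\Theta(|L|)$ nominations, each naming an index with $\Theta(\log l)$ bits, for $\Theta(l\log l)$ communication rather than $\mathcal{O}(n^{2}\log l)$. No potential function can give the claimed $\mathcal{O}(n\log l)$ bound on nominations, because nothing in the protocol forces a failed nomination to kill more than one surviving candidate. (Even granting that bound, the accounting is off: $\mathcal{O}(n\log l)$ nominations at $\mathcal{O}(\log l + n)$ bits each is $\mathcal{O}(n\log^{2}l + n^{2}\log l)$ per halving, hence $\mathcal{O}(n\log^{3}l + n^{2}\log^{2}l)$ overall, exceeding the stated bound when $\log l \gg n$.)

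The missing idea is that the arguments behind the cited bound do not eliminate candidate intersection points one at a time; they shrink a global candidate set geometrically. One maintains the set of all still-possible intersection indices (equivalently, the $l$ candidate $1$-certificates, one per index), and in each round some player announces $\mathcal{O}(\log l + n)$ bits chosen so that a constant fraction (a $\frac{n-1}{n}$ fraction in the $n$-player setting) of the live candidates is discarded at once; after $\mathcal{O}(n\log l)$ rounds at most one candidate survives and is verified with $\mathcal{O}(n)$ bits, which is exactly where the product of two logarithms comes from. This is also the structure of the paper's own Proposition~\ref{proptwoplayerpromise}, where a live set of rectangles shrinks by a $\frac{n-1}{n}$ factor per announcement. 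To salvage your two-level plan you would have to redesign the halving subroutine so that each announcement provably removes a constant fraction of the surviving candidate indices, rather than peeling off a prefix whose useful content may be a single element.
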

\begin{proof}[Proof of Proposition \ref{computationisverification}]
	The reduction of $f$ to  $unique\text{-}disjointness$ is as follows. First, each player translates her input $x_{i}\in X_{i}$ to a new form $b_{i}$ as follows. Let $\pi^{o}$ be the most efficient communication protocol for $V_{f}^{o}$. Each player holds a bit in $b_{i}$ for every \emph{$1$}-leaf in $\pi^{o}$ for all $o \in O$. Denote the number of bits each player holds as $l$, and note that $l\le |O| \cdot  2^{cc(V_{f})}$, since there are $|O|$ protocols and  the number of leaves in each of them is at most the exponent of its height. Recall that a leaf is a combinatorial rectangle, i.e., it is of the form $\prod_{j=1}^{n}S_{j}$, where $S_{j}\subseteq X_{j}$. For player $i$, the bit that represents a 1-leaf in $b_{i}$ is turned on only if her input belongs in the leaf, i.e. $x_{i}\in S_{i}$. Hence, if the players hold $x\in X$, a bit of a 1-leaf in $\pi^{o}$ is intersecting if and only if the execution of $\pi^{o}(x)$ ends at this particular leaf. 
	
	Recall that we are promised to get $x$ such that $f(x)\in O$. Denote $f(x)=o^{\ast}$. It means that $V_{f}^{o^{\ast}}(x)=1$, so $\Pi^{o^{\ast}}(x)$ reaches a $1-$leaf. For all other outcomes $o\neq o^{\ast}$, by design the protocol $\Pi^{o}(x)$ reaches a 0-leaf because $x$ satisfies the promise, so none of the $1$-leaves of those protocols are intersecting. Hence, there is a \emph{single} $1$-leaf whose bit is intersecting, and it belongs to $\Pi^{o^{\ast}}$. Thus, the following protocol computes $f$: the players simulate the communication protocol for $unique\text{-}disjointness$ with $b_{1},\ldots,b_{n}$. Afterwards, they return the outcome whose protocol has $1-$leaf whose bit is intersecting. By Theorem \ref{zdisj}, it takes:
	\begin{equation*}
	\mathcal{O}(n^{2}\cdot \log^2 l )=\mathcal{O}(n^{2}\cdot \log^{2}(|O| \cdot  2^{cc(V_{f})}))=\mathcal{O}\Big(n^{2}\big(\log^{2} |O|+ \log|O|\cdot cc(V_{f})+ cc^{2}(V_{f})\big)\Big)
	\end{equation*}
\end{proof}	
\subsection{Deterministic and Nondeterministic Communication of Boolean Promise Problems}
We show that the proof in the two party model in \cite[Theorem~2.11]{ccbook} can be extended to a multi-player promise setting.  Dolev and Feder \cite{dolev1992determinism} provide a similar result, but they do not address the promise scenario.   
We begin by proving the following simple combinatorial fact.
	\begin{lemma} \label{combilemma}
		Let $R$ be a set of objects, and let $R_1,...,R_n\subseteq R$ be $n$ subsets such that $\bigcap_{j=1}^n R_j=\varnothing$. Then, there exists $j\in[n]$ such that $|R_j|\le \frac{n-1}{n}\cdot |R|$.  
	\end{lemma}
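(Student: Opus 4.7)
The plan is to argue by contradiction using a simple union bound on the complements. Suppose for contradiction that $|R_j| > \frac{n-1}{n}\cdot |R|$ for every $j \in [n]$. Passing to complements inside $R$, this is equivalent to saying $|R \setminus R_j| < \frac{1}{n}\cdot |R|$ for every $j$.

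Next I would apply De~Morgan's law together with subadditivity of cardinality for unions. Concretely,
\begin{equation*}
\Big| R \setminus \bigcap_{j=1}^{n} R_j \Big| \;=\; \Big| \bigcup_{j=1}^{n} (R \setminus R_j) \Big| \;\le\; \sum_{j=1}^{n} |R \setminus R_j| \;<\; n \cdot \frac{|R|}{n} \;=\; |R|.
\end{equation*}
This strict inequality implies $|\bigcap_{j=1}^{n} R_j| > 0$, i.e.\ the intersection is non-empty, which directly contradicts the hypothesis $\bigcap_{j=1}^{n} R_j = \varnothing$. Hence the assumption fails, and there must exist some $j \in [n]$ with $|R_j| \le \frac{n-1}{n}\cdot |R|$.

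There is no real obstacle here; the only thing to be mildly careful about is keeping the inequality strict in the right place so that the conclusion of the union bound genuinely forces the intersection to be non-empty (rather than merely bounding its complement by $|R|$). The contrapositive formulation makes this automatic.
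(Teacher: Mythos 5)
Your proof is correct and follows essentially the same route as the paper: assume every $|R_j| > \frac{n-1}{n}|R|$, pass to complements, and apply the union bound to conclude the intersection is non-empty, a contradiction. Your version is in fact slightly cleaner, since you track the strict inequality directly rather than introducing an auxiliary $\epsilon$ as the paper does.
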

\begin{proof}[Proof of Lemma]
	For every $j$, denote with $\overline{R}_j=\{x\in S, x\notin R_j\}$. If for all $j$, $|R_j|> \frac{n-1}{n}\cdot |R|$, it means that there exists $\epsilon>0$ such that 
	$|R_j|> \frac{n-1}{n}\cdot |R|+\epsilon$, so by definition 
	 $|\overline{R}_j|\le \frac{1}{n}\cdot |R|-\epsilon$. By the union bound, $|\bigcup_{j=1}^n \overline{R}_j| < |R|$, so there exists $x$ such that $x\in R_j$ for all $j\in [n]$, a contradiction. 
\end{proof}
\begin{proof}[Proof of Proposition \ref{proptwoplayerpromise}]
We use the same protocol as the one in \cite[Theorem~2.11]{ccbook}, with minor adjustments. Let $\pi^0$ and $\pi^1$ be  non-deterministic protocols for the promise problem $f$, whose communication complexities are $N^{0}(f)$ and $N^{1}(f)$, respectively. Denote with $C^{0}$ and $C^{1}$ the covers associated with $\pi^0$ and with $\pi^1$, and recall that by definition all the rectangles in them do not contain $\ast-$inputs.  
For every $i\in[n]$, we say that two rectangles $R^1=R_1^1\times\cdots \times R_n^1$ and $R^2=R_1^2\times\cdots \times R_n^2$ \emph{$i-$intersect} if $R_i^1\cap R_i^2 \neq \varnothing$.  

In each iteration $t$, the players holds a subset of $0$-rectangles from $C^0(f)$. 
We call this list $live(t)$, and initialize it to be $C^0(f)$. The protocol is as follows:
in each iteration $t$, if $Live(t)=\varnothing$, the players output $1$. Otherwise, every player $i$ in his turn  
checks whether there exists a $1-$rectangle $R^t=R^t_1\times \cdots \times R^t_n$ in $C^0$ such that $R^t$ $i-$intersects at most $\frac{n-1}{n}$ of the rectangles in $live(t)$  and also $x_i\in R_i$. If a player found such a rectangle, he sends its name to the other players and they all update:
$live(t+1)\gets \{ \text{$R$ $i-$intersects $R^t$} \hspace{0.25em}| \hspace{0.25em} R\in live(t) \}$ and proceed to the next iteration. If none of the players found a rectangle which satisfies those conditions, the players output $0$. 

For correctness, denote a rectangle that $x$ belongs to with $R^\ast$.  If $x$ is a $0-$input, by construction it remains inside $live(t)$ in every iteration, so $live(t)$ never empties out and the players indeed output $0$. If $x$ is a $1-$input, we will show that in each iteration where $live(t)\neq \varnothing$, $R^\ast$ satisfies the conditions to be  for $R_t$ for at least one of the players. Fix an iteration $t$, and divide the set of $0-$rectangles $live(t)$ to $n$ subsets:
$live_1(t),\ldots,live_n(t)$ where $live_i(t)$ is the set of rectangles in $live(t)$ that $i-$intersect $R^\ast$. $R^\ast$ is a $1$-rectangle, so $\bigcap_{i=1}^n live_i(t)$ is necessarily empty, because if there exists a $0-$rectangle which $i-$intersects with $R^\ast$ for all $i\in [n]$, it means that they have a shared input, which is a contradiction.\footnote{This is why we do not allow monochromatic rectangles to contain promise inputs. If we had allowed them to contain promise inputs, a $0$-rectangle and $1$-rectangle could have been intersecting.} 
Since $\bigcap_{i=1}^n live_i(t)$ is empty, by Lemma \ref{combilemma}, there exists an index  $j$ such that $|live_j(t)|\le \frac{n-1}{n}\cdot  live(t)$, so player $j$ necessarily finds a rectangle.     

Regarding the communication complexity, observe that in each iteration at most $n+\log |C^1|$ bits are sent. Also, $live(t)$ decreases by a multiplicative factor of $\frac{n-1}{n}$ in each iteration, so there are at most $\mathcal{O}(n\cdot \log|C^0|)$ iterations. Thus, the total communication is $\mathcal{O}(n^2\cdot N^0(f)\cdot N^1(f))$. 
\end{proof}

\end{document}